\documentclass[12pt,letterpaper,reqno]{amsart}
\usepackage[english]{babel}
\usepackage{amssymb,graphicx,enumerate}
\usepackage[a4paper,top=2cm,bottom=2cm,left=1.5cm,right=1.5cm,marginparwidth=1.75cm]{geometry}
\linespread{1.2}

\newcommand{\assign}{:=}
\newcommand{\nin}{\not\in}
\newcommand{\nobracket}{}
\newcommand{\tmdummy}{$\mbox{}$}

\newcommand{\tmop}[1]{\ensuremath{\operatorname{#1}}}
\newcommand{\tmstrong}[1]{\textbf{#1}}
\newcommand{\tmtextit}[1]{\text{{\itshape{#1}}}}
\newenvironment{enumeratenumeric}{\begin{enumerate}[1.] }{\end{enumerate}}
\newenvironment{enumerateroman}{\begin{enumerate}[i.] }{\end{enumerate}}
\newcounter{nnacknowledgments}

{\theoremstyle{remark}\newtheorem{acknowledgments*}[nnacknowledgments]{Acknowledgments}}
{\theoremstyle{remark}\newtheorem{declarations*}[nnacknowledgments]{Declarations}}

{\theoremstyle{remark}\newtheorem{conflicts of interest*}[nnacknowledgments]{Conflicts of interest}}
\newcounter{nnconvention}

{\theoremstyle{remark}\newtheorem{conventions*}[nnconvention]{Conventions}}
\newtheorem{corollary}{Corollary}
\newtheorem{definition}{Definition}
{\theoremstyle{remark}\newtheorem{example}{Example}}
\newtheorem{lemma}{Lemma}
\newtheorem{proposition}{Proposition}
{\theoremstyle{remark}\newtheorem{remark}{Remark}}
\newtheorem{theorem}{Theorem}
\numberwithin{equation}{section}

\begin{document}

\title{Feynman Graph Integrals on $\mathbb{C}^d$}

\author{Minghao Wang}

\begin{abstract}
  We introduce a type of graph integrals which are holomorphic analogs of
  configuration space integrals. We prove their (ultraviolet) finiteness by
  considering a compactification of the moduli space of graphs with metrics,
  and study their failure to be holomorphic.
\end{abstract}

\maketitle

{\tableofcontents}

\section{Introduction}

Feynman graph integrals play a crucial role not only in physics but also have
numerous important applications in mathematics. One example is the
configuration space integrals of topological field theories developed by S.
Axelrod, I. Singer {\cite{Axelrod1991ChernSimonsPT,axelrod1994chern}}, and M.
Kontsevich {\cite{kontsevich1994feynman}}. These integrals have been shown to play a significant role in advancing the study of knot invariants {\cite{kontsevich1994feynman}},
smooth structures of fiber bundles {\cite{kontsevich1994feynman}}, operads
{\cite{kontsevich1999operads,getzler1994operads}}, deformation quantization
{\cite{kontsevich2003deformation}} and more. One important construction in the
study of configuration space integrals is the construction of compactified
configuration space {\cite{17c23791-d52f-3aa0-a5a9-5abc4d54669e,axelrod1994chern,kontsevich1994feynman}}. For example, the integrand of configuration space
integrals can be extended to a smooth differential form on compactified
configuration space. This extension proves the finiteness of configuration
space integrals.

In this paper, we concentrate on the Feynman graph integrals within holomorphic
field theories, which serve as the holomorphic analogs of configuration space
integrals. For a given graph $\Gamma$ with $n$ vertices and a differential form $\Phi$ with compact support on $(\mathbb{C}^d)^n$, the integrand of our integral is constructed as follows: a kernel differential form is associated with each edge, constructed from the Bochner-Martinelli kernel. The Bochner-Martinelli kernel represents the analytical part of the propagators arising from holomorphic field theories on $(\mathbb{C}^d)^n$. The integrand is then given by the product of all kernel differential forms and $\Phi$. For convenience, we denote the integral by $W_{\Gamma}(\Phi)$. See Section \ref{Feynman graph integrals} for
precise definitions.

Given the singularity of the integrand, an important question arises: Does
$W_{\Gamma} (\Phi)$ yield a finite value (or is convergent in Lebesgue integration theory or other integration theories)? One may try to extend the integrand
to the compactified configuration space. Unfortunately, it is impossible. See
Example \ref{counter example} for a simple counter example. In
{\cite{Li:2011mi}}, S. Li presented a proof of the finiteness in the case when the dimension $d
= 1$, and his proof has been further generalized for one loop graphs for
general $d$ (see
{\cite{Costello2015QuantizationOO,williams2020renormalization}}). In
{\cite{budzik2023feynman}}, K.Budzik, D.Gaiotto, J.Kulp, J.Wu, M.Yu gave a
strategy to prove the finiteness for Laman graphs. In this paper, we will prove it for all
graphs and all dimensions. Motivated by the proof in the topological case, we
realize $W_{\Gamma} (\Phi)$ as an integration of a smooth differential form
over the compactified moduli space of graphs with metrics, which we called
{\tmstrong{compactified Schwinger space}}:

\begin{theorem}[See Theorem \ref{extension theorem1}, Corollary
\ref{distribution theorem}]
  \begin{enumeratenumeric}
    \item $W_{\Gamma} (\Phi)$ can be realized as an integration of a smooth
    differential form over compactified Schwinger space. In particular,
    $W_{\Gamma} (\Phi)$ leads to a finite value.
    
    \item The map $W_{\Gamma} (-) : \Phi \rightarrow W_{\Gamma} (\Phi)$
    defines a distribution on $(\mathbb{C}^d)^n$.
  \end{enumeratenumeric}
\end{theorem}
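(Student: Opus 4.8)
The plan is to trade the pointwise singularities of the Bochner--Martinelli kernels for integrations over auxiliary Schwinger parameters, one per edge, and then to recognize the resulting integrand as a smooth form on a compactification of the parameter space. The starting point is the heat-kernel (Schwinger) representation of the Bochner--Martinelli kernel. Using
\[
\frac{1}{|z|^{2d}} = \frac{1}{(d-1)!}\int_0^\infty t^{d-1} e^{-t|z|^2}\, dt, \qquad \bar z_k\, e^{-t|z|^2} = -\frac{1}{t}\,\partial_{z_k} e^{-t|z|^2},
\]
the propagator form attached to an edge $e$ becomes the fibre integral over $t_e \in (0,\infty)$ of an explicit Gaussian form that is smooth on $\mathbb{C}^d \times (0,\infty)$. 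Taking the product over all edges $e \in E(\Gamma)$ and wedging with $\Phi$ exhibits $W_\Gamma(\Phi)$ as an integral over $(\mathbb{C}^d)^n \times (0,\infty)^{E}$ of a smooth form $\alpha_\Gamma$ with Gaussian decay in the configuration variables. The factor $(0,\infty)^E$, with $t_e$ read as the length of edge $e$, is precisely the moduli of metrics on $\Gamma$, i.e. the Schwinger space $\mathcal{S}_\Gamma$.

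Next I would fibre-integrate over the configuration variables: for fixed Schwinger parameters the Gaussian weight makes $\int_{(\mathbb{C}^d)^n} \alpha_\Gamma$ converge and produces a top-degree form $F_\Gamma$ on $\mathcal{S}_\Gamma=(0,\infty)^E$ depending linearly on $\Phi$. The ultraviolet region is the locus where a subset of the $t_e$ tends jointly to $+\infty$; by the dictionary above $t_e\to\infty$ corresponds to the collision of the endpoints of $e$, so these limits are indexed by the subgraphs of $\Gamma$ that get contracted. I would then build the compactified Schwinger space $\overline{\mathcal{S}}_\Gamma$ as a manifold with corners by iterated (real, spherical) blow-ups of the strata at infinity along the nested families of subgraphs, introducing for each subgraph $\gamma$ a radial coordinate $\rho_\gamma\to 0$ together with projective/angular coordinates. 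The boundary faces are labelled by nested (forest-like) collections of subgraphs, paralleling the Fulton--MacPherson/Axelrod--Singer picture but on the parameter side rather than on configuration space itself --- which, as Example \ref{counter example} shows, cannot carry a smooth extension.

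The crux is to prove that $F_\Gamma$, pulled back to $\overline{\mathcal{S}}_\Gamma$, extends to a \emph{smooth} form up to and including the boundary. In the blow-up coordinates adapted to a nested family, each radial variable $\rho_\gamma$ (with $\rho_\gamma\sim 1/\lambda$ as the parameters of $\gamma$ scale like $\lambda\to\infty$) appears with a definite exponent fed by three sources: the measure $\prod_e t_e^{d-1}\,dt_e$, the factors $t_e^{-1}\partial_{z_k}$ carried by the numerators $\bar z_k$ of the kernels, and the Jacobian of the blow-up together with the concentration scale $\lambda^{-1/2}$ of the relative coordinates in the Gaussian fibre integral. The holomorphic structure is essential here: the $\bar\partial$-type numerator of the Bochner--Martinelli kernel supplies exactly the extra negative powers of $t_e$ (equivalently the extra vanishing of the Gaussian derivatives on the contracted diagonal) needed so that the total power of each $\rho_\gamma$ is non-negative, giving a smooth rather than merely bounded extension. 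I expect this power-counting, carried out uniformly over all boundary faces and combined with the Gaussian control of the fibre integral, to be the main obstacle; it is precisely the step where the naive topological estimate fails and the precise form of the kernel must be used.

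Finiteness in Part 1 is then immediate: $W_\Gamma(\Phi)=\int_{\overline{\mathcal{S}}_\Gamma}F_\Gamma$ is the integral of a smooth form over a compact manifold with corners, hence finite, which is the content of Theorem \ref{extension theorem1}. For Part 2, linearity of $\Phi\mapsto W_\Gamma(\Phi)$ is clear from the construction, and continuity follows by tracking $\Phi$ through the same estimates: the fibre integral and its extension bound every boundary seminorm of $F_\Gamma$ by finitely many $C^k$-seminorms of $\Phi$ on a fixed compact set (determined by $\operatorname{supp}\Phi$), with constants independent of $\Phi$, differentiation under the integral sign being justified by the Gaussian decay. Hence $W_\Gamma(-)$ is a continuous linear functional on compactly supported smooth forms, i.e. a distribution on $(\mathbb{C}^d)^n$, as asserted in Corollary \ref{distribution theorem}.
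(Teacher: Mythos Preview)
Your architecture matches the paper's --- pass to Schwinger parameters, fibre-integrate over the position variables to get a form $F_\Gamma$ on $(0,\infty)^{|\Gamma_1|}$, compactify by iterated real blow-ups along the subgraph strata, and extend smoothly to the boundary. The gap is at what you correctly flag as the ``crux'': a count of the $\rho_\gamma$-exponents is not how the paper proceeds, and by itself it does not yield smoothness. After the Gaussian fibre integral, $F_\Gamma$ is built from the inverse of the weighted graph Laplacian $M_\Gamma(t)_{ij}=\sum_e \rho^e_i\, t_e^{-1}\rho^e_j$ and from the quantities $(d_\Gamma^{-1})^{ej}=t_e^{-1}\sum_i\rho^e_i(M_\Gamma(t)^{-1})^{ij}$. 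The smooth extension to $\widetilde{[0,L]^{|\Gamma_1|}}$ is the statement that these particular rational functions of the $t_e$ extend smoothly across every blow-up face (Lemma~\ref{extended functions}), and this is an \emph{algebraic} fact proved via Kirchhoff's matrix-tree theorem, $\det M_\Gamma(t)\cdot\prod_e t_e=\sum_{T\in\tmop{Tree}(\Gamma)}\prod_{e\notin T}t_e$, together with Lemma~\ref{key}: on the corner $C_{\Gamma'_1}$ this Symanzik polynomial has lowest term $\rho^{h_1(\Gamma')}P_{h_1(\Gamma')}(\xi,t)$ with $P_{h_1(\Gamma')}$ strictly positive, because every spanning tree of $\Gamma'$ extends to one of $\Gamma$. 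An exponent count can see the power $h_1(\Gamma')$ but cannot detect the nonvanishing of this coefficient, which is exactly what keeps the denominator away from zero and upgrades ``bounded'' to ``smooth''.

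There is a second ingredient you do not anticipate: to make the dependence on the extended functions manifest, the paper complexifies the position variables (treats $w_i$ and $\bar w_i$ as independent holomorphic coordinates) and deforms the integration cycle by $M_\Gamma(t)^{-1}$ (Proposition~\ref{change integration cycle}), so that the pulled-back integrand is written directly through $(d_\Gamma^{-1})^{ej}$ \emph{before} the position integral is performed. For Part~2 the continuity is then obtained by decomposing $\Phi$ into plane waves $\Phi_k$ and proving explicit polynomial bounds in the momenta $k$ (Lemma~\ref{extension lemma}), after which Fourier inversion gives the seminorm estimate; your proposed bound on seminorms of $F_\Gamma$ by $C^k$-seminorms of $\Phi$ is the right conclusion, but the route goes through these Fourier-side estimates rather than a direct Gaussian domination.
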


This theorem generalizes the finiteness result of the Feynman graph integrals for Laman graphs in {\cite{budzik2023feynman}} to general graphs. Although it is argued in {\cite{budzik2023feynman}} that only the Laman graphs appear in the construction of the holomorphic factorization algebra structure, general graph integrals arise in the computation of correlation functions for holomorphic descent observables in holomorphic field theories.

In the study of configuration space integrals on manifolds, the
topological invariance follows from the vanishing results of integrals over
boundaries of compactified configuration space. In the holomorphic setting,
integrals over boundaries of compactified Schwinger space plays a similar
role. We prove the failure for $W_{\Gamma} (-)$ to be holomorphic is given by
integrals over boundaries of compactified Schwinger spaces:

\begin{theorem}[See Theorem \ref{laman anomaly}, Corollary \ref{holomorphicity
failure}]
  $\bar{\partial} W_{\Gamma} (-)$ is a summation of integrals over boundaries
  of compactified Schwinger space labeled by Laman subgraphs of $\Gamma$.
\end{theorem}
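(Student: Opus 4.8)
The plan is to deduce the anomaly formula from the geometric realization of Theorem~\ref{extension theorem1} by an application of Stokes' theorem on the compactified Schwinger space, exactly as the vanishing of boundary integrals yields invariance in the topological setting. Write $\pi : \overline{\mathcal{S}}_\Gamma \to (\mathbb{C}^d)^n$ for the projection of the compactified Schwinger space onto the external points, whose fibre over a configuration is the compactified space of edge metrics, and let $\eta_\Gamma$ denote the smooth form on $\overline{\mathcal{S}}_\Gamma$ furnished by the first theorem, so that $W_\Gamma = \pi_* \eta_\Gamma$ as a distribution and $W_\Gamma(\Phi) = \int_{\overline{\mathcal{S}}_\Gamma} \eta_\Gamma \wedge \pi^* \Phi$. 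Since $\eta_\Gamma$ is a product of edge factors, each built from the Schwinger (heat-kernel) representation of the Bochner--Martinelli kernel, I would first record the elementary but decisive identity satisfied by a single regularized edge factor $P_e$ on $\mathbb{C}^d \times (0,\infty)_{t_e}$, namely $(\bar\partial + d_{t_e}) P_e = 0$ away from $t_e = 0$. This reflects the heat equation $\triangle_{\bar\partial} K_t = -\partial_t K_t$ and says that the external $\bar\partial$ of the regularized propagator is a total Schwinger derivative.

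Combining the edge identities, the full integrand is closed for the total operator $D := \bar\partial + d_{\mathrm{fibre}}$ on the interior of $\overline{\mathcal{S}}_\Gamma$. I would then compute $\bar\partial W_\Gamma$ by commuting $\bar\partial$ past the fibre integration: the fibre-integration Stokes formula reads $\bar\partial\, \pi_* \eta_\Gamma = \pi_*(\bar\partial \eta_\Gamma) \pm (\pi_\partial)_* \eta_\Gamma$, where $\pi_\partial$ is the restriction to the fibrewise boundary, and the interior $D$-closedness lets me replace $\pi_*(\bar\partial \eta_\Gamma)$ by $-\pi_*(d_{\mathrm{fibre}} \eta_\Gamma)$, which is itself a fibre-boundary term by a second application of Stokes. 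The two contributions assemble into a single integral over $\partial \overline{\mathcal{S}}_\Gamma$, giving the schematic formula $\bar\partial W_\Gamma = \sum_{F} \int_F \eta_\Gamma$, the sum running over the codimension-one boundary faces of the compactified Schwinger space. The content of the theorem is the identification of which faces $F$ survive.

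The next step is to enumerate the boundary faces and evaluate the face integrals. The faces correspond to degenerations of the edge metrics; the infrared faces (metrics tending to infinity) contribute nothing because of the decay of the kernel and the compact support of $\Phi$, so the surviving faces are the ultraviolet collisions in which the metrics of the edges of a subgraph $\gamma \subseteq \Gamma$ degenerate simultaneously. On such a face the space factorizes, up to the compactification, as a compactified Schwinger space for the quotient graph $\Gamma/\gamma$ fibred over a ``link'' parametrizing the relative (rescaled) collapse of the vertices of $\gamma$; the restriction of $\eta_\Gamma$ factorizes accordingly into the pulled-back integrand for $\Gamma/\gamma$ times the leading term of the internal factors $\prod_{e \in \gamma} P_e$. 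The face integral therefore splits as a universal local number, obtained by integrating the internal factors over the link, times an integral of $W_{\Gamma/\gamma}$-type over the remaining Schwinger space --- again an integral over a compactified Schwinger space, as the statement requires.

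The main obstacle, and the heart of the proof, is to show that the local factor vanishes for every subgraph except the Laman ones, and is the prescribed anomaly coefficient otherwise. This is a dimension count: the link of a collapse of $k = |V_\gamma|$ vertices has a fixed dimension determined by the relative positions modulo translation and overall scaling together with the internal Schwinger simplex, while the internal integrand $\prod_{e \in \gamma} P_e$ has degree $(2d-1)|E_\gamma|$, and the integral can be a top form on the link only when these balance. The balance is precisely the Laman condition on $\gamma$; for $d = 1$ it reduces to $|E_\gamma| = 2|V_\gamma| - 3$, recovering the classical notion, and all non-Laman faces drop by degree. I expect two subtleties to require genuine work beyond this count: first, the ``hidden'' faces in which the collapsing subgraph is disconnected or the collision is not of principal type, where one must show the contributions either cancel or vanish by the antipodal symmetry of the Bochner--Martinelli kernel, in the spirit of Kontsevich's vanishing lemmas; and second, the justification of the interchange of $\bar\partial$ with fibre integration across the boundary strata, which relies on the smoothness of $\eta_\Gamma$ on all of $\overline{\mathcal{S}}_\Gamma$ supplied by the first theorem. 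Assembling the surviving faces then yields exactly the sum over Laman subgraphs of Theorem~\ref{laman anomaly}.
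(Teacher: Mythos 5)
Your overall skeleton is the same as the paper's: the edge identity $(\bar{\partial}+d_t)P_t=0$ (Lemma \ref{closeness of propagator}), Stokes' theorem on the compactified Schwinger space to convert $\bar{\partial}W$ into a sum of codimension-one boundary integrals (Proposition \ref{differential transfer}), vanishing of the faces at $t_e=L$ as $L\to+\infty$ (Proposition \ref{no IR boundary integrals}), and factorization of the ultraviolet faces into an $O_{(\Gamma',n|_{\Gamma'})}$-factor composed with a $W_{\Gamma/\Gamma'}$-type integral. Up to that point the proposal is sound.

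The gap is in the step you yourself identify as the heart of the proof: the selection of Laman subgraphs. You propose a form-degree count on a configuration-space ``link'' of the collapsing vertices, with the internal integrand carrying degree $(2d-1)|E_\gamma|$. This is the count from the topological (real $2d$-dimensional) setting, and it does not apply here; indeed it gives the wrong answer. The propagator $P_{t_e}=H-dt_e\wedge\bar{\partial}^{\ast}H$ has total form degree $d$ per edge, not $2d-1$, and your sanity check is off by one in $d$: the paper's Laman condition $d|\Gamma_0'|=(d-1)|\Gamma_1'|+d+1$ reduces to the classical $|E|=2|V|-3$ at $d=2$, while at $d=1$ it says $|V_\gamma|=2$ with $|E_\gamma|$ arbitrary --- not $|E_\gamma|=2|V_\gamma|-3$. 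More fundamentally, there is no configuration-space link in this construction: the compactification is in the Schwinger parameters only (the kernel provably does not extend to a compactified configuration space, cf.\ Example \ref{counter example}), the boundary face is the sphere $\partial C_{\Gamma_1'}$ in the $\xi_e$ variables, and all $\mathbb{C}^d$-integrations are genuine Gaussian integrals. The actual mechanism in Theorem \ref{laman anomaly} is two-fold: (i) Proposition \ref{vanishing result1}, a rank argument on the antiholomorphic tangent map of $(w_i,\bar{w}_i,t_e)\mapsto(y_e)$, shows the integrand $\tilde{W}$ vanishes identically whenever some connected subgraph violates the Laman inequality $d|\Gamma_0'|\geqslant(d-1)|\Gamma_1'|+d+1$; (ii) the stationary-phase/Gaussian asymptotic expansion in the radial Schwinger coordinate $\rho$ has leading power $d|\Gamma_0|-(d-1)|\Gamma_1|-md-1$ (with $m$ the number of components), which is strictly positive unless the Laman equality holds, forcing the restriction to $\rho=0$ to vanish. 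Neither step is a degree balance on a link, and your proposed count would not reproduce the correct set of surviving faces. Relatedly, the ``hidden faces'' you flag (disconnected collapses, non-principal collisions) are handled in the paper by the same $\rho$-power counting with $m>1$, and deeper corners are of codimension greater than one; no antipodal-symmetry cancellation in the style of Kontsevich is used or available here.
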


The integrals over the boundaries of compactified Schwinger spaces are dominated by contributions from certain special boundaries, denoted by $O_{\Gamma}$. For more details, see Section \ref{Laman graph integral}. Additionally, these integrals are closely connected to anomalies in the Batalin-Vilkovisky formalism of gauge field theories. Further explanations can be found in Remark \ref{QME explanation} and {\cite{minghaoBrian2024}}.

Finally, we present a new
proof of quadratic relations of $O_{\Gamma}$, which appear in
{\cite{budzik2023feynman}}:

\begin{theorem}[K.Budzik, D.Gaiotto, J.Kulp, J.Wu, M.Yu, See
{\cite{budzik2023feynman}} and Theorem \ref{Quadratic relations}]
  We have the following quadratic relations:
  \[ \sum_{\Gamma' \in \tmop{Laman} (\Gamma)} (- 1)^{(d + 1) \sigma (\Gamma',
     \Gamma / \Gamma')} O_{(\Gamma', n |_{\Gamma'} \nobracket)} \circ
     O_{(\Gamma \backslash \Gamma', n |_{\Gamma \backslash \Gamma'}
     \nobracket)} (\Phi) = 0. \]
\end{theorem}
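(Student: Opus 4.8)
The plan is to prove the quadratic relations by a boundary-stratification argument on the compactified Schwinger space, leveraging the earlier theorem that $\bar\partial W_\Gamma(-)$ is expressed as a sum of boundary integrals labeled by Laman subgraphs. The key observation is that the quadratic relations should emerge from applying $\bar\partial$ twice, or equivalently from the fact that $\bar\partial^2 = 0$ together with a careful analysis of how codimension-two boundary strata of the compactified Schwinger space are described combinatorially. Each such codimension-two stratum corresponds to a nested pair (or disjoint pair) of Laman subgraphs, and the operators $O_\Gamma$ are precisely the contributions of codimension-one boundary faces. The composition $O_{\Gamma'}\circ O_{\Gamma\backslash\Gamma'}$ then represents passing through two successive boundary collapses, and the vanishing of the total sum reflects that the boundary of a boundary is empty.

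First I would recall the precise combinatorial description of the codimension-one and codimension-two boundary strata of the compactified Schwinger space, as developed in Section \ref{Laman graph integral}. A codimension-one face is labeled by a single Laman subgraph $\Gamma'\subseteq\Gamma$, where the Schwinger (metric) parameters on the edges of $\Gamma'$ are collectively scaled to the boundary. The operator $O_{(\Gamma',\,n|_{\Gamma'})}$ is the integral over this face, and the residual graph $\Gamma\backslash\Gamma'$ carries the remaining integration data. I would then show that the codimension-two strata decompose into two combinatorial types: nested pairs $\Gamma''\subset\Gamma'$ of Laman subgraphs, and the corresponding iterated collapses. The compatibility of these collapses is exactly what produces the composition $O_{(\Gamma',\,n|_{\Gamma'})}\circ O_{(\Gamma\backslash\Gamma',\,n|_{\Gamma\backslash\Gamma'})}$.

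The second step is to track the sign carefully. The exponent $(d+1)\sigma(\Gamma',\Gamma/\Gamma')$ is an orientation sign arising from the ordering of edges (the factor $d+1$ reflecting the real degree of each edge kernel on $\mathbb C^d$, or equivalently the parity with which the edge differential forms are permuted past one another) and the combinatorial shuffle sign $\sigma$ governing how the edges of $\Gamma'$ interleave with those of the quotient $\Gamma/\Gamma'$. I would establish that when one writes $\bar\partial\bigl(\bar\partial W_\Gamma(-)\bigr)=0$ and expands via the codimension-one boundary formula iterated twice, the total codimension-two contribution organizes as the stated sum, with the sign $(-1)^{(d+1)\sigma(\Gamma',\Gamma/\Gamma')}$ encoding the relative orientation of the two successive faces. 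The relation then follows because each codimension-two stratum appears with opposite orientations from the two orders in which its two defining collapses can be performed, so the contributions cancel in pairs—or, packaged invariantly, the boundary of the boundary vanishes.

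The hard part will be the bookkeeping of signs and orientations: ensuring that the shuffle sign $\sigma(\Gamma',\Gamma/\Gamma')$ is the correct one and that the factor $(d+1)$ appears with the right parity under the identification of boundary faces. A subtle point is that, unlike the topological case where cancellation of opposite boundary orientations is immediate, here the holomorphic kernels carry a fixed degree and the strata must be matched with compatible orientations induced from the complex structure on the Schwinger parameters; I would need to verify that the induced orientation on a shared codimension-two face from its two bounding codimension-one faces differs precisely by the sign dictated by $\sigma$. Once the orientation conventions from Section \ref{Laman graph integral} are fixed, the remaining verification is a direct, if delicate, computation comparing the two factorizations of each codimension-two boundary stratum, and the quadratic relation emerges as the statement that these contributions sum to zero.
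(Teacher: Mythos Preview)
Your proposal has a genuine gap: the mechanism you invoke (``$\bar\partial^2=0$'' or ``boundary of a boundary is empty'') is not what makes the quadratic sum vanish, and the step that actually does the work is absent from your plan.

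In the paper, the argument is not $\bar\partial^2 W_\Gamma=0$. Instead one applies Stokes directly on the single face $\widetilde{(0,+\infty)^{|\Gamma_1|}}/\mathbb{R}^+$ (the ``all edges collapse together'' stratum). Using $d_t\tilde W = \pm \tilde W(\,\cdot\,,\bar\partial\Phi)$, Stokes gives the identity
\[
(-1)^{(d-1)|\Gamma_1|}\,O_{(\Gamma,n)}(\bar\partial\Phi)\;=\;\sum_{\Gamma'\in\operatorname{Laman}(\Gamma)} (-1)^{(d+1)\sigma(\Gamma',\Gamma/\Gamma')}\,O_{(\Gamma',n|_{\Gamma'})}\circ O_{(\Gamma\backslash\Gamma',n|_{\Gamma\backslash\Gamma'})}(\Phi),
\]
i.e.\ the quadratic sum equals $\pm O_{(\Gamma,n)}(\bar\partial\Phi)$, not zero a priori. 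The theorem then follows from the separate fact that $O_{(\Gamma,n)}(\bar\partial\Phi)=0$. That vanishing is the crucial input and it comes from Theorem~\ref{laman anomaly}: either $\Gamma$ is not Laman and $O_{(\Gamma,n)}\equiv 0$, or $\Gamma$ is Laman and $O_{(\Gamma,n)}(\Phi)=\int_{\mathbb{C}^d}(D\Phi)\big|_{w_i=0}$ with $D$ a constant-coefficient \emph{holomorphic} differential operator, so $D$ commutes with $\bar\partial$ and $\int_{\mathbb{C}^d}\bar\partial(D\Phi)|_{w_i=0}=0$.

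Your ``boundary of a boundary'' route does not deliver this. Each codimension-two stratum $\widetilde{(0,+\infty)^{|\Gamma'_1|}}/\mathbb{R}^+\times\widetilde{(0,+\infty)^{|\Gamma_1\setminus\Gamma'_1|}}/\mathbb{R}^+$ is a boundary face both of the $\Gamma$-face and of the $\Gamma'$-face; the cancellation $\partial\partial=0$ (equivalently $\bar\partial^2 W_\Gamma=0$) only tells you that the \emph{total} sum over all codimension-one faces of their boundary contributions is zero. It does not isolate the sum over $\partial\bigl(\widetilde{(0,+\infty)^{|\Gamma_1|}}/\mathbb{R}^+\bigr)$ alone, which is precisely the stated quadratic relation. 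To close the argument you must explain why the left-hand side $O_{(\Gamma,n)}(\bar\partial\Phi)$ vanishes on its own, and that requires the explicit holomorphic structure of $O_{(\Gamma,n)}$ from Theorem~\ref{laman anomaly}---a point your plan never invokes.
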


\noindent \textbf{Acknowledgments}
I wound like to thank Keyou Zeng, Chi Zhang, Xujia Chen, Zhengping Gui, Si Li for discussions. I especially thank Maciej Szczesny and Brian Williams for invaluable conversations on this work.
\newline

\noindent \textbf{Conventions}
  {\tmdummy}
  
  \begin{enumerate}
    \item If $S_1, S_2, \ldots S_m$ are sets, $A_1, A_2, \ldots A_m$ are
    finite sets, an element in
    \[ \prod_{i = 1}^m (S_i)^{| A_i |} \]
    will be denoted by $\left( s_{a_1}, s_{a_2} {, \ldots, s_{a_m}} 
    \right)_{a_1 \in A_1, \ldots a_m \in A_m}$, where $s_{a_i}$ is an element
    in $S_i$.
    
    \item Given a manifold $M$, we use $\Omega^{\ast} (M)$ to denote the space
    of differential forms on $M$. The space of distribution-valued
    differential forms is denoted by $\mathcal{D}^{\ast} (M)$. If $M$ is a
    complex manifold, to emphasize their bi-graded nature, we will use
    $\Omega^{\ast, \ast} (M)$ and $\mathcal{D}^{\ast, \ast} (M)$ for
    differential forms and distribution-valued differential forms
    respectively.
    
    \item We use the following Koszul sign rules: If $M, N$ are manifolds
    whose dimensions are $m, n$ respectively, \ $\alpha \in \Omega^i (M),
    \beta \in \Omega^j (N)$
    \begin{enumerateroman}
      \item Sign rule for differential forms:
      \[ \alpha \wedge \beta = (- 1)^{i j} \beta \wedge \alpha . \]
      \item Sign rule for integrations:
      \[ \int_{M \times N} \alpha \wedge \beta = \int_M \int_N \alpha \wedge
         \beta = (- 1)^{n i} \int_M \alpha \int_N \beta . \]
    \end{enumerateroman}
  \end{enumerate}

\section{Feynman graph integrals on $\mathbb{C}^d$.}\label{Feynman graph
integrals}

\subsection{Feynman graph integrals.}

The spacetime we will consider is $d$ dimensional complex affine space
$\mathbb{C}^d$. We will use $\{ z^i \}$ for standard affine coordinates. A
point in $\mathbb{C}^d$ will be denoted as $z$. The notation $d^d
\bar{z} = \underset{i = 1}{\overset{d}{\prod}} d \bar{z}^i$ will always refer
to the top anti-holomorphic form on $\mathbb{C}^d$.

To define the Feynman graph integrals, we first need to construct a distribution-valued differential form:

\begin{definition}
  An {\tmstrong{ordinary propagator}} on $\mathbb{C}^d$ is a distribution-valued differential form
  \[ \tilde{P} (z - w, \bar{z} -
     \bar{w}) \in \mathcal{D}^{\ast, \ast} (\mathbb{C}^d
     \times \mathbb{C}^d), \]
  such that the following equation holds:
  \[ \bar{\partial} (\tilde{P} (z - w, \bar{z} -
     \bar{w})) = \delta (z - w,
     \bar{z} - \bar{w}) d^d (\overline{z -
     w}) . \]
\end{definition}

The choice of propagator is not unique, but we have a standard choice:

\begin{definition}
  The {\tmstrong{Bochner-Martinelli kernel}} is an ordinary propagator given
  by the following formula:
  \[ \frac{(d - 1) !}{\pi^d} \cdot \frac{1}{| z - w |^{2 d}}
     \left( \sum_{i = 1}^d (- 1)^{i - 1} (\overline{z^i - w^i}) \prod_{j \neq
     i} d (\overline{z^j - w^j}) \right) . \]
  We denote it by $\tilde{P}_{0, + \infty} (z - w,
  \bar{z} - \bar{w})$.
\end{definition}

Once we have a propagator, we can define the Feynman graph integrals using the Bochner-Martinelli kernel.

\begin{definition}\label{definition of graphs}
  A directed graph $\Gamma$ consists of the following data:
  \begin{enumeratenumeric}
    \item A set of vertices $\Gamma_0$ and a set of edges $\Gamma_1$.
    
    \item An ordering of all the vertices $\Gamma_0$ and an ordering of all
    the edges $\Gamma_1$.
    
    \item Two maps
    \[ t, h : \Gamma_1 \rightarrow \Gamma_0 \]
    which are the assignments of tail and head to each directed edge.
    
    Furthermore, we say $\Gamma$ is decorated if we have a map
    \[ n : \Gamma_1 \rightarrow (\mathbb{Z}^{\geqslant 0})^d, \text{\quad$e
       \rightarrow (n_{1, e}, n_{2, e}, \ldots, n_{d, e})$} \]
    which associates $d$ non-negative integers to each edge. We use $(\Gamma, n)$
    to denote a decorated graph. If $n$ is the zero map, we will simply write
    $\Gamma$ for $(\Gamma, n)$.
  \end{enumeratenumeric}
\end{definition}

We will use $| \Gamma_0 |$ and $| \Gamma_1 |$ to denote the number of vertices
and edges respectively.

\begin{remark}
    In Definition \ref{definition of graphs}, we require that the graphs be directed because the propagator may not be symmetric, and therefore the Feynman graph integrals, which will be defined below, may depend on the orientation of edges. The purpose of introducing decorations of graphs is to record the order of holomorphic derivatives acting on propagators.
\end{remark}

\begin{definition}
  Given a graph $\Gamma$, the configuration space of $\Gamma$ is
  \[ \tmop{Conf} (\Gamma) = \left\{ \left( z_1,
     z_2, \ldots, z_{| \Gamma_0 |} \right)
     \in (\mathbb{C}^d)^{| \Gamma_0 |} | \nobracket z_i \neq
     z_j  \text{ for any } i, j \right\} . \]
  We can view $z_i$ as the ``position'' of i-th vertex
  of $\Gamma$.
\end{definition}

  Given a decorated graph $(\Gamma, n)$ and a smooth differential form $\Phi
  \in \Omega^{\ast, \ast}_c ((\mathbb{C}^d)^{| \Gamma_0 |})$ with compact
  support, the Feynman graph integral on
  $\mathbb{C}^d$ is formally given by the following expression:
  \begin{eqnarray*}
    &  & W_0^{+ \infty} ((\Gamma, n), \Phi)\\
    & = & (- 1)^{^{\frac{d - 1}{2} | \Gamma_1 | (| \Gamma_1 | - 1) + |
    \Gamma_1 |}} \int_{\tmop{Conf} (\Gamma)} \prod_{e \in \Gamma_1 }
    \partial_{z_{h (e)}}^{n (e)} \tilde{P}_{0, + \infty} (z_{h
    (e)} - z_{t (e)}, \bar{z}_{h (e)} -
    \bar{z}_{t (e)}) \wedge \Phi,
  \end{eqnarray*}
  where $\partial_{z_{h (e)}}^{n (e)} = \partial_{z_{h (e)}^1}^{n_{1, e}}
  \partial_{z_{h (e)}^2}^{n_{2, e}} \ldots \partial_{z_{h (e)}^i}^{n_{i, e}}
  \ldots \partial_{z_{h (e)}^d}^{n_{d, e}}$ is a differential operator with
  constant coefficients, which only involves holomorphic derivatives with
  respect to vertex $h (e)$.

\begin{remark}
   One may wonder why we should consider holomorphic derivatives of propagators. The motivation comes from holomorphic field theories. In a holomorphic field theory, we consider graph integrals constructed according to the following Feynman rules:
\begin{itemize}
    \item For each vertex, we assign a holomorphic local functional,
    \item For each edge, we assign a propagator.
\end{itemize}

The holomorphic local functionals consist of holomorphic derivatives, so the integrand of the graph integrals will be a product of holomorphic derivatives of propagators. See {\cite{williams2020renormalization}} for more details.
\end{remark}

A natural question we should ask is whether the integral $W_0^{+ \infty} ((\Gamma, n),
\Phi)$ is convergent in the sense of Lebesgue integration theory or other (stronger) integration theories. Motivated by the proof strategy in topological field theories (see
{\cite{axelrod1994chern,kontsevich1994feynman}}), one can try to compactify
$\tmop{Conf} (\Gamma)$, so that
\[ \prod_{e \in \Gamma_1 } \partial_{z_{h (e)}}^{n (e)} \tilde{P}_{0, +
   \infty} (z_{h (e)} - z_{t (e)},
   \bar{z}_{h (e)} - \bar{z}_{t (e)}) \]
can be viewed as a smooth differential form over the compactification of
$\tmop{Conf} (\Gamma)$. However, this strategy fails and the Feynman graph integrals may not integrable in the sense of Lebesgue, even in the simplest
example:

\begin{example}
  \label{counter example}Assume $d = 1$, and $\Gamma = (\Gamma, 0)$ is given
  by the following graph:
  \[ 
     \raisebox{-0.24731960585748503\height}{\includegraphics[width=4.462662337662337cm,height=1.0201200314836678cm]{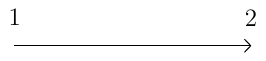}}
  \]
  In this case,
  \begin{eqnarray*}
    \prod_{e \in \Gamma_1 } \partial_{z_{h (e)}}^{n (e)} \tilde{P}_{0, +
    \infty} (z_{h (e)} - z_{t (e)},
    \bar{z}_{h (e)} - \bar{z}_{t (e)}) & = &
    \tilde{P}_{0, + \infty} (z_2 - z_1,
    \bar{z}_2 - \bar{z}_1)\\
    & = & \frac{1}{\pi} \cdot \frac{1}{z_1 - z_2}
  \end{eqnarray*}
  Since $\tilde{P}_{0, + \infty} (z_2 - z_1,
  \bar{z}_2 - \bar{z}_1)$ is an unbounded
  function on $\tmop{Conf} (\Gamma)$, it is not possible to extend it to a
  smooth function over any compactifications of $\tmop{Conf} (\Gamma)$.
\end{example}

In later sections, we will formulate Feynman graph integrals in the sense of heat kernel regularization (see \cite{costellorenormalization}), and prove their convergence. The proof strategy is to use a natural compactification of Schwinger spaces, which will be introduced in the next subsection and Section \ref{compactification'}.

\subsection{Propagator in Schwinger spaces.}

Let
\[
\Delta = \bar{\partial} \circ \bar{\partial}^{\ast} + \bar{\partial}^{\ast} \circ \bar{\partial}
\]
be the standard Laplacian operator in flat space,\footnote{The Laplacian operator used here differs by a factor of $2$ from the usual notation.} where $\bar{\partial}$ is the Dolbeault differential. The adjoint operator $\bar{\partial}^*$ is defined as
\[
\bar{\partial}^{\ast} = - 2 \sum_{i = 1}^d \frac{\partial}{\partial z^i} \iota_{\frac{\partial}{\partial \bar{z}^i}},
\]
where $\iota_{\frac{\partial}{\partial \bar{z}^i}}$ denotes the interior product operator with the vector field $\frac{\partial}{\partial \bar{z}^i}$.
We consider the associated heat kernel
\[ H (t, z, \bar{z}) = \frac{1}{(2 \pi t)^d} e^{-
   \frac{\sum_{i=1}^{d} z^i \overline{z}^i}{2
   t}} d^d \bar{z}, \text{\quad$t > 0$} \]
which is the unique function solving the heat equation
\[ \left\{\begin{array}{l}
     \left( \frac{\partial}{\partial t} + \Delta \right) H (t, z,
     \bar{z}) = 0\\
     H (0, z, \bar{z}) = \delta (z,
     \bar{z}) d^d \bar{z}
   \end{array}\right. \]
\begin{definition}
  The {\tmstrong{regularized ordinary propagator}} $\tilde{P}_{\varepsilon,
  L}$ is defined by the following formula:
  \[ \tilde{P}_{\varepsilon, L} (z - w, \bar{z} -
     \bar{w}) = \int_{\varepsilon}^L \bar{\partial}_z^{\ast}
     H (t, z - w, \bar{z} -
     \bar{w}) d t, \text{\quad$\varepsilon, L > 0$} \]
\end{definition}

The following proposition explains the relation between the regularized ordinary
propagator and the Bochner-Martinelli kernel:

\begin{proposition}
  The following equality holds:
  \[ \tilde{P}_{0, + \infty} (z - w, \bar{z} -
     \bar{w}) = \lim_{\underset{L \rightarrow +
     \infty}{\varepsilon \rightarrow 0}} \tilde{P}_{\varepsilon, L} (z -
     w, \bar{z} - \bar{w}) \]
\end{proposition}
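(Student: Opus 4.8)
The plan is to compute both sides explicitly and match them, establishing the identity first pointwise on the complement of the diagonal $\{z = w\}$ and then upgrading it to an equality of distribution-valued forms. Writing $u = z - w$, I would first compute $\bar{\partial}_z^{\ast} H(t, u, \bar{u})$ directly from the definitions. The interior products $\iota_{\partial/\partial\bar{u}^i}$ applied to the top form $d^d\bar{u}$ produce the terms $(-1)^{i-1}\prod_{j\neq i} d\bar{u}^j$, while the holomorphic derivative $\partial/\partial u^i$ acting on the Gaussian $e^{-|u|^2/(2t)}$ brings down a factor $-\bar{u}^i/(2t)$. Collecting the constants coming from $\bar{\partial}^{\ast} = -2\sum_i (\partial/\partial u^i)\,\iota_{\partial/\partial\bar{u}^i}$ yields
\[
\bar{\partial}_z^{\ast} H(t,u,\bar{u}) = \frac{1}{(2\pi)^d\, t^{d+1}}\, e^{-|u|^2/(2t)} \sum_{i=1}^d (-1)^{i-1}\bar{u}^i \prod_{j\neq i} d\bar{u}^j.
\]
Since the differential-form factor is independent of $t$, integrating over $t \in [\varepsilon, L]$ pulls it out and leaves only the scalar integral $\int_\varepsilon^L t^{-(d+1)} e^{-|u|^2/(2t)}\,dt$ to evaluate.

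Next I would evaluate this scalar integral in the limit $\varepsilon \to 0$, $L \to +\infty$. For $u \neq 0$ the integrand decays super-polynomially as $t \to 0^+$ and like $t^{-(d+1)}$ as $t \to +\infty$ (integrable at infinity since $d \geq 1$), so the improper integral converges. The substitution $s = |u|^2/(2t)$ converts it into a Gamma integral, giving
\[
\int_0^\infty \frac{1}{t^{d+1}}\, e^{-|u|^2/(2t)}\,dt = \frac{2^d\,\Gamma(d)}{|u|^{2d}} = \frac{2^d (d-1)!}{|u|^{2d}}.
\]
Combined with the prefactor $1/(2\pi)^d$ and the identity $2^d/(2\pi)^d = \pi^{-d}$, this reproduces exactly the Bochner-Martinelli formula for $\tilde{P}_{0,+\infty}$. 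Hence the stated equality holds pointwise for every $u \neq 0$, i.e. on the complement of the diagonal.

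Finally, to promote this to an equality of distribution-valued forms, I would observe that the limiting coefficients grow like $|u|^{-(2d-1)}$ near $u = 0$, which is locally integrable in real dimension $2d$; thus the Bochner-Martinelli kernel genuinely defines an element of $\mathcal{D}^{\ast,\ast}(\mathbb{C}^d \times \mathbb{C}^d)$. The same substitution furnishes the uniform estimate $\int_\varepsilon^L t^{-(d+1)} e^{-|u|^2/(2t)}\,dt \leq 2^d(d-1)!\,|u|^{-2d}$, so each coefficient of $\tilde{P}_{\varepsilon, L}$ is dominated, independently of $\varepsilon$ and $L$, by a fixed $L^1_{\mathrm{loc}}$ function. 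Dominated convergence then upgrades the pointwise convergence to convergence in $L^1_{\mathrm{loc}}$, which in turn gives convergence after pairing with any compactly supported test form. I expect the only genuine subtlety to be this last step, namely ensuring the singularity along the diagonal is controlled uniformly so that the limit commutes with the distributional pairing; the explicit computation of $\bar{\partial}^{\ast} H$ and of the Gamma integral is then routine bookkeeping of signs and constants.
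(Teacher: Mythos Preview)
Your proposal is correct and follows exactly the approach the paper indicates: the paper's proof reads in full ``The proof follows by straightforward calculation,'' and what you have written is precisely that calculation carried out in detail. Your additional step upgrading pointwise convergence to distributional convergence via dominated convergence is a welcome elaboration that the paper leaves implicit.
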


\begin{proof}
  The proof follows by straightforward calculation
\end{proof}

\begin{definition}
  Given a decorated graph $(\Gamma, n)$ and a smooth differential form $\Phi
  \in \Omega^{\ast, \ast}_c ((\mathbb{C}^d)^{| \Gamma_0 |})$ with compact
  support, we define the following {\tmstrong{regularized Feynman graph
  integral}}:
  \begin{eqnarray*}
    &  & W_{\varepsilon}^L ((\Gamma, n), \Phi)\\
    & = & (- 1)^{^{\frac{d - 1}{2} | \Gamma_1 | (| \Gamma_1 | - 1) + |
    \Gamma_1 |}} \int_{(\mathbb{C}^d)^{| \Gamma_0 |}} \prod_{e \in \Gamma_1 }
    \partial_{z_{h (e)}}^{n (e)} \tilde{P}_{\varepsilon, L}
    (z_{h (e)} - z_{t (e)},
    \bar{z}_{h (e)} - \bar{z}_{t (e)})
    \wedge \Phi .
  \end{eqnarray*}
  The {\tmstrong{Feynman graph
  integral}} is defined by the following limit (if it exists):
  \[
  W_{0}^{+\infty} ((\Gamma, n), \Phi)=\lim_{\underset{L \rightarrow +
     \infty}{\varepsilon \rightarrow 0}}W_{\varepsilon}^{L} ((\Gamma, n), \Phi)  .
  \]
\end{definition}
We will also use $W_0^L ((\Gamma, n), \Phi)$ to denote the limit (if it exists)
  \[ \lim_{\varepsilon \rightarrow 0} W_{\varepsilon}^L ((\Gamma, n), \Phi) . \]

The following proposition shows that the existence of $W_{0}^{+\infty} ((\Gamma, n), \Phi)$ is a consequence of the existence of $W_{0}^L ((\Gamma, n), \Phi)$.

\begin{proposition}
  \label{finiteness transfer}Let $L>0$. If $W_0^L ((\Gamma, n), \Phi)$ exists for any $(\Gamma, n)$, and any compactly supported differential form $\Phi \in \Omega^{\ast, \ast}_c ((\mathbb{C}^d)^{|
  \Gamma_0 |})$, then $W_0^{+ \infty} ((\Gamma, n), \Phi)$ exists for any
  $(\Gamma, n)$ and any $\Phi \in \Omega^{\ast, \ast}_c ((\mathbb{C}^d)^{|
  \Gamma_0 |})$.
\end{proposition}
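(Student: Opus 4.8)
The plan is to reduce the full limit to the hypothesis by splitting each propagator into an ultraviolet piece carrying the fixed cutoff $L$ and a smooth infrared piece. The starting point is the additivity of the Schwinger integral: for $0 < \varepsilon < L < L'$ one has
\[ \tilde{P}_{\varepsilon, L'} (z - w, \bar{z} - \bar{w}) = \tilde{P}_{\varepsilon, L} (z - w, \bar{z} - \bar{w}) + \tilde{P}_{L, L'} (z - w, \bar{z} - \bar{w}), \]
coming from $\int_{\varepsilon}^{L'} = \int_{\varepsilon}^{L} + \int_{L}^{L'}$. The whole argument hinges on choosing the order of limits so that this decomposition turns the infrared part into a fixed test form.

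First I would treat the infrared limit at the level of the propagator, for fixed $\varepsilon > 0$. A direct computation gives $\bar{\partial}^{\ast}_z H (t, z, \bar{z}) = \frac{1}{t (2 \pi t)^d} e^{- |z|^2 / 2 t} \sum_{i} (-1)^{i-1} \bar{z}^i \prod_{j \neq i} d \bar{z}^j$, whose $t$-integrand decays like $t^{- (d + 1)}$ as $t \to \infty$, uniformly on compact sets (the holomorphic derivatives $\partial^{n (e)}_z$ only improve this decay). Hence $\tilde{P}_{L, L'} \to \tilde{P}_{L, \infty} := \int_{L}^{\infty} \bar{\partial}^{\ast}_z H \, d t$ as $L' \to \infty$, uniformly together with all $z$-derivatives on any compact set, and $\tilde{P}_{L, \infty}$ is a smooth form on $\mathbb{C}^d \times \mathbb{C}^d$ (the lower cutoff $L > 0$ removes the short-distance singularity). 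Since for $\varepsilon > 0$ the integrand of $W_{\varepsilon}^{L'}$ is smooth and supported in the fixed compact set $K = \operatorname{supp} \Phi$, where the family is uniformly bounded, dominated convergence yields that $W_{\varepsilon}^{\infty} ((\Gamma, n), \Phi) := \lim_{L' \to \infty} W_{\varepsilon}^{L'} ((\Gamma, n), \Phi)$ exists.

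Next I would expand. Substituting $\tilde{P}_{\varepsilon, \infty} = \tilde{P}_{\varepsilon, L} + \tilde{P}_{L, \infty}$ for each edge into the product $\prod_{e} \partial^{n (e)} \tilde{P}_{\varepsilon, \infty}$ produces a sum over subsets $S \subseteq \Gamma_1$: edges in $S$ carry the smooth infrared factor $\partial^{n (e)} \tilde{P}_{L, \infty}$, and those outside carry $\partial^{n (e)} \tilde{P}_{\varepsilon, L}$. Because $\tilde{P}_{L, \infty}$ is smooth, the product of the infrared factors over $e \in S$ together with $\Phi$ is again a smooth compactly supported form $\Phi_S' \in \Omega^{*, *}_c ((\mathbb{C}^d)^{| \Gamma_0 |})$, with support contained in $K$. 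After reordering the wedge factors, which produces only Koszul signs and the change in the edge-counting prefactor between $\Gamma$ and the subgraph $\Gamma \setminus S$ (same vertices, edges $\Gamma_1 \setminus S$), each term equals $\pm\, W_{\varepsilon}^{L} ((\Gamma \setminus S, n|_{\Gamma \setminus S}), \Phi_S')$. Thus $W_{\varepsilon}^{\infty} ((\Gamma, n), \Phi) = \sum_{S \subseteq \Gamma_1} \pm\, W_{\varepsilon}^{L} ((\Gamma \setminus S, n|_{\Gamma \setminus S}), \Phi_S')$, a finite sum.

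Finally I would let $\varepsilon \to 0$. Each $\Phi_S'$ is a fixed element of $\Omega^{*, *}_c ((\mathbb{C}^d)^{| \Gamma_0 |})$ independent of $\varepsilon$, so the hypothesis applies directly to every summand and $\lim_{\varepsilon \to 0} W_{\varepsilon}^{L} ((\Gamma \setminus S, n|_{\Gamma \setminus S}), \Phi_S') = W_0^{L} ((\Gamma \setminus S, n|_{\Gamma \setminus S}), \Phi_S')$ exists; summing the finitely many terms gives the existence of $W_0^{+ \infty} ((\Gamma, n), \Phi)$. The decisive point — and the one to set up carefully — is precisely this ordering: performing the infrared limit $L' \to \infty$ first, where it is an elementary dominated-convergence statement thanks to the $t^{- (d + 1)}$ decay, converts the infrared propagators into a fixed test form, so that the only delicate limit $\varepsilon \to 0$ is exactly the one supplied by the hypothesis, and no continuity of $\Phi \mapsto W_0^L (\Gamma, \Phi)$ is required. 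Should one instead want the genuine joint limit, the same decomposition reduces matters to continuity of $W_0^L (\Gamma', -)$, which holds because it is a pointwise limit of the distributions $W_{\varepsilon}^L (\Gamma', -)$ and $\mathcal{D}^{*, *}$ is sequentially complete (Banach--Steinhaus).
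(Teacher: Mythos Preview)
Your decomposition $\tilde{P}_{\varepsilon,L'} = \tilde{P}_{\varepsilon,L} + \tilde{P}_{L,L'}$ and the reduction to a finite sum of $W_\varepsilon^L$'s with modified test forms is exactly the paper's approach. The issue is what you are actually proving. By definition in this paper, $W_0^{+\infty}((\Gamma,n),\Phi)$ is the \emph{double} limit $\lim_{\varepsilon\to 0,\,L'\to\infty} W_\varepsilon^{L'}$, not the iterated one. Your main argument takes $L'\to\infty$ first and then $\varepsilon\to 0$; this establishes only $\lim_{\varepsilon\to 0}\lim_{L'\to\infty}$, which is not the statement.

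You do recognize this in your last sentence, and you correctly name Banach--Steinhaus as the missing ingredient --- this is precisely what the paper uses. But the sentence as written is not quite the right reduction: continuity of the single limiting distribution $W_0^L(\Gamma',-)$ is not what controls the joint limit. What you need (and what the paper uses) is \emph{equicontinuity} of the family $\{W_\varepsilon^L(\Gamma',-)\}_{\varepsilon>0}$, which follows from Banach--Steinhaus because the pointwise limit exists by hypothesis. With equicontinuity in hand, one shows directly that for any sequence $(\varepsilon_i,L'_i)\to(0,+\infty)$ one has $W_{\varepsilon_i}^L(\Gamma',\phi_{L'_i})\to W_0^L(\Gamma',\phi_{+\infty})$, by splitting
\[
\bigl|W_{\varepsilon_i}^L(\phi_{L'_i}) - W_0^L(\phi_{+\infty})\bigr| \le \bigl|W_{\varepsilon_i}^L(\phi_{L'_i}-\phi_{+\infty})\bigr| + \bigl|W_{\varepsilon_i}^L(\phi_{+\infty}) - W_0^L(\phi_{+\infty})\bigr|,
\]
and bounding the first term via equicontinuity and $\phi_{L'_i}\to\phi_{+\infty}$. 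This is the heart of the paper's proof, and it should be the heart of yours rather than a parenthetical remark. Once you promote this step to the main line and phrase it in terms of equicontinuity, your argument matches the paper's.
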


\begin{proof}
  Let $0<\varepsilon<L<L'$. Given any $(\Gamma, n)$ and $\Phi \in \Omega^{\ast, \ast}_c
  ((\mathbb{C}^d)^{| \Gamma_0 |})$, we have
  \begin{eqnarray}\label{L-infty equation}
    &  & (- 1)^{^{\frac{d - 1}{2} | \Gamma_1 | (| \Gamma_1 | - 1) + |
    \Gamma_1 |}} W_{\varepsilon}^{L'} ((\Gamma, n), \Phi)\nonumber\\
    & = & \int_{\tmop{Conf} (\Gamma)} \prod_{e \in \Gamma_1 } \partial_{z_{h
    (e)}}^{n (e)} \tilde{P}_{\varepsilon, L'} (z_{h (e)} -
    z_{t (e)}, \bar{z}_{h (e)} -
    \bar{z}_{t (e)}) \wedge \Phi\nonumber\\
    & = &\int_{\tmop{Conf} (\Gamma)} \prod_{e \in \Gamma_1 } \partial_{z_{h
    (e)}}^{n (e)} (\tilde{P}_{\varepsilon, L} + \tilde{P}_{L, L'})
    (z_{h (e)} - z_{t (e)},
    \bar{z}_{h (e)} - \bar{z}_{t (e)})
    \wedge \Phi\nonumber\\
    & = &  \sum_{\Gamma' \subseteq \Gamma} (\pm) W_{\varepsilon}^L \left( (\Gamma', n | 
    \nobracket_{\Gamma'}), \prod_{e \nin \Gamma'_1 } \partial_{z_{h (e)}}^{n
    (e)} \tilde{P}_{L, L'} (z_{h (e)} -
    z_{t (e)}, \bar{z}_{h (e)} -
    \bar{z}_{t (e)}) \wedge \Phi \right).
  \end{eqnarray}
  We notice that $\lim_{L'\rightarrow+\infty}\tilde{P}_{L, L'}=\tilde{P}_{L,+\infty}$ is a smooth
  differential form, so 
  \[
  \lim_{L'\rightarrow+\infty}\prod_{e \nin \Gamma'_1 } \partial_{z_{h (e)}}^{n
    (e)} \tilde{P}_{L, L'} (z_{h (e)} -
    z_{t (e)}, \bar{z}_{h (e)} -
    \bar{z}_{t (e)}) \wedge \Phi
  \]
  exists in $\Omega^{\ast, \ast}_c
  ((\mathbb{C}^d)^{| \Gamma_0 |})$. It is clear that the iterated limit $\lim_{\varepsilon\rightarrow0}\lim_{L'\rightarrow+\infty}$ of the right hand side of $(\ref{L-infty equation})$ exists. Now, we show that the double limit $\lim_{\underset{L' \rightarrow +
     \infty}{\varepsilon \rightarrow 0}}$ of the right hand side of $(\ref{L-infty equation})$ exists.

     We notice that each term of the right hand side of $(\ref{L-infty equation})$ has the form $T_{\varepsilon}(\phi_{L'})$, where
     
     \begin{enumerate}
         \item $T_{\varepsilon}\in \mathcal{D}^{\ast, \ast} ((\mathbb{C}^d)^{| \Gamma_0 |})$ and $T_{0}(\phi)=\lim_{\varepsilon\rightarrow0}T_{\varepsilon}(\phi)$ exists for any $\phi\in\Omega^{\ast, \ast}_c
  ((\mathbb{C}^d)^{| \Gamma_0 |})$. 
  \item $\phi_{L'}\in\Omega^{\ast, \ast}_c
  ((\mathbb{C}^d)^{| \Gamma_0 |})$ is convergent to $\phi_{+\infty}\in\Omega^{\ast, \ast}_c
  ((\mathbb{C}^d)^{| \Gamma_0 |})$ when $L'\rightarrow+\infty$.
     \end{enumerate}
       We prove $\lim_{\underset{L' \rightarrow +
     \infty}{\varepsilon \rightarrow 0}}T_{\varepsilon}(\phi_{L'})$ exists.
     
     Let $\{(\varepsilon_{i},L'_i)\}_{i=1}^{\infty}$ be a sequence, such that $0<\varepsilon_{i}<L<L'_{i}$ and $\lim_{i\rightarrow\infty}(\varepsilon_{i},L'_i)=(0,+\infty)$. By our assumption, $\{T_{\varepsilon_i}\}_{i=1}^{\infty}$ is a sequence of distributions, such that the pointwise limit of $T_{\varepsilon_i}$ exists. By Banach–Steinhaus theorem for $\Omega^{\ast, \ast}_c
  ((\mathbb{C}^d)^{| \Gamma_0 |})$, $\{T_{\varepsilon_i}\}_{i=1}^{\infty}$ is equicontinuous. So, for any $\delta>0$, we can find a neighborhood $U\subset\Omega^{\ast, \ast}_c
  ((\mathbb{C}^d)^{| \Gamma_0 |})$ of $\phi_{+\infty}$, such that 
  \[
  |T_{\varepsilon_i}(\phi)-T_{\varepsilon_i}(\phi_{+\infty})|<\frac{1}{2}\delta \qquad\text{for all }i\in \mathbb{N} \text{ and }\phi\in U.
  \]
  We can choose some $N\in \mathbb{N}$, such that when $i>N$, we have $\phi_{L'_{i}}\in U$ and 
  \[|T_{\varepsilon_i}(\phi_{+\infty})-T_{0}(\phi_{+\infty})|<\frac{1}{2}\delta,\]
  so when $i>N$, we have 
  \[
  |T_{\varepsilon_i}(\phi_{L'_{i}})-T_0(\phi_{+\infty})|<\delta.
  \]
  This proves the existence of $\lim_{\underset{L' \rightarrow +
     \infty}{\varepsilon \rightarrow 0}}T_{\varepsilon}(\phi_{L'})$. Now, we see that the double limit $\lim_{\underset{L' \rightarrow +
     \infty}{\varepsilon \rightarrow 0}}$ of the right hand side of $(\ref{L-infty equation})$ exists, so $W_{0}^{+\infty} ((\Gamma, n), \Phi)$ exists.
\end{proof}

\begin{remark}
    Readers who have a physical background may wonder whether Proposition \ref{finiteness transfer} implies the absence of infrared divergence. The answer is no: the differential form $\Phi$ plays the role of an infrared cutoff because it has a compact support.
\end{remark}

Proposition \ref{finiteness transfer} shows that we only need to show the existence of $W_0^L
((\Gamma, n), \Phi)$. Now, we want to rephrase $W_0^L ((\Gamma, n),
\Phi)$ in terms of the following propagator in Schwinger spaces:

\begin{definition}
  Given $t > 0$, the {\tmstrong{propagator in Schwinger space}} $P_t$ is
  defined by the following formula:
  \[ P_t (z - w, \bar{z} -
     \bar{w}) = - d t \wedge \bar{\partial}_z^{\ast} H (t,
     z - w, \bar{z} - \bar{w}) +
     H (t, z - w, \bar{z} -
     \bar{w}) . \]
  We will simply call $P_t$ the propagator if there is no ambiguity.
\end{definition}

One important reason to introduce this propagator is the following lemma:

\begin{lemma}
  \label{y}Let $y = \frac{\bar{z} -
  \bar{w}}{2t}$, we have
  \[ P_t (z - w, \bar{z} -
     \bar{w}) = \frac{1}{\pi^d} e^{- (z - w)
     \cdot y} d^d y, \]
     where $(z - w)
     \cdot y=\sum_{i=1}^{d}(z^i-w^i)y^i$.
\end{lemma}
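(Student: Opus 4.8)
The plan is to compute both terms in the definition of $P_t$ explicitly and then recognize the result as the expression $\frac{1}{\pi^d}e^{-(z-w)\cdot y}\,d^d y$ obtained under the substitution $y^i = (\bar{z}^i - \bar{w}^i)/(2t)$. For brevity write $u = z-w$, $\bar{u} = \bar{z}-\bar{w}$, and $|u|^2 = \sum_i u^i \bar{u}^i$, so that $H(t,u,\bar{u}) = (2\pi t)^{-d} e^{-|u|^2/2t}\,d^d\bar{u}$ with $d^d\bar{u} = d\bar{u}^1\wedge\cdots\wedge d\bar{u}^d$.

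First I would evaluate $\bar{\partial}_z^{\ast} H$ from its definition. Applying $\iota_{\partial/\partial\bar{z}^i}$ to $d^d\bar{u}$ produces the sign $(-1)^{i-1}$ times the form with the $i$-th factor deleted, while $\partial/\partial z^i$ acting on the Gaussian brings down a factor $-\bar{u}^i/2t$. Combined with the prefactor $-2$ in $\bar{\partial}^{\ast}$, this yields a $(0,d-1)$-form whose coefficients are proportional to $\bar{u}^i e^{-|u|^2/2t}$, and hence $-dt\wedge\bar{\partial}_z^{\ast} H$ is a form containing exactly one $dt$ and $d-1$ of the $d\bar{u}$'s.

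Next I would carry out the change of variables. Since $y^i = \bar{u}^i/2t$, we have $dy^i = (2t)^{-1}d\bar{u}^i - (2t^2)^{-1}\bar{u}^i\,dt$, so each factor of $d^d y = \bigwedge_i dy^i$ splits into a $d\bar{u}^i$-part and a $dt$-part. Because $dt\wedge dt = 0$, expanding the product leaves only the term with no $dt$, equal to $(2t)^{-d}d^d\bar{u}$, and the $d$ terms containing a single $dt$; moving that $dt$ to the front of each such term produces precisely the sign $(-1)^{i-1}$. Under the same substitution the exponent becomes $|u|^2/2t = (z-w)\cdot y$, matching the claimed Gaussian factor.

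Finally I would match term by term. The no-$dt$ part of $\frac{1}{\pi^d}e^{-(z-w)\cdot y}d^d y$ reproduces $H$ once one checks the numerical identity $\pi^{-d}(2t)^{-d} = (2\pi t)^{-d}$, and the one-$dt$ part reproduces $-dt\wedge\bar{\partial}_z^{\ast} H$. I expect the main bookkeeping obstacle to be exactly this sign matching: one must verify that the $(-1)^{i-1}$ generated by pulling $dt$ past $d\bar{u}^1,\ldots,d\bar{u}^{i-1}$ in the change-of-variables expansion agrees with the $(-1)^{i-1}$ coming from the interior product $\iota_{\partial/\partial\bar{z}^i}$ in $\bar{\partial}^{\ast}$, since any discrepancy would leave a surviving relative sign between the two $dt$-terms. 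Everything else reduces to routine Gaussian differentiation.
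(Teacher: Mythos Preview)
Your proposal is correct and follows essentially the same route as the paper: both arguments expand $d^d y$ via $dy^i = (2t)^{-1}d\bar{u}^i - (2t^2)^{-1}\bar{u}^i\,dt$, separate the no-$dt$ and one-$dt$ contributions, and identify them with $H$ and $-dt\wedge\bar{\partial}_z^{\ast}H$ respectively. The only cosmetic difference is that you compute $\bar{\partial}_z^{\ast}H$ explicitly beforehand, whereas the paper simply writes out the expansion of $\frac{1}{\pi^d}e^{-(z-w)\cdot y}d^d y$ and recognizes the result as $H - dt\wedge\bar{\partial}_z^{\ast}H$ in the final line.
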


\begin{proof}
  We perform the following calculations:
  \begin{eqnarray*}
    &  & \frac{1}{\pi^d} e^{- (z - w) \cdot y} d^d y\\
    & = & \frac{1}{\pi^d} e^{- (z - w) \cdot y} \prod_{i =
    1}^d d y^i\\
    & = & \frac{1}{\pi^d} e^{- (z - w) \cdot y} \left(
    \frac{1}{(2 t)^d} \prod_{i = 1}^d d \left( 
    \overline{z}^i - \overline{w}^i \right)
    \right.\\
    & - & \left. \frac{1}{2^d (t)^{d + 1}} \sum_{i = 1}^d (- 1)^{i - 1}
    \left( \overline{z}^i - \overline{w}^i
    \right) dt \prod_{j \neq i}^d d \left( 
    \overline{z}^j - \overline{w}^j \right)
    \right)\\
    & = & H (t, z - w, \bar{z} -
    \bar{w}) - d t \wedge \bar{\partial}_z^{\ast} H (t,
    z - w, \bar{z} - \bar{w})
  \end{eqnarray*}
  
\end{proof}

The above propagator can be found in {\cite{budzik2023feynman}}. With this
propagator, we have rephrase $W_{\varepsilon}^L ((\Gamma, n), \Phi)$ in the
following way:

\begin{proposition}
  Given a decorated graph $(\Gamma, n)$ and $\Phi \in \Omega^{\ast, \ast}_c
  ((\mathbb{C}^d)^{| \Gamma_0 |})$, we have the following equality:
  \[ W_{\varepsilon}^L ((\Gamma, n), \Phi) = \int_{(\mathbb{C}^d)^{| \Gamma_0
     |} \times [\varepsilon, L]^{| \Gamma_1 |}} \prod_{e \in \Gamma_1 }
     \partial_{z_{h (e)}}^{n (e)} P_{t_e} (z_{h (e)} -
     z_{t (e)}, \bar{z}_{h (e)}-
     \bar{z}_{t (e)}) \wedge \Phi, \]
  where $t_e$ is the parameter associated with each edge $e \in \Gamma_1$.
\end{proposition}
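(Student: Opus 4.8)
The plan is to prove the identity by unfolding both sides into iterated integrals over the Schwinger parameters and matching them term by term. First I would unfold the definitions. By definition $\partial_{z_{h(e)}}^{n(e)}\tilde{P}_{\varepsilon,L} = \int_\varepsilon^L \eta_e(t_e)\,dt_e$, where I abbreviate $\eta_e(t) \assign \partial_{z_{h(e)}}^{n(e)}\bar{\partial}_z^{\ast} H(t, z_{h(e)}-z_{t(e)}, \bar{z}_{h(e)}-\bar{z}_{t(e)})$, a $(0,d-1)$-form. Correspondingly, the definition of $P_t$ splits each right-hand factor as $\partial_{z_{h(e)}}^{n(e)} P_{t_e} = -\,dt_e\wedge\eta_e(t_e) + \partial_{z_{h(e)}}^{n(e)} H$, that is, into a $dt_e$-part and an $H$-part. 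Since the constant-coefficient holomorphic operators $\partial_{z_{h(e)}}^{n(e)}$ act only on coefficient functions, they commute with the wedge structure and with the $dt_e$, so they can be carried along passively throughout.

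The key observation is that the fiber $[\varepsilon,L]^{|\Gamma_1|}$ has dimension $|\Gamma_1|$, so when the product $\prod_{e}\partial_{z_{h(e)}}^{n(e)} P_{t_e}$ is expanded into $2^{|\Gamma_1|}$ terms (one per choice of $dt_e$-part versus $H$-part for each edge), integration over the fiber annihilates every term that does not contain the full top-degree form $dt_1\wedge\cdots\wedge dt_{|\Gamma_1|}$. Hence the only surviving contribution is the one selecting the $dt_e$-part at every edge, namely $\prod_{e}\bigl(-\,dt_e\wedge\eta_e\bigr)\wedge\Phi$.

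Next I would reorder this surviving term to collect all the $dt_e$ in front. This produces two sign contributions: a factor $(-1)^{|\Gamma_1|}$ from the $|\Gamma_1|$ minus signs, and a Koszul reordering sign from commuting each $dt_e$ (degree $1$) past the $(0,d-1)$-forms $\eta_{e'}$ lying to its left. Counting these transpositions gives $\frac{|\Gamma_1|(|\Gamma_1|-1)}{2}$ of them, each contributing $(-1)^{d-1}$, for a total Koszul sign $(-1)^{\frac{d-1}{2}|\Gamma_1|(|\Gamma_1|-1)}$. Together these two contributions yield exactly $(-1)^{\frac{d-1}{2}|\Gamma_1|(|\Gamma_1|-1)+|\Gamma_1|}$, which is precisely the prefactor in the definition of $W_\varepsilon^L$. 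Finally, applying Fubini to separate the now-factored $t_e$-integrations converts $\int_{[\varepsilon,L]^{|\Gamma_1|}}(\cdots)\,dt_1\cdots dt_{|\Gamma_1|}$ back into $\prod_{e}\int_\varepsilon^L \eta_e\,dt_e = \prod_{e}\partial_{z_{h(e)}}^{n(e)}\tilde{P}_{\varepsilon,L}$, so the right-hand side collapses onto $W_\varepsilon^L((\Gamma,n),\Phi)$.

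The main obstacle is the sign bookkeeping: one must verify that the Koszul reordering sign, the $|\Gamma_1|$ minus signs coming from $-\,dt_e\wedge\bar{\partial}_z^{\ast} H$, and the signs dictated by the paper's integration convention $\int_{M\times N}\alpha\wedge\beta=(-1)^{ni}\int_M\alpha\int_N\beta$ combine to reproduce exactly the prefactor $(-1)^{\frac{d-1}{2}|\Gamma_1|(|\Gamma_1|-1)+|\Gamma_1|}$ and nothing more. In effect the prefactor in the definition of $W_\varepsilon^L$ has been chosen precisely so that this cancellation is clean; the residual convention-driven signs, which arise when the fiber volume form $dt_1\wedge\cdots\wedge dt_{|\Gamma_1|}$ is commuted past the form $\eta_1\wedge\cdots\wedge\eta_{|\Gamma_1|}\wedge\Phi$ on $(\mathbb{C}^d)^{|\Gamma_0|}$, cancel in pairs against the $(-1)^{ni}$ of the integration rule, leaving the stated identity with no leftover sign.
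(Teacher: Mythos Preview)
Your proposal is correct and follows essentially the same argument as the paper: both isolate the unique surviving term (the one carrying every $dt_e$) by a degree count on the fiber $[\varepsilon,L]^{|\Gamma_1|}$, then track the sign $(-1)^{|\Gamma_1|}$ from the factors $-dt_e\wedge\bar{\partial}^{\ast}H$ together with the Koszul sign $(-1)^{\frac{d-1}{2}|\Gamma_1|(|\Gamma_1|-1)}$ from reordering the $dt_e$ past the $(0,d-1)$-forms, and finally integrate out the Schwinger parameters to recover $\prod_e \partial_{z_{h(e)}}^{n(e)}\tilde{P}_{\varepsilon,L}$. Your discussion of the residual sign from the integration convention $\int_{M\times N}\alpha\wedge\beta=(-1)^{ni}\int_M\alpha\int_N\beta$ is more explicit than the paper's, but the logic is identical.
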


\begin{proof}
  Note in the integrand
  \[ \prod_{e \in \Gamma_1 } \partial_{z_{h (e)}}^{n (e)} P_{t_e}
     (z_{h (e)} - z_{t (e)},
     \bar{z}_{h (e)} - \bar{z}_{t (e)})
     \wedge \Phi, \]
  $| \Gamma_1 |$ propagators appear and each provides at most $1$
  $dt_e$ component. Furthermore, the number of propagators equal to the
  dimension of $[\varepsilon, L]^{| \Gamma_1 |}$, so we have the following
  equality:
  \begin{eqnarray*}
    &  & \int_{(\mathbb{C}^d)^{| \Gamma_0 |} \times [\varepsilon, L]^{|
    \Gamma_1 |}} \prod_{e \in \Gamma_1 } \partial_{z_{h (e)}}^{n (e)} P_{t_e}
    (z_{h (e)} - z_{t (e)},
    \bar{z}_{h (e)} - \bar{z}_{t (e)})
    \wedge \Phi\\
    & = & (- 1)^{^{| \Gamma_1 |}} \int_{(\mathbb{C}^d)^{| \Gamma_0 |} \times
    [\varepsilon, L]^{| \Gamma_1 |}} \prod_{e \in \Gamma_1 } \partial_{z_{h
    (e)}}^{n (e)} (d t_e \wedge \bar{\partial}^{\ast}_{z_{h (e)}} H (t_e,
    z_{h (e)} - z_{t (e)},
    \bar{z}_{h (e)} - \bar{z}_{t (e)}))
    \wedge \Phi\\
    & = & (- 1)^{^{\frac{d - 1}{2} | \Gamma_1 | (| \Gamma_1 | - 1) + |
    \Gamma_1 |}} \cdot\\
    &  & \int_{(\mathbb{C}^d)^{| \Gamma_0 |}} \int_{[\varepsilon, L]^{|
    \Gamma_1 |}} \prod_{e \in \Gamma_1 } d t \prod_{e \in \Gamma_1 }
    \partial_{z_{h (e)}}^{n (e)} \bar{\partial}^{\ast}_{z_{h (e)}} H (t_e,
    z_{h (e)} - z_{t (e)},
    \bar{z}_{h (e)} - \bar{z}_{t (e)})
    \wedge \Phi\\
    & = & (- 1)^{^{\frac{d - 1}{2} | \Gamma_1 | (| \Gamma_1 | - 1) + |
    \Gamma_1 |}} \int_{(\mathbb{C}^d)^{| \Gamma_0 |}} \prod_{e \in \Gamma_1 }
    \partial_{z_{h (e)}}^{n (e)} \tilde{P}_{\varepsilon, L}
    (z_{h (e)} - z_{t (e)},
    \bar{z}_{h (e)} - \bar{z}_{t (e)})
    \wedge \Phi\\
    & = & W_{\varepsilon}^L ((\Gamma, n), \Phi)
  \end{eqnarray*}
  In the second-to-last step, we used the fact that only the top differential forms contributes the
  integral over $[\varepsilon, L]^{| \Gamma_1 |}$.
\end{proof}

\begin{definition}
  Given a directed graph $\Gamma$, and $L > 0$, the Schwinger space is defined
  by $(0, L]^{| \Gamma_1 |}$. The orientation is given by the following
  formula:
  \[ \int_{(0, L]^{| \Gamma_1 |}} \prod_{e \in \Gamma_1} d t_e = L^{| \Gamma_1
     |} . \]
\end{definition}

\begin{proposition}
  Given a decorated graph $(\Gamma, n)$ and $\Phi \in \Omega^{\ast, \ast}_c
  ((\mathbb{C}^d)^{| \Gamma_0 |})$, we denote the integrand
  \[ \prod_{e \in \Gamma_1 } \partial_{z_{h (e)}}^{n (e)} P_{t_e}
     (z_{h (e)} - z_{t (e)},
     \bar{z}_{h (e)} - \bar{z}_{t (e)})
     \wedge \Phi \]
  by $\tilde{W} ((\Gamma, n), \Phi)$. Then
  \[ \int_{(\mathbb{C}^d)^{| \Gamma_0 |}} \tilde{W} ((\Gamma, n), \Phi) \in
     \Omega^{\ast} ((0, L]^{| \Gamma_1 |})  \text{for any } L > 0. \]
\end{proposition}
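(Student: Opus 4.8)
The plan is to use the explicit Gaussian description of the propagator from Lemma~\ref{y} to see that, for each fixed value of the Schwinger parameters $(t_e)_{e\in\Gamma_1}\in(0,L]^{|\Gamma_1|}$, the integrand $\tilde{W}((\Gamma,n),\Phi)$ is a smooth, compactly supported form in the spatial variables $z=(z_1,\dots,z_{|\Gamma_0|})$ with no diagonal singularity. Writing $y_e=\frac{\bar z_{h(e)}-\bar z_{t(e)}}{2t_e}$, Lemma~\ref{y} gives $P_{t_e}=\frac{1}{\pi^d}e^{-(z_{h(e)}-z_{t(e)})\cdot y_e}\,d^dy_e$; since $y_e$ depends only on the antiholomorphic coordinates and on $t_e$, the holomorphic operator $\partial_{z_{h(e)}}^{n(e)}$ acts solely on the exponential and produces
\[ \partial_{z_{h(e)}}^{n(e)}P_{t_e}=\frac{(-1)^{|n(e)|}}{\pi^d}\,y_e^{n(e)}\,e^{-\frac{|z_{h(e)}-z_{t(e)}|^2}{2t_e}}\,d^dy_e, \]
with $y_e^{n(e)}=\prod_{i}(y_e^i)^{n_{i,e}}$ and $|n(e)|=\sum_i n_{i,e}$. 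Hence the product over all edges is a Gaussian in $z$ multiplied by a polynomial in $\bar z$ and by $\prod_e d^dy_e$, and the crucial point is that for $t_e>0$ this is a genuine smooth form on all of $(\mathbb{C}^d)^{|\Gamma_0|}$, the singularities being confined to the missing face $t_e\to0^+$.

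Next I would expand $\prod_e d^dy_e$ through $dy_e^i=\frac{1}{2t_e}\,d(\bar z_{h(e)}^i-\bar z_{t(e)}^i)-\frac{\bar z_{h(e)}^i-\bar z_{t(e)}^i}{2t_e^2}\,dt_e$, getting a finite sum of terms, each a Laurent monomial in the $t_e$ (with only nonpositive powers) times a polynomial in $\bar z$ times a wedge of some $d\bar z$'s and some $dt_e$'s. Wedging with $\Phi$ and collecting the cube directions, I would write
\[ \tilde{W}((\Gamma,n),\Phi)=\sum_I g_I(z,\bar z;t)\,\mu_I\wedge dt_I, \]
where $dt_I$ runs over the coordinate monomials on $(0,L]^{|\Gamma_1|}$, each $\mu_I$ is a constant-coefficient basis form on $(\mathbb{C}^d)^{|\Gamma_0|}$, and $g_I$ is a finite sum of Laurent monomials in $t$ times a polynomial in $\bar z$ times the Gaussian $\exp\!\big(-\sum_e |z_{h(e)}-z_{t(e)}|^2/2t_e\big)$ times a smooth coefficient of $\Phi$; in particular each $g_I$ is compactly supported in $z$. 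Fiber integration over $(\mathbb{C}^d)^{|\Gamma_0|}$ retains, for each $I$, only the summands in which $\mu_I$ is the top form, so that
\[ \int_{(\mathbb{C}^d)^{|\Gamma_0|}}\tilde{W}((\Gamma,n),\Phi)=\sum_I\Big(\int_{(\mathbb{C}^d)^{|\Gamma_0|}}g_I\,\mu_I\Big)\,dt_I \]
is a differential form on $(0,L]^{|\Gamma_1|}$, and its coefficients are finite because the support of $\Phi$ restricts each integral to a fixed compact region.

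It remains to show that each coefficient function is $C^\infty$ on $(0,L]^{|\Gamma_1|}$, which I would obtain by differentiating under the integral sign. Fixing a compact box $[\delta,L]^{|\Gamma_1|}$ with $\delta>0$, on the product of this box with the compact support of $\Phi$ the integrand $g_I$ and all its $t_e$-derivatives are continuous, hence uniformly bounded: each $\partial_{t_e}$ only creates further Laurent monomials in $t_e$ (harmless for $t_e\ge\delta$) and further polynomial-in-$\bar z$ prefactors multiplying the same Gaussian. The standard differentiation-under-the-integral theorem then applies on $[\delta,L]^{|\Gamma_1|}$, and since $\delta>0$ is arbitrary this gives smoothness on all of $(0,L]^{|\Gamma_1|}$.

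The only step needing care is this last justification, but it is entirely routine here: the compact support of $\Phi$ confines the $z$-integration to a fixed compact set, and the strict positivity $t_e\ge\delta>0$ keeps the Gaussian and every $t_e^{-k}$ factor smooth and bounded, so no small-$t$ or diagonal singularity can occur. The genuinely delicate phenomenon—uniform integrability as some $t_e\to0^+$, i.e.\ as one approaches the missing face of the cube—is precisely what the compactification of Schwinger space in Section~\ref{compactification'} is designed to handle, and it plays no role in the present smoothness statement.
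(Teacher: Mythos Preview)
Your argument is correct and is essentially the same as the paper's: the paper's proof is the single sentence ``This can be proved by dominated convergence theorem easily,'' and your differentiation-under-the-integral-sign argument on compact boxes $[\delta,L]^{|\Gamma_1|}$ is exactly a spelled-out version of that. Your use of Lemma~\ref{y} to exhibit the integrand as a Gaussian times polynomial with compact spatial support is a clean way to make the domination explicit.
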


\begin{proof}
  This can be proved by dominated convergence theorem easily. We will prove a
  stronger version later.
\end{proof}

Our strategy to prove the finiteness of $W_{\varepsilon}^L ((\Gamma, n),
\Phi)$ can be described by the following steps:
\begin{enumeratenumeric}
  \item Find a compactification $\widetilde{[0, L]^{| \Gamma_1 |}}$ of
  Schwinger space $(0, L]^{| \Gamma_1 |}$, which is a manifold with corners.
  
  \item Prove $\int_{(\mathbb{C}^d)^{| \Gamma_0 |}} \tilde{W} ((\Gamma, n),
  \Phi)$ can be extended to a differential form over $\widetilde{[0, L]^{|
  \Gamma_1 |}}$,
  
  \item Since $\widetilde{[0, L]^{| \Gamma_1 |}}$ is compact, the integral
  over $\widetilde{[0, L]^{| \Gamma_1 |}}$ is finite.
\end{enumeratenumeric}
Let us end this subsection by stating two useful properties of the propagator
in Schwinger space.

\begin{lemma}
  \label{closeness of propagator}{\tmdummy}
  
  \begin{enumeratenumeric}
    \item Let $d_t$ be the de-Rham differential over Schwinger space, then
    \[ (d_t + \bar{\partial}) P_t (z - w, \bar{z}
       - \bar{w}) = 0. \]
    \item The following equality holds:
    \[ \left( t \iota_{\frac{\partial}{\partial t}} + \sum_{i = 1}^d \bar{z}^i
       \iota_{\frac{\partial}{\partial \bar{z}^i}} + \sum_{i = 1}^d \bar{w}^i
       \iota_{\frac{\partial}{\partial \bar{w}^i}} \right) P_t (z - w,
       \bar{z} - \bar{w}) = 0 \]
  \end{enumeratenumeric}
\end{lemma}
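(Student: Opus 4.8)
The plan is to deduce both identities from the closed-form expression of Lemma~\ref{y}, namely $P_t = \frac{1}{\pi^d}\,e^{-(z-w)\cdot y}\,d^d y$ with $y^i = \frac{\bar z^i-\bar w^i}{2t}$ and $d^d y = dy^1\wedge\cdots\wedge dy^d$; writing $u^i = z^i-w^i$, $\bar u^i = \bar z^i-\bar w^i$ and $|u|^2 = \sum_i u^i\bar u^i$, we have $(z-w)\cdot y = \frac{1}{2t}\sum_i u^i\bar u^i = \frac{|u|^2}{2t}$. I regard $P_t$ as a form on the two copies of $\mathbb{C}^d$ times the Schwinger parameter interval, on which $\bar\partial = \bar\partial_z+\bar\partial_w$ acts as $\bar\partial_u$ on functions of $\bar u$, and on which the full de Rham differential splits as $d = \partial + \bar\partial + d_t$ with $\partial = \partial_z+\partial_w$.

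For part (1), the point is that each $dy^i$ is annihilated by $d_t+\bar\partial$ --- it is $d$-closed and involves no $du^j$, so the holomorphic part $\partial$ also kills it --- whence $(d_t+\bar\partial)(d^d y)=0$ and the only contribution to $(d_t+\bar\partial)P_t$ comes from differentiating the scalar factor $e^{-(z-w)\cdot y}$. Since $d_t+\bar\partial$ omits $\partial$, I would compute
\[ (d_t+\bar\partial)\bigl((z-w)\cdot y\bigr) = \frac{1}{2t}\sum_i u^i\,d\bar u^i - \frac{|u|^2}{2t^2}\,dt = \sum_i u^i\Bigl(\frac{d\bar u^i}{2t} - \frac{\bar u^i}{2t^2}\,dt\Bigr) = \sum_i u^i\,dy^i, \]
and then
\[ (d_t+\bar\partial)P_t = -\frac{1}{\pi^d}\,e^{-(z-w)\cdot y}\Bigl(\sum_i u^i\,dy^i\Bigr)\wedge d^d y = 0 \]
because $dy^i\wedge dy^1\wedge\cdots\wedge dy^d = 0$ for every $i$. (An alternative is to start from $P_t = H - dt\wedge\bar\partial^*_z H$ and use $\bar\partial H = 0$, since $H$ is a multiple of the top antiholomorphic form, together with the heat equation in the form $\partial_t H + \bar\partial\bar\partial^*_z H = (\partial_t+\Delta)H = 0$; the $y$-computation is cleaner and avoids tracking which variable $\bar\partial^*$ differentiates.)

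For part (2), I would contract $P_t$ with the weighted Euler field $V = t\,\frac{\partial}{\partial t} + \sum_i \bar z^i\,\frac{\partial}{\partial\bar z^i} + \sum_i \bar w^i\,\frac{\partial}{\partial\bar w^i}$. Everything reduces to the single observation that $V$ is annihilated by each $dy^i$:
\[ \iota_V\,dy^i = \frac{1}{2t}\bigl(\bar z^i-\bar w^i\bigr) - \frac{\bar z^i-\bar w^i}{2t^2}\cdot t = 0. \]
Because the coefficient $\frac{1}{\pi^d}e^{-(z-w)\cdot y}$ is a function, the antiderivation property of $\iota_V$ then gives $\iota_V P_t = \frac{1}{\pi^d}e^{-(z-w)\cdot y}\,\iota_V(dy^1\wedge\cdots\wedge dy^d) = 0$, which is exactly the claimed identity.

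Both computations are short; the only step demanding care is the splitting in part (1). One must remember that $d_t+\bar\partial$ excludes the holomorphic differential $\partial$, so that after discarding the $du^i$ terms produced by $d\bigl((z-w)\cdot y\bigr)$ the surviving $\frac{1}{2t}\sum_i u^i\,d\bar u^i$ (from $\bar\partial$) and $-\frac{|u|^2}{2t^2}\,dt$ (from $d_t$) recombine exactly, with no leftover, into $\sum_i u^i\,dy^i$, which is then killed by the wedge with $d^d y$.
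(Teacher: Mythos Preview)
Your proof is correct and is exactly the kind of direct computation the paper alludes to (the paper's own proof reads in its entirety ``These can be shown by direct computations''). Using the $y$-variable expression from Lemma~\ref{y} makes both identities transparent: part~(1) reduces to $dy^i\wedge d^dy=0$, and part~(2) to $\iota_V\,dy^i=0$ for the weighted Euler field, just as you have written.
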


\begin{proof}
  These can be shown by direct computations.
\end{proof}

\subsection{Feynman graph integrals in Schwinger spaces.}

In this subsection, we will rephrase our Feynman graph integral in a more
convenient coordinate system.

We first note the following facts:
\begin{enumeratenumeric}
  \item If a decorated graph $(\Gamma, n)$ has two connected components
  $(\Gamma', n | \nobracket_{\Gamma'})$ and $(\Gamma'', n |
  \nobracket_{\Gamma''})$, we have
  \[ W_{\varepsilon}^L ((\Gamma, n), \Phi' \cdot \Phi'') = \pm
     W_{\varepsilon}^L ((\Gamma', n | \nobracket_{\Gamma'}), \Phi')
     W_{\varepsilon}^L ((\Gamma'', n | \nobracket_{\Gamma''}), \Phi''), \]
  where $\Phi' \in \Omega^{\ast, \ast}_c ((\mathbb{C}^d)^{| \Gamma'_0 |})$,
  $\Phi'' \in \Omega^{\ast, \ast}_c ((\mathbb{C}^d)^{| \Gamma''_0 |})$.
  
  \item If a decorated graph $(\Gamma, n)$ contains a self-loop, then
  \[ W_{\varepsilon}^L ((\Gamma, n), \Phi) = 0. \]
\end{enumeratenumeric}
\ Therefore, we assume $\Gamma$ is connected without self-loops in the
subsequent content.

Given $(\Gamma, n)$, we introduce the following coordinate transformation:
\[ \left\{\begin{array}{l}
     z_i = w_i +  w_{|
     \Gamma_0 |} \text{\qquad} 1 \leqslant i \leqslant | \Gamma_0 | - 1\\
     z_{| \Gamma_0 |} = w_{| \Gamma_0 |}
   \end{array}\right. \]
Here we have used the ordering of vertices of $\Gamma$.

We introduce the following incidence matrix:

\begin{definition}
  Given a connected graph $\Gamma$ without self-loops, we have the following
  incidence matrix:
  \[ \rho^e_i = \left\{\begin{array}{l}
       1 \text{\qquad if } h (e) = i\\
       - 1 \text{\quad if } t (e) = i\\
       0 \text{\qquad otherwise}
     \end{array}\right. \]
  where $i \in \Gamma_0$, $e \in \Gamma_1$.
\end{definition}

Using lemma \ref{y} and coordinates $\{ w_i^j \}$, the integrand of the Feynman
graph integral becomes:
\begin{equation}
  \tilde{W} ((\Gamma, n), \Phi) = \frac{1}{\pi^{d | \Gamma_1 |}} e^{-
  \overset{| \Gamma_0 | - 1}{\underset{i = 1}{\sum}} \underset{e \in
  \Gamma_1}{\sum} \rho_i^e w_i \cdot y_e}
  \prod_{e \in \Gamma_1} \left( \prod_{1 \leqslant i \leqslant d}
  (y^i_e)^{n_{i, e}} \right) d^d y_e \wedge \Phi, \label{integrand}
\end{equation}
where
\[ y_e = \sum_{i = 1}^{| \Gamma_0 | - 1} \frac{1}{2 t_e}
   \rho^e_i \bar{w}_i . \]

The following proposition shows the power of $\left( \ref{integrand} \right)$:

\begin{proposition}
  If there exists a connected subgraph $\Gamma' \subseteq \Gamma$, such that
  \[ d | \Gamma'_0 | < (d - 1) | \Gamma_1' | + d + 1, \]
  then $\tilde{W} ((\Gamma, n), \Phi) = 0$.\label{vanishing result1}
\end{proposition}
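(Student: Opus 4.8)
By $(\ref{integrand})$, $\tilde{W} ((\Gamma, n), \Phi)$ equals a scalar function times the form
$\left( \prod_{e \in \Gamma_1'} d^d y_e \right) \wedge \left( \prod_{e \in \Gamma_1 \setminus \Gamma_1'} d^d y_e \right) \wedge \Phi$,
so it suffices to prove that the first factor $\prod_{e \in \Gamma_1'} d^d y_e$ vanishes identically as a differential form. Writing $V' = | \Gamma_0' |$ and $E' = | \Gamma_1' |$, this factor is the wedge of the $d E'$ one-forms $\{ d y_e^j : e \in \Gamma_1', \ 1 \leqslant j \leqslant d \}$, hence it vanishes precisely when these one-forms are pointwise linearly dependent, i.e. when the Jacobian of the map $(w, t) \mapsto (y_e)_{e \in \Gamma_1'}$ has rank strictly less than $d E'$. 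Since the integrand is smooth for $t_e > 0$, it is enough to establish this rank bound at a generic point and invoke continuity.

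To compute the rank I would use $y_e^j = u_e^j / (2 t_e)$ with $u_e^j = \sum_i \rho_i^e \bar{w}_i^j = (R \bar{w}^{(j)})_e$, where $R = (\rho_i^e)$ is the reduced incidence matrix of $\Gamma'$ and $\bar{w}^{(j)} = (\bar{w}_i^j)_i$. Differentiating gives $d y_e^j = \frac{1}{2 t_e} \sum_i \rho_i^e \, d \bar{w}_i^j - \frac{u_e^j}{2 t_e^2} \, d t_e$, so the Jacobian splits as $[A \, | \, B]$ with an antiholomorphic block $A$ and a Schwinger block $B$. The block $A$ is block-diagonal over the $d$ coordinate directions $j$, each block a nonzero rescaling of $R$; since $\Gamma'$ is connected, $\operatorname{rank} R = V' - 1$, hence $\operatorname{rank} A = d (V' - 1)$. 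The columns of $B$ are indexed by edges and have pairwise disjoint row supports, so $\operatorname{rank} B = E'$ at points where all $u_e \neq 0$. The crude estimate $\operatorname{rank} [A \, | \, B] \leqslant d (V' - 1) + E'$ is, however, off by one in the boundary case and must be sharpened.

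The sharpening is the heart of the argument: I claim $\operatorname{col} A \cap \operatorname{col} B \neq 0$. Indeed, the column $b_e$ of $B$ is supported on the rows $\{ (e, j) \}_j$ with entries $- u_e^j / (2 t_e^2)$, so the combination $v = \sum_{e \in \Gamma_1'} t_e \, b_e$ has $(e, j)$-entry $- u_e^j / (2 t_e)$. Because $(u_e^j)_e = R \bar{w}^{(j)} \in \operatorname{im} R$ for every $j$, this same vector $v$ lies in $\operatorname{col} A$, realized by the incidence data $\xi^{(j)} = - \bar{w}^{(j)}$ in each block; and at a generic point $v \neq 0$. This common vector is exactly the infinitesimal manifestation of the scaling identity in Lemma \ref{closeness of propagator}(2): the vector field $\sum_e t_e \partial_{t_e} + \sum_{i, j} \bar{w}_i^j \partial_{\bar{w}_i^j}$ annihilates every $y_e^j$. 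Consequently $\operatorname{rank} [A \, | \, B] = d (V' - 1) + E' - \dim (\operatorname{col} A \cap \operatorname{col} B) \leqslant d (V' - 1) + E' - 1$. The hypothesis $d V' < (d - 1) E' + d + 1$ is equivalent to $d (V' - 1) \leqslant (d - 1) E'$, whence $\operatorname{rank} [A \, | \, B] \leqslant (d - 1) E' + E' - 1 = d E' - 1 < d E'$; the $d E'$ forms are dependent, the wedge vanishes, and therefore $\tilde{W} ((\Gamma, n), \Phi) = 0$.

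The main obstacle is precisely the boundary case $d (V' - 1) = (d - 1) E'$ (for $d = 2$ this is the over-Laman threshold $E' = 2 V' - 2$), where pure dimension counting of the ambient cotangent space, or of the blocks $A$ and $B$ separately, only yields $\operatorname{rank} \leqslant d E'$ and is therefore insufficient; everything hinges on exhibiting the explicit nonzero element of $\operatorname{col} A \cap \operatorname{col} B$ coming from the $t$-versus-$\bar{w}$ scaling. Two minor points to record are the use of connectedness of $\Gamma'$ for $\operatorname{rank} R = V' - 1$, and the exclusion of the degenerate single-vertex subgraph $E' = 0$, for which the statement is vacuous.
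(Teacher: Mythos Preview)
Your proof is correct and follows essentially the same approach as the paper: both bound the rank of the tangent map of $(\bar w,t)\mapsto (y_e)_{e\in\Gamma'_1}$ strictly below $d|\Gamma'_1|$ by exploiting translation invariance (encoded in your $\operatorname{rank} R = V'-1$) together with the scaling symmetry (your intersection vector $v$, which is the same as the paper's Euler-type kernel vector from Lemma~\ref{closeness of propagator}(2)). The paper packages this as a direct kernel count---exhibiting $d+1$ independent tangent vectors annihilated by $g_{\Gamma'}$, so that the image has dimension at most $d|\Gamma'_0|+|\Gamma'_1|-d-1$---which is shorter than your block-by-block column-space analysis but is the identical linear algebra viewed dually.
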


\begin{proof}
  Note there is a factor $\tilde{W} ((\Gamma', n |  \nobracket_{\Gamma'}), 1)$
  in $\tilde{W} ((\Gamma, n), \Phi)$, we only need to prove $\tilde{W}
  ((\Gamma', n |  \nobracket_{\Gamma'}), 1) = 0$.
  
  Let's consider the following map $g_{\Gamma'}$:
  \[ g_{\Gamma'} : (w_i, \bar{w}_i, t_e) \in
     (\mathbb{C}^d)^{| \Gamma_0' |} \times (0, + \infty)^{| \Gamma_1' |}
     \rightarrow (y_e) \in (\mathbb{C}^d)^{| \Gamma_1' |}, \]
  since $g_{\Gamma'}$ is an anti-holomorphic map, we have the following
  anti-holomorphic tangent map:
  \[ T^{\tmop{anti}}_{(w_i, \bar{w}_i, t_e)}
     g : T^{\tmop{anti}}_{(w_i, \bar{w}_i)}
     (\mathbb{C}^d)^{| \Gamma_0' |} \oplus T_{(t_e)} \rightarrow
     T^{\tmop{holo}}_{(y_e)} (\mathbb{C}^d)^{| \Gamma_1' |} .
  \]
  We notice that $\tilde{W} ((\Gamma', n |  \nobracket_{\Gamma'}), 1)$
  contains a factor $g_{\Gamma'}^{\ast} \left( \prod_{e \in \Gamma'_1} d^d y_e
  \right)$. Since $\prod_{e \in \Gamma'_1} d^d y_e$ is a top holomorphic form,
  \[ g_{\Gamma'}^{\ast} {\left( \prod_{e \in \Gamma'_1} d^d y_e
     \right)_{(w_i, \bar{w}_i, t_e)}}  = 0
  \]
  if $T^{\tmop{anti}}_{(w_i, \bar{w}_i,
  t_e)} g$ is not surjective. We further notice that the following vectors
  belong to the kernel of $T^{\tmop{anti}}_{(w_i,
  \bar{w}_i, t_e)} g$:
  \[ \sum_{i = 1}^{| \Gamma_0' |} \partial_{w_i^j}, \text{\quad} \sum_{i =
     1}^{| \Gamma_1 |} t_e \partial_{t_e} + \sum_{\underset{1 \leqslant k
     \leqslant d}{1 \leqslant i \leqslant | \Gamma_0 |}} \bar{w}_i^k
     \partial_{\bar{w}_i^k}, \text{\quad} 1 \leqslant j \leqslant d, \]
  so the dimension of the image of $T^{\tmop{anti}}_{(w_i,
  \bar{w}_i, t_e)} g$ is bounded:
  \[ \dim \left( \tmop{Im} \left( T^{\tmop{anti}}_{(w_i,
     \bar{w}_i, t_e)} g \right) \right) \leqslant d |
     \Gamma_0' | + | \Gamma_1' | - d - 1 < d | \Gamma_1' | . \]
  Note the dimension of $T^{\tmop{holo}}_{(y_e)}
  (\mathbb{C}^d)^{| \Gamma_1' |}$ is $d | \Gamma_1' |$,
  $T^{\tmop{anti}}_{(w_i, \bar{w}_i, t_e)}
  g$ cannot be surjective. We conclude that
  \[ \tilde{W} ((\Gamma, n), \Phi) = 0 \]
\end{proof}

\begin{remark}
    The proof of Proposition \ref{vanishing result1} is motivated by the following fact: Let $M$ and $N$ be smooth manifolds, and let $\alpha$ be a top form on $N$. Suppose $f: M \to N$ is a smooth map such that the tangent map of $f$ is not surjective at any point. Then the pullback of $\alpha$ along $f$ is zero. It remains unclear whether this result has a physical interpretation.
\end{remark}

We can rephrase the Feynman graph integral by:
\begin{eqnarray}
  &  & W_{\varepsilon}^L ((\Gamma, n), \Phi) \nonumber\\
  & = & \frac{1}{\pi^{d | \Gamma_1 |}} \int_{\mathbb{C}^d_{w_{| \Gamma_0 |}}}
  \int_{(\mathbb{C}^d)^{| \Gamma_0 | - 1} \times [\varepsilon, L]^{| \Gamma_1
  |}} e^{- \overset{| \Gamma_0 | - 1}{\underset{i = 1}{\sum}} \underset{e \in
  \Gamma_1}{\sum} \rho_i^e w_i \cdot y_e}\nonumber\\
  & &
  \prod_{e \in \Gamma_1} \left( \prod_{1 \leqslant i \leqslant d}
  (y^i_e)^{n_{i, e}} \right) d^d y_e \wedge \Phi,  \label{integral}
\end{eqnarray}
where $\int_{\mathbb{C}^d_{w_{| \Gamma_0 |}}}$ is the notation denotes the
integral over $\{ w_{| \Gamma_0 |}^i \}$. Since
\[ e^{- \overset{| \Gamma_0 | - 1}{\underset{i = 1}{\sum}} \underset{e \in
   \Gamma_1}{\sum} \rho_i^e w_i \cdot y_e}
   \prod_{e \in \Gamma_1} \left( \prod_{1 \leqslant i \leqslant d}
   (y^i_e)^{n_{i, e}} \right) d^d y_e \]
is a smooth differential form in terms of $\{ w_i^j, y_e^j \}$, we can imagine
that if we can find a compactification of Schwinger space and proper
coordinates, such that $\{ w_i^j, y_e^j \}$ are smooth functions, we may
extend $\int_{\mathbb{C}^d_{w_{| \Gamma_0 |}}} \int_{(\mathbb{C}^d)^{|
\Gamma_0 | - 1}} \tilde{W} ((\Gamma, n), \Phi)$ to the compactification space.
This will be achieved in section \ref{compactification}.

In the remaining part of this subsection, we will present some concepts and
facts from graph theory, which will be used in later sections.

\begin{definition}
  Given a connected graph $\Gamma$, a tree $T \subseteq \Gamma$ is said to be
  a spanning tree if every vertex of $\Gamma$ lies in $T$. We denote the set
  of all spanning tree by $\tmop{Tree} (\Gamma)$.
\end{definition}

\begin{definition}
  Given a connected graph $\Gamma$ and two disjoint subsets of vertices $V_1,
  \text{ } V_2 \subseteq \Gamma_0$, we define $\tmop{Cut} (\Gamma ; V_1, V_2)$
  to be the set of subsets $C \subseteq \Gamma_1$ satisfying the following
  properties:
  \begin{enumeratenumeric}
    \item The removing of edges in $C$ from $\Gamma$ divides $\Gamma$ into
    exactly two connected trees, which we denoted by $\Gamma' (C), \text{ }
    \Gamma'' (C) $, such that $V_1 \subseteq \Gamma_0' (C), \text{ } V_2
    \subseteq \Gamma_0'' (C)$.
    
    \item $C$ doesn't contain any proper subset satisfying property 1.
  \end{enumeratenumeric}
\end{definition}

\begin{remark}
    Intuitively, a cut is a partition of a given graph into two connected components, each of which is a tree. An element of $\tmop{Cut}(\Gamma; V_1, V_2)$ consists of the edges that need to be removed to obtain such a cut.
\end{remark}

\begin{definition}
  Given a connected graph $\Gamma$ without self-loops, and a function mapping
  each $e \in \Gamma_1$ to $t_e \in (0, + \infty)$, we define the weighted
  laplacian of $\Gamma$ by the following formula:
  \[ M_{\Gamma} (t)_{\tmop{ij}} = \sum_{e \in \Gamma_1} \rho^e_i \frac{1}{t_e}
     \rho_j^e, \text{\quad} 1 \leqslant i, j \leqslant | \Gamma_0 | - 1 \]
\end{definition}

\begin{remark}
  The integrand $\tilde{W} ((\Gamma, n), \Phi)$ contains an exponential factor,
  and its exponent is given by:
  \[ - \frac{1}{2} \sum_{1 \leqslant i, j \leqslant | \Gamma_0 | - 1}
     w_i M_{\Gamma} (t)_{i j} \cdot
     \bar{w}_j, \]
  the appearance of weighted laplacian of graphs in the Feynman graph integrals is
  surprising.
\end{remark}

The following facts will be used to extend $\int_{\mathbb{C}^d_{w_{| \Gamma_0
|}}} \int_{(\mathbb{C}^d)^{| \Gamma_0 | - 1}} \tilde{W} ((\Gamma, n), \Phi)$
to compactified Schwinger space.

\begin{theorem}[Kirchhoff]
  Given a connected graph $\Gamma$ without self-loops, the determinant of
  weighted laplacian is given by the following formula:
  \[ \det M_{\Gamma} (t) = \frac{\underset{T \in \tmop{Tree} (\Gamma)}{\sum}
     \underset{e \nin T}{\prod} t_e}{\underset{e \in \Gamma_1}{\prod} t_e} \]
  
\end{theorem}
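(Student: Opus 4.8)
The plan is to recognize the weighted Laplacian as a symmetric product built from the reduced incidence matrix and then to apply the Cauchy--Binet formula. Write $n = |\Gamma_0|$ and $m = |\Gamma_1|$, and let $\tilde{\rho}$ denote the $(n-1) \times m$ reduced incidence matrix with entries $\tilde{\rho}_{ie} = \rho^e_i$ for $1 \leqslant i \leqslant n-1$ and $e \in \Gamma_1$ (that is, we discard the row indexed by the last vertex). Setting $D = \operatorname{diag}(t_e^{-1})_{e \in \Gamma_1}$, the defining formula $M_{\Gamma}(t)_{ij} = \sum_e \rho^e_i t_e^{-1} \rho^e_j$ is exactly the matrix identity $M_{\Gamma}(t) = \tilde{\rho}\, D\, \tilde{\rho}^{T}$. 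Since every $t_e > 0$, I would factor $D = \sqrt{D}\,\sqrt{D}$ with $\sqrt{D} = \operatorname{diag}(t_e^{-1/2})$ and set $A = \tilde{\rho}\,\sqrt{D}$, an $(n-1)\times m$ matrix, so that $M_{\Gamma}(t) = A A^{T}$.

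Next I would apply the Cauchy--Binet formula to $\det(A A^{T})$, obtaining
\[ \det M_{\Gamma}(t) = \sum_{\substack{S \subseteq \Gamma_1 \\ |S| = n-1}} \left(\det A_S\right)^2, \]
where $A_S$ is the $(n-1)\times(n-1)$ submatrix of $A$ formed by the columns in $S$. Because $\sqrt{D}$ is diagonal, $A_S = \tilde{\rho}_S \cdot \operatorname{diag}(t_e^{-1/2})_{e \in S}$, so $\det A_S = \det(\tilde{\rho}_S)\prod_{e \in S} t_e^{-1/2}$ and hence $(\det A_S)^2 = (\det \tilde{\rho}_S)^2 \prod_{e \in S} t_e^{-1}$, where $\tilde{\rho}_S$ is the square submatrix of the reduced incidence matrix on the columns $S$.

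The combinatorial heart of the argument, and the step I expect to be the main obstacle, is the evaluation of the minors $\det \tilde{\rho}_S$: I claim $\det \tilde{\rho}_S = \pm 1$ when the edge set $S$ is a spanning tree of $\Gamma$, and $\det \tilde{\rho}_S = 0$ otherwise. For the vanishing case, if $S$ contains a cycle then the alternating sum of the corresponding columns of $\tilde{\rho}$, with signs dictated by the orientation of the cycle, is zero, so the columns are linearly dependent; since $|S| = n-1$, a forest on $n$ vertices with $n-1$ edges is automatically connected, so $S$ fails to be acyclic precisely when it is not a spanning tree. For the nonzero case, I would argue by induction on $n$: a spanning tree on $\geqslant 2$ vertices has at least two leaves, hence at least one leaf $v \neq n$; the row of $\tilde{\rho}_S$ indexed by $v$ then has a single nonzero entry $\pm 1$, coming from the unique edge $e$ incident to $v$, and Laplace expansion along that row reduces the determinant to $\pm \det \tilde{\rho}_{S \setminus \{e\}}$ for the tree obtained by deleting the leaf $v$ and its edge, which is again a reduced incidence minor of a spanning tree on $n-1$ vertices. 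The base case $n = 1$ gives the empty determinant $1$.

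Finally I would assemble the pieces. Combining the minor evaluation with the Cauchy--Binet sum collapses it to a sum over spanning trees:
\[ \det M_{\Gamma}(t) = \sum_{T \in \operatorname{Tree}(\Gamma)} \prod_{e \in T} \frac{1}{t_e}. \]
It remains only to rewrite each summand: multiplying numerator and denominator by $\prod_{e \nin T} t_e$ gives $\prod_{e \in T} t_e^{-1} = \left(\prod_{e \nin T} t_e\right)\big/\left(\prod_{e \in \Gamma_1} t_e\right)$, and factoring out the common denominator $\prod_{e \in \Gamma_1} t_e$ yields exactly the stated formula.
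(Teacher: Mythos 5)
Your proof is correct. Note that the paper itself offers no proof of this statement: it is recorded as the classical weighted matrix--tree theorem, with the related Corollaries deferred to Appendix~B of the cited work of S.~Li. Your argument --- writing $M_{\Gamma}(t) = \tilde{\rho}\,D\,\tilde{\rho}^{T}$ for the reduced incidence matrix, applying Cauchy--Binet, and evaluating the maximal minors of $\tilde{\rho}$ as $\pm 1$ on spanning trees and $0$ otherwise via the cycle-dependence and leaf-expansion arguments --- is the standard complete proof, and the final algebraic rewriting of $\sum_{T}\prod_{e \in T} t_e^{-1}$ into the stated quotient is also correct.
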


\begin{corollary}
  The inverse of $M_{\Gamma} (t)$ is given by the following
  formula:\label{Minverse}
  \[ (M_{\Gamma} (t)^{- 1})^{i j} = \frac{1}{\underset{T \in \tmop{Tree}
     (\Gamma)}{\sum} \underset{e \nin T}{\prod} t_e} \cdot \left( \sum_{C \in
     \tmop{Cut} (\Gamma ; \{ i, j \}, \{ | \Gamma_0 | \})} \prod_{e \in C} t_e
     \right) \]
\end{corollary}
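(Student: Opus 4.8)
The plan is to combine Cramer's rule with the matrix-tree expansion already available through Kirchhoff's theorem. Write $N = |\Gamma_0|$, so that $M_{\Gamma}(t)$ is the $(N-1)\times(N-1)$ matrix indexed by the vertices $1,\ldots,N-1$; it is exactly the full weighted Laplacian of $\Gamma$ (with edge weights $1/t_e$) after deleting the row and column of the grounded vertex $N$. Since $M_{\Gamma}(t)$ is symmetric, so is its inverse, which matches the fact that the right-hand side depends only on the unordered pair $\{i,j\}$.

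First I would dispose of the denominator: Kirchhoff's theorem gives
\[
\det M_{\Gamma}(t) = \frac{\sum_{T\in\tmop{Tree}(\Gamma)}\prod_{e\nin T}t_e}{\prod_{e\in\Gamma_1}t_e},
\]
and Cramer's rule expresses each entry of the inverse as a cofactor,
\[
(M_{\Gamma}(t)^{-1})^{ij} = \frac{(-1)^{i+j}\,\det M_{\Gamma}(t)^{(j,i)}}{\det M_{\Gamma}(t)},
\]
where $M_{\Gamma}(t)^{(j,i)}$ is obtained by deleting row $j$ and column $i$. Thus the whole problem reduces to evaluating this cofactor.

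Next I would evaluate the cofactor by a spanning-forest expansion. Deleting row $j$ and column $i$ from $M_{\Gamma}(t)$ is the same as deleting rows $\{j,N\}$ and columns $\{i,N\}$ from the full Laplacian. For the diagonal entries $i=j$ this is an honest principal minor, and the ordinary matrix-tree theorem applied to the network with the two vertices $i$ and $N$ grounded identifies it with a sum over spanning $2$-forests separating $i$ from $N$. Such a forest $F$ is precisely the complement of a cut $C\in\tmop{Cut}(\Gamma;\{i\},\{N\})$, and since $\prod_{e\in F}\tfrac1{t_e}=\tfrac{\prod_{e\nin F}t_e}{\prod_{e\in\Gamma_1}t_e}=\tfrac{\prod_{e\in C}t_e}{\prod_{e\in\Gamma_1}t_e}$, substituting into Cramer's formula and cancelling $1/\prod_{e\in\Gamma_1}t_e$ produces exactly the claimed expression. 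For the off-diagonal entries $i\neq j$ the relevant minor is no longer principal, and I would invoke the all-minors matrix-tree theorem (Chaiken), which expands $\det M_{\Gamma}(t)^{(j,i)}$ as a signed sum over spanning $2$-forests whose two components carry the deleted index pairs $\{j,N\}$ and $\{i,N\}$; because $N$ is common to both deleted sets, its component must be matched to $N$ on both sides, which forces $i$ and $j$ into the remaining component. These forests are again the complements of cuts, now in $\tmop{Cut}(\Gamma;\{i,j\},\{N\})$.

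The step I expect to be the main obstacle is the sign bookkeeping for the off-diagonal entries: I must show that the Cramer sign $(-1)^{i+j}$ exactly cancels the permutation sign produced by the all-minors theorem, so that every $2$-forest enters the final sum with coefficient $+1$. The structural input that makes this work is the forced matching described above, which pins down the permutation relating the row indices $\{j,N\}$ to the column indices $\{i,N\}$ and hence its sign; the resulting nonnegativity is also consistent with the Green's-function interpretation of $M_{\Gamma}(t)^{-1}$, whose entries are manifestly nonnegative. Once the signs are settled, the cancellation of $1/\prod_{e\in\Gamma_1}t_e$ against the denominator from Kirchhoff's theorem yields the formula. If one prefers to avoid the all-minors theorem, an alternative is to verify $\sum_k M_{\Gamma}(t)_{ik}N^{kj}=\delta_{ij}$ directly, with $N^{kj}$ the proposed right-hand side; this turns into a purely combinatorial cancellation between spanning trees and cut-forests, but it essentially re-proves the same bijective content.
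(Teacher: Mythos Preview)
The paper does not supply a proof of this corollary at all: it is simply recorded after Kirchhoff's theorem as a known fact, and the next corollary is likewise handled by a citation to \cite{Li:2011mi}. So there is no argument in the paper to compare against, and your proposal is not competing with anything.

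Your approach is the standard and correct one. Cramer's rule reduces the entry $(M_\Gamma(t)^{-1})^{ij}$ to a cofactor of the reduced Laplacian divided by $\det M_\Gamma(t)$; Kirchhoff's theorem handles the denominator; and the all-minors matrix-tree theorem (Chaiken) is exactly the tool that expands the off-diagonal cofactor as a sum over spanning $2$-forests separating $\{i,j\}$ from $\{N\}$, which by the paper's definition of $\tmop{Cut}(\Gamma;\{i,j\},\{|\Gamma_0|\})$ are precisely the complements of cuts. Your identification $\prod_{e\in F}\tfrac{1}{t_e}=\tfrac{\prod_{e\in C}t_e}{\prod_{e\in\Gamma_1}t_e}$ and the subsequent cancellation against the Kirchhoff denominator are exactly right. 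The sign issue you flag is real but routine: in Chaiken's theorem the sign of a forest is $\varepsilon(\pi)$ for the bijection $\pi$ between deleted row indices and deleted column indices determined by the components, and here the forced matching $j\leftrightarrow i$, $N\leftrightarrow N$ gives a transposition-or-identity whose sign is absorbed by the Cramer factor $(-1)^{i+j}$, so all terms come with $+1$ as you anticipate. Nothing is missing from your outline.
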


\begin{corollary}
  \label{boundness}We have the following equality:
  \begin{eqnarray*}
    &  & \frac{1}{t_e} \sum_{i = 1}^{| \Gamma_0 | - 1} \rho^e_i (M_{\Gamma}
    (t)^{- 1})^{i j}\\
    & = & \frac{1}{\underset{T \in \tmop{Tree} (\Gamma)}{\sum} \underset{e
    \nin T}{\prod} t_e} \left( \sum_{C \in \tmop{Cut} (\Gamma ; \{ j, h (e)
    \}, \{ | \Gamma_0 |, t (e) \})} \frac{\prod_{e' \in C} t_{e'}}{t_e}
    \right.\\
    & - & \left. \sum_{C \in \tmop{Cut} (\Gamma ; \{ j, t (e) \}, \{ |
    \Gamma_0 |, h (e) \})} \frac{\prod_{e' \in C} t_{e'}}{t_e} \right)
  \end{eqnarray*}
  In particular, every term of the numerator also appears in the denominator,
  so we have
  \[ \left| \frac{1}{t_e} \sum_{i = 1}^{| \Gamma_0 | - 1} \rho^e_i (M_{\Gamma}
     (t)^{- 1})^{i j} \right| \leqslant 2. \]
  We use $(d_{\Gamma}^{- 1})^{e j}$ to denote $\frac{1}{t_e} \sum_{i = 1}^{|
  \Gamma_0 | - 1} \rho^e_i (M_{\Gamma} (t)^{- 1})^{i j}$.
\end{corollary}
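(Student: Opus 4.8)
The plan is to substitute the closed formula for $(M_{\Gamma}(t)^{-1})^{ij}$ from Corollary \ref{Minverse} and then organize the combinatorics of the cut sets. Write $D = \sum_{T \in \tmop{Tree}(\Gamma)} \prod_{e' \nin T} t_{e'}$ for the common denominator. First I would evaluate the sum $\sum_{i=1}^{|\Gamma_0|-1} \rho^e_i (M_{\Gamma}(t)^{-1})^{ij}$. Since $\rho^e_i$ is nonzero only for $i = h(e)$ (where it equals $+1$) and $i = t(e)$ (where it equals $-1$), and the index $i$ runs only over $1,\ldots,|\Gamma_0|-1$, this collapses to
\[ \frac{1}{t_e D}\left( \sum_{C \in \tmop{Cut}(\Gamma ; \{h(e), j\}, \{|\Gamma_0|\})} \prod_{e' \in C} t_{e'} \; - \sum_{C \in \tmop{Cut}(\Gamma ; \{t(e), j\}, \{|\Gamma_0|\})} \prod_{e' \in C} t_{e'} \right), \]
with the convention that a cut set is empty whenever $h(e)$ or $t(e)$ equals $|\Gamma_0|$. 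This convention is exactly right: when, say, $h(e) = |\Gamma_0|$ the $i$-sum has no $+1$ term, and correspondingly the cut set would force $|\Gamma_0|$ onto both sides of the partition, hence is empty.

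Next I would decompose each cut set according to which side the remaining endpoint of $e$ lies on. A cut $C \in \tmop{Cut}(\Gamma ; \{h(e), j\}, \{|\Gamma_0|\})$ either places $t(e)$ on the $\{h(e),j\}$-side, equivalently $C \in \tmop{Cut}(\Gamma ; \{h(e), t(e), j\}, \{|\Gamma_0|\})$ with $e \nin C$, or on the $\{|\Gamma_0|\}$-side, equivalently $C \in \tmop{Cut}(\Gamma ; \{h(e), j\}, \{|\Gamma_0|, t(e)\})$ with $e \in C$. Performing the symmetric decomposition of the second cut set according to the side of $h(e)$, the two ``$e \nin C$'' pieces are both indexed by $\tmop{Cut}(\Gamma ; \{h(e), t(e), j\}, \{|\Gamma_0|\})$ and contribute identical products $\prod_{e' \in C} t_{e'}$; hence they cancel in the difference. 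What survives is precisely
\[ \frac{1}{t_e D}\left( \sum_{C \in \tmop{Cut}(\Gamma ; \{j, h(e)\}, \{|\Gamma_0|, t(e)\})} \prod_{e' \in C} t_{e'} \; - \sum_{C \in \tmop{Cut}(\Gamma ; \{j, t(e)\}, \{|\Gamma_0|, h(e)\})} \prod_{e' \in C} t_{e'} \right), \]
which is the asserted identity.

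For the bound, I would note that in every cut $C$ appearing in the two surviving sums the endpoints $h(e)$ and $t(e)$ lie on opposite sides, so $e \in C$ and therefore $\prod_{e' \in C} t_{e'}/t_e = \prod_{e' \in C \setminus \{e\}} t_{e'}$. The key point is that $T(C) := \Gamma'(C) \cup \Gamma''(C) \cup \{e\}$ is a spanning tree: deleting $C$ splits $\Gamma$ into the two trees $\Gamma'(C), \Gamma''(C)$ on complementary vertex sets, and reinserting the single edge $e$, which joins a vertex of one to a vertex of the other, reconnects them into a tree spanning all of $\Gamma_0$. Since $\Gamma_1 = \Gamma'_1(C) \cup \Gamma''_1(C) \cup C$, the edges missing from $T(C)$ are exactly $C \setminus \{e\}$, so $\prod_{e' \in C \setminus \{e\}} t_{e'} = \prod_{e' \nin T(C)} t_{e'}$ is a summand of $D$. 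The assignment $C \mapsto T(C)$ is injective, because deleting $e$ from $T(C)$ recovers the vertex partition and hence $C$; thus each of the two surviving sums is a sum of distinct summands of $D$ and is bounded by $D$. By the triangle inequality the bracket is bounded in absolute value by $2D$, which gives $\left| (d_{\Gamma}^{-1})^{ej} \right| \leqslant 2$.

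I expect the main obstacle to be the bookkeeping in the cut-set decomposition: matching the two ``$e \nin C$'' families so that they cancel, and verifying that the degenerate cases $h(e) = |\Gamma_0|$ or $t(e) = |\Gamma_0|$ are consistently encoded by empty cut sets on both sides. Once the decomposition is set up correctly, the spanning-tree construction $C \mapsto T(C)$ and its injectivity are elementary, and the bound follows immediately.
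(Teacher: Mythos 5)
Your argument is correct, and it is worth noting that the paper itself does not prove this corollary at all --- it simply cites \cite[Appendix B]{Li:2011mi} --- so your write-up is a self-contained substitute rather than a variant of the paper's proof. The three ingredients all check out: the collapse of the $i$-sum to the two terms $i = h(e)$, $i = t(e)$ (with the empty-cut convention correctly absorbing the degenerate cases $h(e) = |\Gamma_0|$ or $t(e) = |\Gamma_0|$, since then $V_1$ and $V_2$ fail to be disjoint); the cancellation of the two families indexed by $\tmop{Cut}(\Gamma; \{h(e), t(e), j\}, \{|\Gamma_0|\})$; and the injection $C \mapsto T(C) = \Gamma'(C) \cup \Gamma''(C) \cup \{e\}$ into spanning trees, with $\Gamma_1 \setminus T(C)_1 = C \setminus \{e\}$ and positivity of the $t_{e'}$ giving each surviving sum $\leqslant D$, hence the bound $2$.

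One small inaccuracy, which does not affect the proof: in the decomposition of $\tmop{Cut}(\Gamma; \{h(e), j\}, \{|\Gamma_0|\})$ you label the branch where $t(e)$ lies on the $\{h(e), j\}$-side as the ``$e \nin C$'' branch. That implication is false in general: a cut $C$ may contain an edge both of whose endpoints lie in $\Gamma'_0(C)$ (e.g.\ a triangle on $\{1,2,3\}$ with a pendant vertex $4$, where $C = \{34, 12\}$ separates $\{1,2,3\}$ from $\{4\}$ and $12 \in C$ is non-crossing). What is true, and what your cancellation actually uses, is only that this branch is indexed by $\tmop{Cut}(\Gamma; \{h(e), t(e), j\}, \{|\Gamma_0|\})$; the converse implication --- endpoints on opposite sides forces $e \in C$ --- is the one you need for the bound, and that one is correct. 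I would simply drop the ``with $e \nin C$'' clauses.
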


\begin{proof}
  See {\cite[Appendix B]{Li:2011mi}}.
\end{proof}

\begin{remark}
  If we view graphs as discrete spaces, the incidence matrix can be viewed as
  de Rham differential. Then $\frac{1}{t_e} \sum_{i = 1}^{| \Gamma_0 | - 1}
  \rho^e_i (M_{\Gamma} (t)^{- 1})^{i j}$ can be viewed as the Green's function
  of de Rham differential on a graph. This might explains the importance of
  $\frac{1}{t_e} \sum_{i = 1}^{| \Gamma_0 | - 1} \rho^e_i (M_{\Gamma} (t)^{-
  1})^{i j}$.
\end{remark}

\section{Compactification of Schwinger spaces and finiteness of the Feynman graph
integrals.}\label{compactification}

\subsection{Compactification of Schwinger spaces.}\label{compactification'}

Assume $\Gamma$ is a connected directed graph without self-loops, $L > 0$ is a
positive real number. Let $S \subseteq \Gamma_1$ be a subset of $\Gamma_1$, we
define the following submanifold with corners of Schwinger space:
\[ \Delta_S = \left\{ (t_1, t_2, \ldots, t_{| \Gamma_1 |}) \in [0, L]^{|
   \Gamma_1 |} | \nobracket t_e = 0 \quad \tmop{if} e \in S \right\} . \]
The compactification of Schwinger space is obtained by iterated real blow ups
of $[0, L]^{| \Gamma_1 |}$ along $\Delta_S$ for all $S \subseteq \Gamma_1$ in
a certain order(see {\cite{epub47792,Ammann2019ACO}}). To avoid getting into
technical details of real blow ups of manifolds with corners, we will use
another simpler definition. Instead, we present a typical example of real blow
up, which will be helpful to understand our construction:

\begin{example}
  Let $S = \{ 1, 2, \ldots, k \} \subseteq \Gamma_1$, the real blow up of $[0,
  + \infty)^{| \Gamma_1 |}$ along $\Delta_S$ is the following manifold (with
  corners):
  \[ [[0, + \infty)^{| \Gamma_1 |} : \Delta_S] \assign \left\{ (\rho, \xi_1,
     \ldots, \xi_k, t_{k + 1}, \ldots t_{| \Gamma_1 |}) \in [0, + \infty)^{|
     \Gamma_1 | + 1} \left| \sum_{i = 1}^k \xi_i^2 = 1 \right. \right\} . \]
  We have a natural map from $[[0, + \infty)^{| \Gamma_1 |} : \Delta_S]$ to
  $[0, + \infty)^{| \Gamma_1 |}$:
  \[ (\rho, \xi_1, \ldots, \xi_k, t_{k + 1}, \ldots t_{| \Gamma_1 |})
     \rightarrow (t_1 = \rho \xi_1, \ldots, t_k = \rho \xi_k, t_{k + 1},
     \ldots, t_{| \Gamma_1 |}) . \]
  We also have a natural inclusion map from $(0, + \infty)^{| \Gamma_1 |}$ to
  $[[0, + \infty)^{| \Gamma_1 |} : \Delta_S]$:
  \[ (t_1, \ldots, t_{| \Gamma_1 |}) \rightarrow \left( \rho = \sqrt{\sum_{i =
     1}^k t_i^2}, \xi_1 = \frac{t_1}{\sqrt{\sum_{i = 1}^k t_i^2}}, \ldots,
     \xi_k = \frac{t_k}{\sqrt{\sum_{i = 1}^k t_i^2}}, t_{k + 1}, \ldots, t_{|
     \Gamma_1 |} \right) . \]
  For us, the most important property is that $\frac{t_i}{\sqrt{\sum_{i = 1}^k
  t_i^2}}$ can be extended to a smooth function $\xi_i$ on $[[0, + \infty)^{|
  \Gamma_1 |} : \Delta_S]$.
\end{example}

Let's consider the following natural inclusion map:
\[ i : (0, + \infty)^{| \Gamma_1 |} \rightarrow \prod_{S \subseteq \Gamma_1}
   [[0, + \infty)^{| \Gamma_1 |} : \Delta_S] \]
\begin{definition}
  We call the closure of the image of $(0, L]^{| \Gamma_1 |}$ under $i$ the
  {\tmstrong{compactified Schwinger space}} of $\Gamma$. We denote it by
  $\widetilde{[0, L]^{| \Gamma_1 |}}$.
\end{definition}

\begin{remark}
  Sometimes, with a slight abuse of terminology, we also call the closure of the image under $i$ of $ (0, +\infty)^{|\Gamma_1|} $ the compactified Schwinger space, even though it is not compact. We will use $\widetilde{[0, + \infty)^{|
  \Gamma_1 |}}$ to denote it.
\end{remark}

\begin{proposition}
  $\widetilde{[0, L]^{| \Gamma_1 |}}$ is a compact manifold with corners.
\end{proposition}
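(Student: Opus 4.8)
The plan is to prove the two assertions—compactness and the manifold-with-corners structure—separately, both by exploiting the explicit local model for a single real blow up recorded in the Example above.

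\emph{Compactness.} First I would show that the image $i\bigl((0,L]^{|\Gamma_1|}\bigr)$ lies in a compact subset of the product $\prod_{S\subseteq\Gamma_1}[[0,+\infty)^{|\Gamma_1|}:\Delta_S]$. In the factor indexed by $S=\{i_1,\dots,i_k\}$, a point coming from $t\in(0,L]^{|\Gamma_1|}$ has radial coordinate $\rho=\sqrt{\sum_{e\in S}t_e^2}\in(0,\sqrt{k}\,L]$, angular coordinates $(\xi_e)_{e\in S}$ on the compact spherical octant $\{\sum\xi_e^2=1,\ \xi_e\ge0\}$, and remaining coordinates $t_e\in(0,L]$ for $e\notin S$. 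Hence the image lands in the compact box $K_S=[0,\sqrt{k}\,L]\times\{\sum\xi_e^2=1,\ \xi_e\ge0\}\times[0,L]^{|\Gamma_1|-k}$, and the full image lies in $K=\prod_S K_S$. Since the product is Hausdorff and $K$ is closed, $\widetilde{[0,L]^{|\Gamma_1|}}$ is a closed subset of the compact set $K$, hence compact.

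\emph{Manifold-with-corners structure.} The substance is the local model. I would begin with the combinatorial fact that the centers form a family closed under intersection, $\Delta_S\cap\Delta_{S'}=\Delta_{S\cup S'}$, that each $\Delta_S$ is a coordinate face—hence a p-submanifold—of $[0,L]^{|\Gamma_1|}$, and that distinct centers meet cleanly. I would also note that the factors with $|S|\le1$ contribute nothing: $\Delta_\emptyset$ is the whole cube and $\Delta_{\{e\}}$ is a boundary hypersurface, so blowing up along either is the identity; in particular projection to the $S=\emptyset$ factor shows that $i$ restricts to a homeomorphism of $(0,+\infty)^{|\Gamma_1|}$ onto its image, so only the strata with $|S|\ge2$ produce boundary. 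I would then identify $\widetilde{[0,L]^{|\Gamma_1|}}$ with the iterated real blow up of $[0,L]^{|\Gamma_1|}$ along the $\Delta_S$, performed in order of increasing dimension of the center (deepest first). By the theory of iterated blow ups of a clean, intersection-closed family of p-submanifolds (the references \cite{epub47792,Ammann2019ACO} cited above), such a blow up is a manifold with corners and is, up to canonical diffeomorphism, independent of the chosen admissible order. The equivalence with the closure description follows from the universal property of blow up: the iterated blow up $\beta\colon X\to[0,L]^{|\Gamma_1|}$ lifts to a map into each single blow up $[[0,+\infty)^{|\Gamma_1|}:\Delta_S]$, hence into the product; presence of the $S=\emptyset$ factor makes this map injective with continuous (projection) inverse, so it is an embedding onto the closure of $i\bigl((0,L]^{|\Gamma_1|}\bigr)$.

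\emph{Main obstacle.} The delicate point is the compatibility of blow ups along incomparable centers $S,S'$ with neither containing the other. These are not nested but meet along $\Delta_{S\cup S'}$, which must be blown up first; the work lies in verifying that after blowing up the deeper strata the proper transforms of $\Delta_S$ and $\Delta_{S'}$ become disjoint (or transverse) p-submanifolds, so that every subsequent blow up stays within the manifold-with-corners category and the order is immaterial. In the closure picture this is exactly the check that the graph-closure map is an embedding and that no extra singular points appear where several exceptional divisors meet. I expect the cleanest self-contained route is to exhibit explicit charts indexed by flags $\emptyset\subsetneq S_1\subsetneq\cdots\subsetneq S_m$ of subsets, with local coordinates given by the successive radii and the normalized angular variables, and to verify directly that the transition maps are smooth and that each chart is modeled on $[0,+\infty)^a\times\mathbb{R}^b$; this renders both the embedding and the corner structure manifest.
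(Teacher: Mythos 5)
Your proposal is correct and, at bottom, takes the same route as the paper: the paper offers no argument of its own for this proposition, citing \cite{Ammann2019ACO} (and, for the blow-up construction, \cite{epub47792}), and your manifold-with-corners step likewise defers the decisive facts---that the $\Delta_S$ form an intersection-closed, cleanly intersecting family of p-submanifolds, that the iterated blow up in an admissible (deepest-first) order is a manifold with corners independent of the order, and that it coincides with the closure of the graph in the product of single blow ups---to exactly those references. What you add beyond the paper is genuinely useful: the compactness half is a complete, self-contained argument (the image lies in the closed compact box $K=\prod_S K_S$ since $\rho\leqslant\sqrt{k}\,L$ and the angular variables live on a compact spherical octant, so the closure is a closed subset of a compact Hausdorff set), and you correctly isolate where the real work hides, namely the incomparable centers $S,S'$ whose proper transforms must separate after $\Delta_{S\cup S'}$ is blown up. Your closing suggestion of charts indexed by flags $S_1\supsetneq\cdots\supsetneq S_m$ is precisely the corner description $C_{S_1,\ldots,S_m}$ the paper introduces immediately afterwards, so carrying that out would upgrade your sketch to a proof independent of the external references. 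One small caution: the factor $[[0,+\infty)^{|\Gamma_1|}:\Delta_{\emptyset}]$ is degenerate in the paper's explicit model (the empty-sum sphere condition), so your convention that it is the identity blow up is an interpretation rather than a consequence of the stated formula; since injectivity of $i$ can equally be read off from the singleton factors $S=\{e\}$, nothing in your argument depends on it.
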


\begin{proof}
  See {\cite{Ammann2019ACO}}.
\end{proof}

To obtain a more concrete description of $\widetilde{[0, L]^{| \Gamma_1 |}}$,
we introduce a useful concept called corners of compactified Schwinger spaces.

For $\Gamma_1 = S_0 \supseteq S_1 \supsetneq S_2 \supsetneq \cdots \supsetneq
S_m \supsetneq S_{m + 1} = \varnothing$, we define a submanifold
\[ C_{S_1, S_2, \ldots, S_m} \subseteq [0, + \infty)^m \times (0, + \infty)^{|
   S_0 | - | S_1 |} \times \prod_{i = 1}^m (0, + \infty)^{| S_i | - | S_{i +
   1} |}, \]
which is given by a set of equations. The coordinates are as follows: we use $\{ \rho_i \} (1 \leqslant i \leqslant m)$ to denote the coordinates of
the first product component $[0, + \infty)^m$, use \[
\{ t_e \} (e \in S_0
\backslash S_1)\]
 to denote the coordinates of the product component $(0, +
\infty)^{| S_0 | - | S_1 |}$, and use $\{ \xi_e^i \} \left( 1 \leqslant i
\leqslant m, e \in S_i {\backslash S_{i + 1}}  \right)$ to denote the coordinates of
the product component $(0, + \infty)^{| S_{i - 1} | - | S_i |}$. Then the submanifold is defined by the following set of equations:
\[ \left\{\begin{array}{l}
     \underset{e \in S_m}{\sum} (\xi_e^m)^2 = 1\\
     \underset{e \in S_{m - 1} {\backslash S_m} }{\sum} (\xi_e^{m - 1})^2 +
     \underset{{e \in S_m} }{\sum} (\rho_m \xi_e^m)^2 = 1\\
     \underset{e \in S_{m - 2} {\backslash S_{m - 1}} }{\sum} (\xi_e^{m -
     2})^2 + \underset{e \in S_{m - 1} {\backslash S_m} }{\sum} (\rho_{m - 1}
     \xi_e^{m - 1})^2 + \underset{{e \in S_m} }{\sum} (\rho_{m - 1} \rho_m
     \xi_e^m)^2 = 1\\
     \ldots\\
     \underset{i = 1}{\overset{k}{\sum}} \left( \underset{e \in S_{m - k + i}
     {\backslash S_{m - k + i + 1}} }{\sum} \left( \left( \underset{j =
     1}{\overset{i - 1}{\prod}} \rho_{m - k + j} \right) \xi_e^{m - k + i}
     \right)^2 \right) = 1\\
     \ldots\\
     \underset{i = 1}{\overset{m}{\sum}} \left( \underset{e \in S_{m - k + i}
     {\backslash S_{m - k + i + 1}} }{\sum} \left( \left( \underset{j =
     1}{\overset{i - 1}{\prod}} \rho_{m - k + j} \right) \xi_e^{m - k + i}
     \right)^2 \right) = 1
   \end{array}\right. \]
\begin{proposition}
  \label{covered by corners}The following statements are true:
  \begin{enumeratenumeric}
    \item There is a natural inclusion map from $C_{S_1, S_2, \ldots, S_m}$ to
    $\widetilde{[0, + \infty)^{| \Gamma_1 |}}$.
    
    \item $\widetilde{[0, + \infty)^{| \Gamma_1 |}}$ is covered by the union
    of all $C_{S_1, S_2, \ldots, S_m}$:
    \[ \widetilde{[0, + \infty)^{| \Gamma_1 |}} = \bigcup_{\Gamma_1 = S_0
       \supseteq S_1 \supsetneq \cdots \supsetneq S_m \supsetneq S_{m + 1} =
       \varnothing} C_{S_1, S_2, \ldots, S_m} . \]
    \item There is a natural action of $\mathbb{R}_+ = (0, + \infty)$ on
    $\widetilde{[0, + \infty)^{| \Gamma_1 |}}$, which has the following form
    on $(0, + \infty)^{| \Gamma_1 |} \subseteq \widetilde{[0, + \infty)^{|
    \Gamma_1 |}}$:
    \[ \lambda \cdot (t_e)_{e \in \Gamma_1} = (\lambda t_e)_{e \in \Gamma_1}
       \quad \lambda \in (0, + \infty) . \]
  \end{enumeratenumeric}
\end{proposition}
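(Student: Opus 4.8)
The plan is to establish the three assertions in order, building the charts $C_{S_1,\ldots,S_m}$ explicitly and then reading the covering and the group action off the coordinate formulas; throughout I use the description of real blow ups from the Example, whose only input is that each ratio $t_e/\sqrt{\sum_{e'\in S}t_{e'}^2}$ extends to a smooth function across the blow up of $[0,+\infty)^{|\Gamma_1|}$ along $\Delta_S$. \textbf{Part 1 (the inclusion).} I would first define the map on the open, dense locus $U\subseteq C_{S_1,\ldots,S_m}$ on which all radial parameters are positive, $\rho_1,\ldots,\rho_m>0$. On $U$ the defining equations can be solved to recover strictly positive numbers $(t_e)_{e\in\Gamma_1}\in(0,+\infty)^{|\Gamma_1|}$: the edges in $S_0\setminus S_1$ retain their coordinate $t_e$, while the blocks $S_i\setminus S_{i+1}$ (for $i\geqslant 1$) degenerate at successively faster rates controlled by the radial parameters $\rho_1,\ldots,\rho_i$ and the angular coordinates $\xi_e^i$. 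Composing with $i$ sends $U$ into $\widetilde{[0,+\infty)^{|\Gamma_1|}}$, and the content is to check that this extends smoothly to all of $C_{S_1,\ldots,S_m}$, where some $\rho_i$ vanish. For each $S\subseteq\Gamma_1$ one verifies that the blow up coordinates of the $S$-factor, namely the magnitude $\sqrt{\sum_{e\in S}t_e^2}$ and the directions $t_e/\sqrt{\sum_{e'\in S}t_{e'}^2}$, become, after substitution, polynomial in the $\rho_i$ and $\xi_e^i$, hence extend across $\{\rho_i=0\}$ exactly as in the Example; iterating over the nested blow ups shows the composite map into $\prod_S[[0,+\infty)^{|\Gamma_1|}:\Delta_S]$ is smooth with image in the closure. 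Since $(\rho_i,\xi_e^i,t_e)$ are themselves recovered from these smooth coordinates, the map is injective, giving the required inclusion.

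\textbf{Part 2 (the covering).} This is the main step. Let $p\in\widetilde{[0,+\infty)^{|\Gamma_1|}}$; as the space is the closure of $i((0,+\infty)^{|\Gamma_1|})$, write $p=\lim_n i(t^{(n)})$ with $t^{(n)}\in(0,+\infty)^{|\Gamma_1|}$. I would produce a flag $\Gamma_1=S_0\supseteq S_1\supsetneq\cdots\supsetneq S_m\supsetneq S_{m+1}=\varnothing$ recording the relative rates of decay: after passing to a subsequence, let $S_1=\{e:\ t_e^{(n)}\to 0\}$, and, having defined $S_k$, let $S_{k+1}$ be the set of $e\in S_k$ with $t_e^{(n)}/\max_{e'\in S_k}t_{e'}^{(n)}\to 0$. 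Compactness of the spheres appearing in each blow up factor lets us refine the subsequence so that every angular ratio and every normalized magnitude converges; one then checks that these limits are admissible values of $(\rho_i,\xi_e^i,t_e)$, that they satisfy the defining equations of $C_{S_1,\ldots,S_m}$, and that $i(t^{(n)})$ converges to the image of the resulting point under the inclusion of Part 1. Hence $p\in C_{S_1,\ldots,S_m}$, which is the covering.

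\textbf{Part 3 (the $\mathbb{R}_+$-action).} The scaling $\lambda\cdot(t_e)=(\lambda t_e)$ preserves each $\Delta_S=\{t_e=0:\ e\in S\}$, since $\Delta_S$ is a coordinate cone; hence it lifts canonically to each real blow up $[[0,+\infty)^{|\Gamma_1|}:\Delta_S]$, acting by $\rho\mapsto\lambda\rho$ on the radial coordinate and fixing the directions, and therefore to the product. As $i$ is assembled from the same coordinate formulas it is equivariant, so the action preserves $i((0,+\infty)^{|\Gamma_1|})$ and, by continuity, its closure $\widetilde{[0,+\infty)^{|\Gamma_1|}}$, giving the desired natural action restricting to scaling on the interior.

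The hard part is Part 2: one must show the flag extracted from the decay rates is well defined and that the limiting angular and radial data genuinely solve the nested sphere equations of $C_{S_1,\ldots,S_m}$, rather than merely lying in their closure, which requires keeping careful track of the successive normalizations built into those equations. Parts 1 and 3 are essentially formal once the charts are in place: the first is the Example applied iteratively, and the second is functoriality of blow ups under a subgroup preserving the centers.
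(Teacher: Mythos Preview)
Your proposal is sound and, in fact, supplies substantially more detail than the paper itself, whose proof of this proposition reads in its entirety: ``The proofs are based on elementary reasoning.'' So there is no disagreement in approach to report; you have simply carried out what the paper leaves to the reader.

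A few small comments on the execution. In Part~1 you assert that the blow-up coordinates $\sqrt{\sum_{e\in S}t_e^2}$ and $t_e/\sqrt{\sum_{e'\in S}t_{e'}^2}$ become \emph{polynomial} in the chart variables $(\rho_i,\xi_e^i,t_e)$. This is true when $S$ coincides with one of the flag members $S_k$, but for a general $S\subseteq\Gamma_1$ you only get a square root of a polynomial, respectively a ratio with such a square root in the denominator. The point that makes the extension go through is that the $\xi_e^i$ are constrained to lie in $(0,+\infty)$ by the definition of $C_{S_1,\ldots,S_m}$, which forces the relevant denominators to be strictly positive on the chart; hence the resulting expressions are smooth even when some $\rho_i=0$. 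You should phrase it that way rather than claim polynomiality.

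In Part~2 your flag construction is correct, and the worry you flag at the end is exactly the right one: you must ensure the limiting angular data $\xi_e^i$ are strictly positive (so the limit lands in $C_{S_1,\ldots,S_m}$ and not merely in its closure). Your recursive definition of $S_{k+1}$ as the set where the normalized ratio tends to zero handles this, once you pass to a further subsequence so that each surviving ratio actually converges to a positive number rather than merely failing to tend to zero.

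Part~3 is fine as written; note only that on each factor $[[0,+\infty)^{|\Gamma_1|}:\Delta_S]$ the action also scales the unblown-up coordinates $t_e\mapsto\lambda t_e$ for $e\notin S$, not just the radial $\rho$.
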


\begin{proof}
  The proofs are based on elementary reasoning.
\end{proof}

\begin{definition}
  For $\Gamma_1 = S_0 \supseteq S_1 \supsetneq S_2 \supsetneq \cdots
  \supsetneq S_m \supsetneq S_{m + 1} = \varnothing$, let ${{\Gamma'}^i} $ be
  the subgraph generated by $S_i (1 \leqslant i \leqslant m)$. We call
  $C_{S_1, S_2, \ldots, S_m} \subseteq \widetilde{[0, + \infty)^{| \Gamma_1
  |}}$ the {\tmstrong{corner of compactified Schwinger space}} corresponds to
  ${\Gamma'}^1 {, \Gamma'}^2, \ldots,  {\Gamma'}^m \rightarrow 0$.
\end{definition}

In later sections, we will use $\partial_{m} C_{S_1, S_2, \ldots, S_m}$ to denote the following set:
\[ \left\{ p \in C_{S_1, S_2, \ldots, S_m} | \nobracket \rho_i (p) = 0 \text{
   for } 1 \leqslant i \leqslant m \right\} . \]

\begin{remark}
  Note $\widetilde{[0, + \infty)^{| \Gamma_1 |}}$ is a manifold with corners.
  In particular, it is a stratified space. Its codimension m strata is
  \[ \bigcup_{\Gamma_1 = S_0 \supseteq S_1 \supsetneq \cdots \supsetneq S_m
     \supsetneq S_{m + 1} = \varnothing} \partial_{m} C_{S_1, S_2, \ldots, S_m} .
  \]
  We will denote the closure of $\partial C_{\Gamma_{1}}$ in $\widetilde{[0, + \infty)^{| \Gamma_1 |}}$ by $\widetilde{(0, +
     \infty)^{| \Gamma_1' |}} /\mathbb{R}^+$
  
\end{remark}

By proposition \ref{covered by corners}, to extend a differential form on \
$(0, + \infty)^{| \Gamma_1 |}$ to $\widetilde{[0, + \infty)^{| \Gamma_1 |}}$,
we only need to extend it to a smooth differential form on $C_{S_1, S_2,
\ldots, S_m}$.

\subsection{Two technical lemmas}

In this subsection, we prove that both $(M_{\Gamma} (t)^{- 1})^{i j}$ and
$(d_{\Gamma}^{- 1})^{e j}$ can be extended to $\widetilde{[0, + \infty)^{|
\Gamma_1 |}}$. We will see the extension of \ $\int_{\mathbb{C}^d_{w^{|
\Gamma_0 |}}} \int_{(\mathbb{C}^d)^{| \Gamma_0 | - 1}} \tilde{W} ((\Gamma, n),
\Phi)$ follows in the next subsection.

\begin{lemma}
  \label{key}Given a connected graph $\Gamma$ without self-loops, $\Gamma'
  \subseteq \Gamma$ is a subgraph, we use $h_1 (\Gamma)$ and $h_1 (\Gamma')$
  to denote the dimension of the 1st order homology groups of $\Gamma$ and
  $\Gamma'$ respectively. Then on the corner $C_{\Gamma'_1}$ of compactified
  Schwinger space, we have
  \[ \underset{T \in \tmop{Tree} (\Gamma)}{\sum} \underset{e \nin T}{\prod}
     t_e = \sum^{h_1 (\Gamma)}_{i = h_1 (\Gamma')} \rho^i P_i, \]
  where $\{ \rho \} \cup \{ \xi_{e'} \}_{e' \in \Gamma'_1} \cup \{ t_e \}_{e
  \in \Gamma_1 \backslash \Gamma'_1}$ are the functions on $C_{\Gamma'_1}$
  which have been defined in section \ref{compactification'}. $P_i$ is a
  degree $i$ homogeneus polynomial with variables $\{ \xi_{e'} \}_{e' \in
  \Gamma'_1} \cup \{ t_e \}_{e \in \Gamma_1 \backslash \Gamma'_1}$.
  Furthermore, $P_{h_1 (\Gamma')} \neq 0$.
\end{lemma}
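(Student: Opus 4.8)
The plan is to start from Kirchhoff's theorem, which identifies $\sum_{T \in \tmop{Tree}(\Gamma)} \prod_{e \nin T} t_e$ as an explicit polynomial in the $\{t_e\}$. Since every spanning tree of the connected graph $\Gamma$ has $|\Gamma_0| - 1$ edges, its complement has exactly $|\Gamma_1| - |\Gamma_0| + 1 = h_1(\Gamma)$ edges; hence each summand $\prod_{e \nin T} t_e$ is a monomial of degree $h_1(\Gamma)$, and the whole sum is homogeneous of degree $h_1(\Gamma)$ with every coefficient equal to $+1$. The first step is therefore pure bookkeeping: substitute the corner coordinates on $C_{\Gamma'_1}$, namely $t_{e'} = \rho\, \xi_{e'}$ for $e' \in \Gamma'_1$ and $t_e$ unchanged for $e \in \Gamma_1 \backslash \Gamma'_1$, into this homogeneous polynomial.

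Under this substitution a monomial $\prod_{e \nin T} t_e$ becomes $\rho^{a(T)} \left( \prod_{e' \nin T,\, e' \in \Gamma'_1} \xi_{e'} \right)\left( \prod_{e \nin T,\, e \in \Gamma_1 \backslash \Gamma'_1} t_e \right)$, where $a(T) = |\Gamma'_1 \backslash T|$ is the number of edges of $\Gamma'$ not lying in the spanning tree $T$. Collecting by the power of $\rho$ yields $\sum_i \rho^i P_i$, where $P_i$ is the sum over those trees $T$ with $a(T) = i$ of the corresponding monomials; each such monomial contains exactly $i$ of the variables $\xi_{e'}$ and $h_1(\Gamma) - i$ of the variables $t_e$, so $P_i$ is homogeneous of degree $i$ in $\{\xi_{e'}\}$ (and of complementary degree $h_1(\Gamma)-i$ in the remaining $t_e$), with all coefficients $+1$.

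The heart of the argument, and the step I expect to be the main obstacle, is the combinatorial identity $\min_{T} a(T) = h_1(\Gamma')$, which pins down the lowest power of $\rho$ that occurs. I would argue as follows. Writing $a(T) = |\Gamma'_1| - |\Gamma'_1 \cap T|$, minimizing $a(T)$ is the same as maximizing the number of edges of $\Gamma'$ contained in $T$. Since $T$ is a tree, $\Gamma'_1 \cap T$ is acyclic, i.e. a forest inside $\Gamma'$, so $|\Gamma'_1 \cap T| \leqslant |V(\Gamma')| - c(\Gamma')$, the size of a spanning forest of $\Gamma'$ (here $c(\Gamma')$ is the number of connected components of $\Gamma'$). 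Conversely, because $\Gamma$ is connected, the graphic-matroid augmentation property guarantees that any spanning forest of $\Gamma'$, being acyclic in $\Gamma$, extends to a spanning tree $T$ of $\Gamma$; for such a $T$ one has $|\Gamma'_1 \cap T| = |V(\Gamma')| - c(\Gamma')$. Hence $\min_T a(T) = |\Gamma'_1| - |V(\Gamma')| + c(\Gamma') = h_1(\Gamma')$, which is exactly the claimed lower summation index. The upper index $h_1(\Gamma)$ is immediate since $a(T) \leqslant h_1(\Gamma)$ for every $T$ (the higher coefficients $P_i$ may of course vanish).

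Finally, I would establish $P_{h_1(\Gamma')} \neq 0$ by positivity. Every spanning tree contributes a monomial with coefficient $+1$, so after the substitution no cancellation can occur, and $P_{h_1(\Gamma')}$ is a sum of monomials with strictly positive coefficients; it is nonempty precisely because the extension argument above produces at least one spanning tree attaining $a(T) = h_1(\Gamma')$. This completes the proof. The only genuinely nontrivial input is the forest-extension (matroid) step that isolates $h_1(\Gamma')$; the rest is substitution into Kirchhoff's formula together with an elementary positivity observation.
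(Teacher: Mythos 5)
Your proof is correct and follows essentially the same route as the paper: substitute $t_{e'} = \rho\,\xi_{e'}$ into the Kirchhoff polynomial, observe that the power of $\rho$ contributed by a spanning tree $T$ is $|\Gamma'_1 \setminus T|$, bound this below by $h_1(\Gamma')$ because $\Gamma'_1 \cap T$ is a forest in $\Gamma'$, and obtain $P_{h_1(\Gamma')} \neq 0$ by extending a spanning forest of $\Gamma'$ to a spanning tree of $\Gamma$. Your write-up is in fact slightly more careful than the paper's (which says ``tree'' where ``forest'' is needed for disconnected $\Gamma'$ and leaves the positivity/no-cancellation point implicit), but these are refinements of the same argument rather than a different approach.
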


\begin{proof}
  We note that $t_{e'} = \rho \xi_{e'}$ for $e' \in \Gamma'_1$, we only need
  to count the number of edges in $\Gamma' \backslash T$, for each spanning
  tree $T \in \tmop{Tree}$.
  
  We notice that $\Gamma' \cap T$ should be a tree in $\Gamma'$, the number of
  edges in $\Gamma' \backslash T$ should be larger than $h_1 (\Gamma')$. Since
  $T$ is a spanning tree of $\Gamma$, the number of edges in $\Gamma
  \backslash T$ equals to $h_1 (\Gamma)$.
  
  It remains to show $P_{h_1 (\Gamma')} \neq 0$. This follows from the following
  fact: if $T'$ is a spanning tree in $\Gamma'$, there exists a spanning tree $T$ in
  $\Gamma$, such that
  \[ T' = T \cap \Gamma . \]
  So there is at least one term in $\underset{T \in \tmop{Tree}
  (\Gamma)}{\sum} \underset{e \nin T}{\prod} t_e$ contributes $P_{h_1
  (\Gamma')}$.
\end{proof}

\begin{lemma}
  \label{extended functions}Given a connected graph $\Gamma$ without
  self-loops, The following functions can be extended to smooth functions on
  $\widetilde{[0, + \infty)^{| \Gamma_1 |}}$:
  \begin{enumeratenumeric}
    \item $(M_{\Gamma} (t)^{- 1})^{i j}$ for $1 \leqslant i, j \leqslant |
    \Gamma_0 | - 1$.
    
    \item $(d_{\Gamma}^{- 1})^{e j}$ for $e \in \Gamma_1, \text{ } 1 \leqslant
    j \leqslant | \Gamma_0 | - 1$.
  \end{enumeratenumeric}
\end{lemma}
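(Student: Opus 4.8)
The plan is to make everything local on the corners $C_{S_1, S_2, \ldots, S_m}$, which by Proposition \ref{covered by corners} cover $\widetilde{[0, + \infty)^{| \Gamma_1 |}}$, and then to exploit the explicit rational expressions of Corollary \ref{Minverse} and Corollary \ref{boundness}. Fix a chain $\Gamma_1 = S_0 \supseteq S_1 \supsetneq \cdots \supsetneq S_m \supsetneq S_{m+1} = \varnothing$ with associated subgraphs ${\Gamma'}^{k}$, and work in the corner coordinates $\{\rho_k\}, \{\xi_e^i\}, \{t_e\}$. The key bookkeeping device is that under the blow-down map each Schwinger parameter is a monomial: $t_e = \left( \prod_{j=1}^{i} \rho_j \right) \xi_e^i$ for $e \in S_i \setminus S_{i+1}$ with $i \geqslant 1$, and $t_e$ itself for $e \in S_0 \setminus S_1$. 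Consequently, for any edge subset $A$ one has $\prod_{e \in A} t_e = \prod_{k=1}^m \rho_k^{| A \cap S_k |} \cdot (\text{monomial in } \xi, t)$, so that the $\rho_k$-order of a monomial is controlled purely by the combinatorics of $A$ relative to the flag $\{S_k\}$.

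Both $(M_{\Gamma}(t)^{-1})^{ij}$ and $(d_{\Gamma}^{-1})^{ej}$ are ratios whose denominator is the Kirchhoff polynomial $D = \sum_{T \in \tmop{Tree}(\Gamma)} \prod_{e \nin T} t_e$, so the first step is to record its $\rho_k$-order. Since $T \cap {\Gamma'}^{k}$ is a forest for every spanning tree $T$, we get $| S_k \setminus T | \geqslant h_1({\Gamma'}^{k})$ for all $k$ simultaneously, with equality precisely when $T$ restricts to a spanning forest of ${\Gamma'}^{k}$; this is the multi-parameter version of Lemma \ref{key}, and it shows that $D$ is divisible by $R := \prod_{k=1}^m \rho_k^{h_1({\Gamma'}^{k})}$, the quotient $D/R$ having the nonzero leading term given by the sum over flag-adapted spanning trees. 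The second step is the matching divisibility of the numerators. For the cut numerator $\sum_{C} \prod_{e \in C} t_e$ of Corollary \ref{Minverse}, the fact that $\Gamma \setminus C$ is a forest forces ${\Gamma'}^{k} \setminus (C \cap S_k)$ to be a forest, hence $| C \cap S_k | \geqslant h_1({\Gamma'}^{k})$ for all $k$; so this numerator is also divisible by $R$. For $(d_{\Gamma}^{-1})^{ej}$ one first notes that the cuts in Corollary \ref{boundness} separate $h(e)$ from $t(e)$, so $e \in C$ always and $\prod_{e' \in C} t_{e'}/t_e = \prod_{e' \in C \setminus \{e\}} t_{e'}$ is a genuine monomial; moreover whenever $e \in S_k$ the forest ${\Gamma'}^{k} \setminus (C \cap S_k)$ must in addition place $h(e)$ and $t(e)$ in distinct components, which costs one extra edge and upgrades the bound to $| C \cap S_k | \geqslant h_1({\Gamma'}^{k}) + 1$, exactly compensating the division by $t_e$. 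In every case the (reduced) numerator is divisible by $R$, so after cancelling $R$ both functions become a polynomial in $\{\rho_k, \xi, t\}$ over the polynomial $D/R$.

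It remains to see that these reduced ratios are genuinely smooth, i.e.\ that $D/R$ is nowhere zero, and this is the hard part, where the full iterated blow-up is needed rather than a single one. The leading term of $D/R$ is a sum over flag-adapted spanning trees with nonnegative coefficients, which is strictly positive at every point where all $\xi_e^i$ are positive (the sum is nonempty by the forest-extension argument already used in Lemma \ref{key}), but it can vanish where some $\xi_e^i = 0$. The resolution is that such degenerate points do not lie in the open part of $\partial_m C_{S_1, \ldots, S_m}$: a vanishing $\xi_e^i$ means $t_e$ decays to higher order, so the point sits on a deeper corner $C_{S_1, \ldots, S_{m'}}$ attached to a strict refinement of the flag, in whose coordinates the corresponding leading coefficient is again a positive flag-adapted tree sum. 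I would therefore establish the nonvanishing of $D/R$ by downward induction on the codimension of the stratum (equivalently on the chain length): on the open top stratum $D > 0$ trivially, and each degeneration of the leading coefficient is absorbed by passing to a longer chain, the process terminating because $\Gamma_1$ is finite. Feeding this back, on each corner both $(M_{\Gamma}(t)^{-1})^{ij}$ and $(d_{\Gamma}^{-1})^{ej}$ equal a polynomial divided by the nowhere-vanishing polynomial $D/R$, hence are smooth there; and a function smooth on every corner is smooth on $\widetilde{[0, + \infty)^{| \Gamma_1 |}}$, which gives both parts of the lemma.
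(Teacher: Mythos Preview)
Your approach is the same as the paper's: express both functions as ratios with the Kirchhoff polynomial in the denominator, pass to corner coordinates, factor a common power of the $\rho$-variables out of numerator and denominator, and verify the reduced denominator does not vanish. The paper carries this out only on a single-level corner $C_{\Gamma'_1}$ and declares the iterated case a ``straightforward generalization''; you actually write down the multi-parameter version, which is a genuine improvement in detail. Your combinatorial bounds $|A \cap S_k| \geqslant h_1({\Gamma'}^{k})$ for the tree and cut numerators, and the upgraded bound $|C \cap S_k| \geqslant h_1({\Gamma'}^{k}) + 1$ when $e \in S_k$ (using that the cut must separate $h(e)$ from $t(e)$ inside ${\Gamma'}^{k}$), are correct and are exactly what the paper's single-parameter argument becomes under iteration.

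The one place you make life harder than necessary is the nonvanishing of $D/R$. By the very definition of the corner chart, $C_{S_1,\ldots,S_m}$ sits inside $[0,+\infty)^m \times (0,+\infty)^{|S_0|-|S_1|} \times \prod_i (0,+\infty)^{|S_i|-|S_{i+1}|}$: only the $\rho_k$ are allowed to vanish, while every $\xi_e^i$ and every $t_e$ is strictly positive throughout the chart. Since $D/R$ is a sum of monomials in $\rho,\xi,t$ with nonnegative coefficients and its $\rho$-constant part (the flag-adapted spanning-tree sum) is strictly positive whenever all $\xi_e^i,t_e>0$, one has $D/R>0$ on all of $C_{S_1,\ldots,S_m}$ directly. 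The points with $\xi_e^i=0$ that worry you simply do not belong to this chart; they are handled by the deeper corners in the cover, not by an induction within a fixed chart. So your downward induction on codimension is not wrong, but it is superfluous once you use the chart definition.
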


\begin{proof}
  To avoid unnecessary complexity in the notation, we will only prove that $(M_{\Gamma} (t)^{- 1})^{i j}$ and $(d_{\Gamma}^{- 1})^{e j}$
  can be extended to $C_{\Gamma'_1}$, where $\Gamma'_1$ is a subgraph of
  $\Gamma$. The general situation can be resolved through a straightforward generalization.
  \begin{enumeratenumeric}
    \item By Corollary \ref{Minverse}, we have
    \[ (M_{\Gamma} (t)^{- 1})^{i j} = \frac{1}{\underset{T \in \tmop{Tree}
       (\Gamma)}{\sum} \underset{e \nin T}{\prod} t_e} \cdot \left( \sum_{C
       \in \tmop{Cut} (\Gamma ; \{ i, j \}, \{ | \Gamma_0 | \})} \prod_{e \in
       C} t_e \right) . \]
    If we rewrite the numerator in terms of $\{ \rho \} \cup \{ \xi_{e'}
    \}_{e' \in \Gamma'_1} \cup \{ t_e \}_{e \in \Gamma_1 \backslash
    \Gamma'_1}$, we have
    \[ \sum_{C \in \tmop{Cut} (\Gamma ; \{ i, j \}, \{ | \Gamma_0 | \})}
       \prod_{e \in C} t_e = \sum^{h_1 (\Gamma) + 1}_{i = h_1 (\Gamma')}
       \rho^i \tilde{P}_i, \]
    where $\tilde{P}_i$ is a degree $i$ homogeneus polynomial with variables
    $\{ \xi_{e'} \}_{e' \in \Gamma'_1} \cup \{ t_e \}_{e \in \Gamma_1
    \backslash \Gamma'_1}$. By lemma \ref{key}, we have
    \[ (M_{\Gamma} (t)^{- 1})^{i j} = \frac{1}{\underset{i = h_1
       (\Gamma')}{\overset{h_1 (\Gamma)}{\sum}} \rho^{i - h_1 (\Gamma')} P_i,}
       \left( \sum^{h_1 (\Gamma) + 1}_{i = h_1 (\Gamma')} \rho^{i - h_1
       (\Gamma')} \tilde{P}_i \right), \]
    note the denominator is not zero, it is a smooth function on
    $C_{\Gamma'_1}$.
    
    \item By Corollary \ref{boundness}, we have
    \begin{eqnarray*}
      &  & (d_{\Gamma}^{- 1})^{e j}\\
      & = & \frac{1}{\underset{T \in \tmop{Tree} (\Gamma)}{\sum} \underset{e
      \nin T}{\prod} t_e} \left( \sum_{C \in \tmop{Cut} (\Gamma ; \{ j, h (e)
      \}, \{ | \Gamma_0 |, t (e) \})} \frac{\prod_{e' \in C} t_{e'}}{t_e}
      \right.\\
      & - & \left. \sum_{C \in \tmop{Cut} (\Gamma ; \{ j, t (e) \}, \{ |
      \Gamma_0 |, h (e) \})} \frac{\prod_{e' \in C} t_{e'}}{t_e} \right) .
    \end{eqnarray*}
    Similarly, we have
    \begin{eqnarray*}
      &  & \sum_{C \in \tmop{Cut} (\Gamma ; \{ j, h (e) \}, \{ | \Gamma_0 |,
      t (e) \})} \frac{\prod_{e' \in C} t_{e'}}{t_e} - \sum_{C \in \tmop{Cut}
      (\Gamma ; \{ j, t (e) \}, \{ | \Gamma_0 |, h (e) \})} \frac{\prod_{e'
      \in C} t_{e'}}{t_e}\\
      & = & \sum^{h_1 (\Gamma)}_{i = h_1 (\Gamma')} \rho^i
      \widetilde{\tilde{P}}_i,
    \end{eqnarray*}
    where $\widetilde{\tilde{P}}_i$ is a degree $i$ homogeneus polynomial with
    variables $\{ \xi_{e'} \}_{e' \in \Gamma'_1} \cup \{ t_e \}_{e \in
    \Gamma_1 \backslash \Gamma'_1}$. By lemma \ref{key}, we have
    \[ (d_{\Gamma}^{- 1})^{e j} = \frac{1}{\underset{i = h_1
       (\Gamma')}{\overset{h_1 (\Gamma)}{\sum}} \rho^{i - h_1 (\Gamma')} P_i,}
       \left( \sum^{h_1 (\Gamma)}_{i = h_1 (\Gamma')} \rho^{i - h_1 (\Gamma')}
       \widetilde{\tilde{P}}_i \right) . \]

    Since the denominator is not zero, $(d_{\Gamma}^{- 1})^{e j}$ is smooth on
    $C_{\Gamma'_1}$.
  \end{enumeratenumeric}
\end{proof}

\subsection{Finiteness of Feynman graph integrals.}

In this subsection, we will prove the main theorem. The idea is find a
``coordinate transformation'' of $(\mathbb{C}^d)^{| \Gamma_0 |} \times (0, +
\infty)^{| \Gamma_1 |}$, such that the integrand $\tilde{W} ((\Gamma, n),
\Phi)$ can be expressed in terms of $(M_{\Gamma} (t)^{- 1})^{i j}$ ,
$(d_{\Gamma}^{- 1})^{e j}$ and their de Rham differentials.

Intuitively, the ``coordinate transformation'' is given by the following
``map'':
\[ (\tilde{w}_i, \overline{\tilde{w}}_i,
   t_e)_{1 \leqslant i \leqslant | \Gamma_0 | - 1, e \in \Gamma_1} \rightarrow
   \left( w_i = \tilde{w}_i,
   \bar{w}_i = \sum_{j = 1}^{| \Gamma_0 | - 1} (M_{\Gamma}
   (t)^{- 1})^{i j} \overline{\tilde{w}}_j, t_e \right)_{1
   \leqslant i \leqslant | \Gamma_0 | - 1, e \in \Gamma_1}, \label{coordinate
   transformation} \]
if we use ``coordinates'' $\{ \tilde{w}_i \}_{i \in \Gamma_0}
\cup \{ \overline{\tilde{w}}_i \}_{i \in \Gamma_0} \cup \{
t_e \}_{e \in \Gamma_1}$, we will have
\begin{equation}
  y_e = \sum_{i = 1}^{| \Gamma_0 | - 1} \frac{1}{2 t_e}
  \rho^e_i \bar{w}_i = \sum_{i = 1}^{| \Gamma_0 | - 1}
  \sum_{j = 1}^{| \Gamma_0 | - 1} \frac{1}{2 t_e} \rho^e_i (M_{\Gamma} (t)^{-
  1})^{i j} \overline{\tilde{w}}_j = \sum_{j = 1}^{| \Gamma_0
  | - 1} \frac{1}{2} (d_{\Gamma}^{- 1})^{e j}
  \overline{\tilde{w}}_j, \label{111}
\end{equation}
so the integrand
\begin{eqnarray}
  &  & \tilde{W} ((\Gamma, n), \Phi) \nonumber\\
  & = & \frac{1}{\pi^{d | \Gamma_1 |}} e^{- \overset{| \Gamma_0 | -
  1}{\underset{i = 1}{\sum}} \underset{e \in \Gamma_1}{\sum}
  \rho_i^e \tilde{w}_i \cdot y_e} \prod_{e
  \in \Gamma_1} \left( \prod_{1 \leqslant i \leqslant d} (y^i_e)^{n_{i, e}}
  \right) d^d y_e \nonumber\\
  & \wedge & \Phi \left( \tilde{w}_i, \sum_{j = 1}^{|
  \Gamma_0 | - 1} (M_{\Gamma} (t)^{- 1})^{i j}
  \overline{\tilde{w}}_j \right)  \label{222}
\end{eqnarray}
can be extended to $(\mathbb{C}^d)^{| \Gamma_0 |} \times \widetilde{[0, +
\infty)^{| \Gamma_1 |}}$. The only problem is that $\left( \ref{coordinate
transformation} \right)$ is not a map: $\tilde{w}_i \text{
and } \overline{\tilde{w}}_i$ are not independent of each
other. Therefore, we need to use a trick to change the integration domain
itself.

\begin{definition}
  Let $k = \left( k_1, \ldots k_{|
  \Gamma_0 | - 1} \right) \in (\mathbb{C}^d)^{| \Gamma_0 | - 1}$, we call
  $\Phi_{k} \in \Omega^{\ast, \ast} ((\mathbb{C}^d)^{| \Gamma_0 |})$ a
  simple harmonic wave form with momenta $\left( k_1, \ldots
  k_{| \Gamma_0 | - 1} \right)$, if it can be written as
  \[ \Phi_{k} = e^{\overset{| \Gamma_0 | - 1}{\underset{i = 1}{\sum}}
     w_i \cdot k_i - \overset{| \Gamma_0 | -
     1}{\underset{i = 1}{\sum}} \bar{w}_i \cdot
     \bar{k}_i} \omega, \]
  where $\omega$ is a differential form such that the coefficients are
  functions of $\{ w^i_{| \Gamma_0 |} \}$, i.e. $\omega$ is constant along
  $(\mathbb{C}^d)^{| \Gamma_0 | - 1}$.
\end{definition}

We first notice that $\tilde{W} ((\Gamma, n), \Phi_{k})$ can be viewed
as a differential form on $\mathbb{C}^d_{w_{| \Gamma_0 |}} \times
\mathbb{C}^{| \Gamma_0 | - 1} \times \overline{\mathbb{C}^{| \Gamma_0 | - 1}}
\times \widetilde{[0, + \infty)^{| \Gamma_1 |}} $\footnote{This is because
$\tilde{W} ((\Gamma, n), \Phi_{k})$ is an analytic form. We will use the
same notation for convenience.}, where $\overline{\mathbb{C}^{| \Gamma_0 | -
1}}$ is the complex affine space with the holomorphic structure given by the
anti-holomorphic structure on $\mathbb{C}^{| \Gamma_0 | - 1}$. A point in
$\overline{\mathbb{C}^{| \Gamma_0 | - 1}}$ will be denoted by
$(\bar{w}_i)_{1 \leqslant i \leqslant | \Gamma_0 | - 1}$.
Further more, $\tilde{W} ((\Gamma, n), \Phi_{k})$ is a holomorphic form
along $\mathbb{C}^{| \Gamma_0 | - 1} \times \overline{\mathbb{C}^{| \Gamma_0 |
- 1}}$.

Let $\Delta, \Delta'$ be embedings from
\[ \mathbb{C}^d_{w_{| \Gamma_0 |}} \times \mathbb{C}^{| \Gamma_0 | - 1} \times
   (0, + \infty)^{| \Gamma_1 |} \]
to
\[ \mathbb{C}^d_{w_{| \Gamma_0 |}} \times \mathbb{C}^{| \Gamma_0 | - 1}
   \times \overline{\mathbb{C}^{| \Gamma_0 | - 1}} \times (0, + \infty)^{|
   \Gamma_1 |}, \]
they are given by the following formulas:
\begin{eqnarray*}
  &  & \Delta \left( \left( \tilde{w}_{| \Gamma_0 | - 1},
  \tilde{w}_i, t_e \right)_{1 \leqslant i \leqslant |
  \Gamma_0 | - 1, e \in \Gamma_1} \right)\\
  & = & \left( w_{| \Gamma_0 | - 1} =
  \tilde{w}_{| \Gamma_0 | - 1}, w_i =
  \tilde{w}_i, \bar{w}_i =
  \overline{\tilde{w}}_i, t_e \right)_{1 \leqslant i
  \leqslant | \Gamma_0 | - 1, e \in \Gamma_1},
\end{eqnarray*}
\begin{eqnarray*}
  &  & \Delta' \left( \left( \tilde{w}_{| \Gamma_0 | - 1},
  \tilde{w}_i, t_e \right)_{1 \leqslant i \leqslant |
  \Gamma_0 | - 1, e \in \Gamma_1} \right)\\
  & = & \left( w_{| \Gamma_0 | - 1} =
  \tilde{w}_{| \Gamma_0 | - 1}, w_i =
  \tilde{w}_i, \bar{w}_i = \sum_{j = 1}^{|
  \Gamma_0 | - 1} (M_{\Gamma} (t)^{- 1})^{i j}
  \overline{\tilde{w}}_j, t_e \right)_{1 \leqslant i
  \leqslant | \Gamma_0 | - 1, e \in \Gamma_1} .
\end{eqnarray*}
\begin{proposition}
  \label{change integration cycle}We have the following equality:
  \[ \int_{(\mathbb{C}^d)^{| \Gamma_0 | - 1}} \Delta^{\ast} \tilde{W}
     ((\Gamma, n), \Phi_{k}) = \int_{(\mathbb{C}^d)^{| \Gamma_0 | - 1}}
     {\Delta'}^{\ast} \tilde{W} ((\Gamma, n), \Phi_{k}) . \]
\end{proposition}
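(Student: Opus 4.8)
The plan is to read the identity as a several-variable Cauchy (contour-deformation) statement. Having promoted the anti-holomorphic coordinates $\bar w_i$ to independent holomorphic coordinates on $\overline{\mathbb{C}^{|\Gamma_0|-1}}$, the integrand $\tilde W((\Gamma,n),\Phi_{k})$ is a holomorphic form in the doubled variables $(w_i,\bar w_i)_{1\le i\le|\Gamma_0|-1}$, and $\Delta,\Delta'$ are two totally real integration cycles that agree in the $w$-directions ($w_i=\tilde w_i$ on both) and differ only by the invertible linear map $\bar w_i\mapsto\sum_j(M_{\Gamma}(t)^{-1})^{ij}\bar w_j$ in the $\overline{\mathbb{C}^{|\Gamma_0|-1}}$-factor. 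I would therefore connect them by a linear homotopy of cycles and argue that the integral is unchanged, the deformation being legitimate because of holomorphicity together with Gaussian decay at infinity.

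Concretely, for fixed $t=(t_e)$ and fixed $w_{|\Gamma_0|}$ I would set $A(s)_{ij}=(1-s)\delta_{ij}+s\,(M_{\Gamma}(t)^{-1})^{ij}$ for $s\in[0,1]$ and define $F(\tilde w,s)=\big(w_i=\tilde w_i,\ \bar w_i=\sum_j A(s)_{ij}\,\overline{\tilde w}_j\big)$, so that $F(-,0)=\Delta$ and $F(-,1)=\Delta'$. The decisive point is the computation of the Gaussian exponent of $\tilde W$, namely $-\tfrac12\sum_{ij}w_i M_{\Gamma}(t)_{ij}\bar w_j$, which pulls back along $F(-,s)$ to $-\tfrac12\,\tilde w^{\mathsf T}\big((1-s)M_{\Gamma}(t)+sI\big)\overline{\tilde w}$, using $M_{\Gamma}(t)A(s)=(1-s)M_{\Gamma}(t)+sI$. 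Since this is a convex combination of the positive-definite matrices $M_{\Gamma}(t)$ and $I$, it is positive-definite for every $s\in[0,1]$, and its real part gives a genuine Gaussian decay in $\tilde w$ that is uniform in $s$; the remaining factors (the harmonic-wave exponential of $\Phi_{k}$ and the polynomial $\prod_e\prod_i(y^i_e)^{n_{i,e}}$) grow at most exponentially-linearly and polynomially and are dominated by it.

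I would then apply Stokes' theorem to $F$ on $(\mathbb{C}^d)^{|\Gamma_0|-1}\times[0,1]$. The faces $s=0$ and $s=1$ produce exactly $\int_{(\mathbb{C}^d)^{|\Gamma_0|-1}}{\Delta'}^{\ast}\tilde W-\int_{(\mathbb{C}^d)^{|\Gamma_0|-1}}\Delta^{\ast}\tilde W$; the face at $\tilde w\to\infty$ vanishes by the uniform Gaussian decay above; and it remains to kill the interior term $\int F^{\ast}(d\tilde W)$. For this I use holomorphicity in the doubled complex structure: only the part of $\tilde W$ of top degree in the fiber directions $\{dw_i^a,d\bar w_i^a\}$ can contribute after integrating over $\tilde w$, and on that part the coefficient is holomorphic while $dw_i^a$ and $d\bar w_i^a$ already occur at top degree, so any $\partial/\partial w$ or $\partial/\partial\bar w$ produced by $d$ repeats a differential and vanishes; the spectator differentials $dt_e$ and $dw_{|\Gamma_0|}$ cannot restore the lost fiber degree. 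Hence $F^{\ast}(d\tilde W)$ has no $ds\wedge(\text{fiber-top})$ component, the interior term drops out, and the two integrals coincide.

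The main obstacle is the control at infinity, and with it the choice of homotopy. Contour deformation on the non-compact fiber $(\mathbb{C}^d)^{|\Gamma_0|-1}$ is only valid if the quadratic form stays non-degenerate throughout the deformation, and the linear interpolation is chosen precisely so that $M_{\Gamma}(t)A(s)=(1-s)M_{\Gamma}(t)+sI$ never degenerates; a less careful path from $\Delta$ to $\Delta'$ could let the Gaussian collapse and invalidate Stokes. For fixed $t>0$ this positive-definiteness is immediate and yields the pointwise identity of forms on $(0,+\infty)^{|\Gamma_1|}$ claimed in the proposition.
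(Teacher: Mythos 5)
Your argument is essentially the paper's own proof: both deform $\Delta$ to $\Delta'$ through a family of linear embeddings indexed by positive-definite symmetric matrices and conclude by Stokes' theorem, using Gaussian decay at infinity to kill the boundary at $\tilde{w}\to\infty$ and the holomorphicity of the top-degree fiber part to kill the interior term. Your explicit interpolation $A(s)=(1-s)\,\mathrm{Id}+s\,M_{\Gamma}(t)^{-1}$, which keeps the effective quadratic form $(1-s)M_{\Gamma}(t)+s\,\mathrm{Id}$ symmetric positive-definite for all $s\in[0,1]$, is a welcome sharpening of the paper's bare appeal to contractibility of the space of positive symmetric matrices, since the uniform decay the paper asserts does need to be verified along the particular path chosen.
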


\begin{proof}
  Note the space of all positive symmetric matrices is contractible, we can
  find a continuous family $A_t$, such that $A_0 = \tmop{Id}$, $A_1 =
  M_{\Gamma} (t)$. Then we define
  \[ \Delta_t : \mathbb{C}^d_{w^{| \Gamma_0 |}} \times \mathbb{C}^{| \Gamma_0
     | - 1} \times (0, + \infty)^{| \Gamma_1 |} \rightarrow \mathbb{C}^d_{w^{|
     \Gamma_0 |}} \times \mathbb{C}^{| \Gamma_0 | - 1} \times
     \overline{\mathbb{C}^{| \Gamma_0 | - 1}} \times (0, + \infty)^{| \Gamma_1
     |} \]
  by
  \[ \begin{array}{lll}
       &  & \Delta_t \left( \left( \tilde{w}_{| \Gamma_0 | -
       1}, \tilde{w}_i, t_e \right)_{1 \leqslant i \leqslant
       | \Gamma_0 | - 1, e \in \Gamma_1} \right)\\
       & = & \left( w_{| \Gamma_0 | - 1} =
       \tilde{w}_{| \Gamma_0 | - 1}, w_i =
       \tilde{w}_i, \bar{w}_i = \sum_{j =
       1}^{| \Gamma_0 | - 1} (A_t^{- 1})^{i j}
       \overline{\tilde{w}}_j, t_e \right)_{1 \leqslant i
       \leqslant | \Gamma_0 | - 1, e \in \Gamma_1} .
     \end{array} \]

  $\Delta_t$ defines a homotopy between $\Delta$ and $\Delta'$. We notice that
  $\Delta_t^{\ast} \tilde{W} ((\Gamma, n), \Phi_{k})$ decays faster than
  any polynomials when $(\tilde{w}_i)_{1 \leqslant i
  \leqslant | \Gamma_0 | - 1}$ goes to infinity, and $\Delta_t^{\ast}
  \tilde{W} ((\Gamma, n), \Phi_{k})$ is a top holomorphic
  form (otherwise, $\int_{(\mathbb{C}^d)^{| \Gamma_0 | - 1}} \Delta_t^{\ast}
  \tilde{W} ((\Gamma, n), \Phi_{k}) = 0$), we have
  \[ \int_{(\mathbb{C}^d)^{| \Gamma_0 | - 1}} \Delta^{\ast} \tilde{W}
     ((\Gamma, n), \Phi_{k}) = \int_{(\mathbb{C}^d)^{| \Gamma_0 | - 1}}
     {\Delta'}^{\ast} \tilde{W} ((\Gamma, n), \Phi_{k}) \]
  by Stokes theorem.
\end{proof}

By Lemma \ref{extended functions}, $\left( \ref{111} \right)$ and $\left(
\ref{222} \right)$, ${\Delta'}^{\ast} \tilde{W} ((\Gamma, n), \Phi_{k})$
can be extended to $\mathbb{C}^d_{w_{| \Gamma_0 |}} \times \mathbb{C}^{|
\Gamma_0 | - 1} \times \widetilde{[0, + \infty)^{| \Gamma_1 |}}$. By integrate
out $\mathbb{C}^{| \Gamma_0 | - 1}$, we have the following result:

\begin{lemma}
  The following statements are true:\label{extension lemma}
  \begin{enumeratenumeric}
    \item $\int_{(\mathbb{C}^d)^{| \Gamma_0 | - 1}} \tilde{W} ((\Gamma, n),
    \Phi_{k}) = \int_{(\mathbb{C}^d)^{| \Gamma_0 | - 1}} \Delta^{\ast}
    \tilde{W} ((\Gamma, n), \Phi_{k})$ can be extended to a smooth
    differential form on $\mathbb{C}^d_{w_{| \Gamma_0 |}} \times
    \widetilde{[0, L]^{| \Gamma_1 |}}$.
    
    \item Assume $D$ is any order $| D |$ differential operator on on
    $\mathbb{C}^d_{w_{| \Gamma_0 |}} \times \widetilde{[0, L]^{| \Gamma_1
    |}}$, we have the following inequality:
    \[ \left| D \int_{(\mathbb{C}^d)^{| \Gamma_0 | - 1}} {\Delta'}^{\ast}
       \tilde{W} ((\Gamma, n), \Phi_{k}) \right| \leqslant C'' \left( 1
       + \sum_{i = 1}^{| \Gamma_0 | - 1} | k_i |
       \right)^{\underset{e \in \Gamma_1, 1 \leqslant i \leqslant
       d}{\overset{}{\sum}} n_{i, e} + | \Gamma_1 | + 2 | D |}, \]
    where $C''$ is some constant.
  \end{enumeratenumeric}
\end{lemma}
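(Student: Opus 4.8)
The plan is to reduce the fibre integral over $(\mathbb{C}^d)^{|\Gamma_0|-1}$ to a Gaussian integral whose coefficients are smooth on the compact base $\mathbb{C}^d_{w_{|\Gamma_0|}}\times\widetilde{[0,L]^{|\Gamma_1|}}$, and then to read off both the smoothness (part 1) and the growth bound (part 2) from standard properties of Gaussian integrals. By Proposition \ref{change integration cycle} the two integrals in part 1 agree, so it suffices to work with ${\Delta'}^{\ast}\tilde W$, which by Lemma \ref{extended functions} together with $(\ref{111})$ and $(\ref{222})$ already extends smoothly to $\mathbb{C}^d_{w_{|\Gamma_0|}}\times\mathbb{C}^{|\Gamma_0|-1}\times\widetilde{[0,+\infty)^{|\Gamma_1|}}$. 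The decisive observation is that after the substitution $\bar w_i=\sum_j(M_\Gamma(t)^{-1})^{ij}\overline{\tilde w}_j$ the exponent collapses: using $y_e=\tfrac12\sum_j(d_\Gamma^{-1})^{ej}\overline{\tilde w}_j$ from $(\ref{111})$ and the identity $\sum_e\rho_i^e\frac1{t_e}\sum_k\rho_k^e(M_\Gamma(t)^{-1})^{kj}=\delta_i^{\,j}$ (that is, $M_\Gamma(t)M_\Gamma(t)^{-1}=\mathrm{Id}$), one finds
\[ \sum_{i=1}^{|\Gamma_0|-1}\sum_{e\in\Gamma_1}\rho_i^e\,\tilde w_i\cdot y_e=\tfrac12\sum_{i=1}^{|\Gamma_0|-1}|\tilde w_i|^2 . \]
Thus ${\Delta'}^{\ast}\tilde W$ carries the Schwinger-\emph{independent} Gaussian weight $e^{-\frac12\sum_i|\tilde w_i|^2}$, multiplied by: the polynomial insertion $\prod_e\prod_i(y_e^i)^{n_{i,e}}$, homogeneous of degree $N:=\sum_{e\in\Gamma_1,\,1\le i\le d}n_{i,e}$ in $\overline{\tilde w}$; the form factors $\prod_e d^dy_e$; and the harmonic-wave factor $e^{\sum_i\tilde w_i\cdot k_i-\sum_{i,j}(M_\Gamma(t)^{-1})^{ij}\overline{\tilde w}_j\cdot\bar k_i}\omega$, with $\omega$ constant along the fibre.

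For part 1, absolute convergence of $\int_{(\mathbb{C}^d)^{|\Gamma_0|-1}}$ is immediate from the Gaussian weight, since by Lemma \ref{extended functions} all coefficient functions are smooth, hence bounded together with all their derivatives, on the compact manifold with corners $\widetilde{[0,L]^{|\Gamma_1|}}$. The same Gaussian domination justifies differentiation under the integral sign in the base variables $w_{|\Gamma_0|}$ and in the blow-up Schwinger coordinates: each base derivative produces at worst an extra polynomial factor in $(\tilde w,\overline{\tilde w})$ with bounded coefficient, still integrable against $e^{-\frac12\sum_i|\tilde w_i|^2}$. Hence the fibre integral defines a smooth form on $\mathbb{C}^d_{w_{|\Gamma_0|}}\times\widetilde{[0,L]^{|\Gamma_1|}}$, proving part 1.

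For part 2, view the fibre integral as a Gaussian generating function with linear sources $A_i=k_i$ and $B_i=-\sum_j(M_\Gamma(t)^{-1})^{ji}\bar k_j$. Completing the square, it factors as a polynomial in $(A,B)$ times $e^{2\sum_iA_i\cdot B_i}$, and the cross term equals $-2\sum_{i,j}(M_\Gamma(t)^{-1})^{ij}k_j\cdot\bar k_i$, which has nonpositive real part because $M_\Gamma(t)^{-1}$ is real symmetric positive definite; therefore $|e^{2\sum_iA_i\cdot B_i}|\le1$, and this is precisely what prevents exponential growth in $k$. The polynomial prefactor has $k$-degree at most $N$ (from the insertion), plus the number of extra $\overline{\tilde w}$-factors produced when $\prod_e d^dy_e$ is expanded into its antiholomorphic and Schwinger-differential branches, which is bounded by $|\Gamma_1|$; after contraction against the source $A_i=k_i$ each such $\overline{\tilde w}$ yields one power of $k$. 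Applying the order-$|D|$ operator $D$ raises the $k$-degree by at most $2|D|$: every first-order Schwinger derivative hitting $e^{-\sum_{i,j}(M_\Gamma(t)^{-1})^{ij}\overline{\tilde w}_j\bar k_i}$ brings down $(\partial M_\Gamma(t)^{-1})\,\overline{\tilde w}\,\bar k$, i.e. one power of $k$ together with one $\overline{\tilde w}$ that contracts to a second power, while derivatives of the bounded coefficients add nothing. Collecting the bounded Gaussian factor, the bounded coefficients, and this degree count gives
\[ \left|D\int_{(\mathbb{C}^d)^{|\Gamma_0|-1}}{\Delta'}^{\ast}\tilde W((\Gamma,n),\Phi_k)\right|\le C''\left(1+\sum_{i=1}^{|\Gamma_0|-1}|k_i|\right)^{N+|\Gamma_1|+2|D|}. \]

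The main obstacle is uniform control at the corners. Away from the boundary this is an explicit Gaussian computation, but near the corners of $\widetilde{[0,L]^{|\Gamma_1|}}$ the entries $(M_\Gamma(t)^{-1})^{ij}$ and $(d_\Gamma^{-1})^{ej}$ are only \emph{known} to be smooth via Lemma \ref{extended functions}, not explicit; one must phrase the power counting and the boundedness of all coefficients intrinsically on the compact blow-up, and verify that differentiation under the integral sign and the crucial positivity of $\sum_{i,j}(M_\Gamma(t)^{-1})^{ij}k_j\cdot\bar k_i$ persist in the blow-up coordinates. Equally delicate is the exact bookkeeping yielding the exponent $N+|\Gamma_1|+2|D|$, in particular matching the contributions of $\prod_e d^dy_e$ and of $D$ to the stated powers of $\left(1+\sum_i|k_i|\right)$.
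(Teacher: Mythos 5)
Your proposal is correct and follows essentially the same route as the paper: reduce to ${\Delta'}^{\ast}\tilde{W}$ via Proposition \ref{change integration cycle}, observe that the exponent collapses to the standard Gaussian $e^{-\frac{1}{2}\sum_i \tilde{w}_i\cdot\overline{\tilde{w}}_i}$ with source terms, use Lemma \ref{extended functions} plus compactness of $\widetilde{[0,L]^{|\Gamma_1|}}$ to bound all coefficients, apply dominated convergence for smoothness, and evaluate the Gaussian explicitly with $\bigl|e^{-\frac{1}{2}\sum k\cdot(M_\Gamma(t)^{-1})\bar{k}}\bigr|\leqslant 1$ to get the polynomial bound of degree $\sum n_{i,e}+|\Gamma_1|+2|D|$. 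The only detail the paper makes explicit that you gloss over is absorbing the linear source terms $e^{\sum\tilde{w}_i\cdot k_i}$ into the Gaussian uniformly over the compact base via the inequality $|\sum\tilde{w}_i\cdot k_i|\leqslant\frac{\varepsilon^2}{2}\sum|\tilde{w}_i|^2+\frac{1}{2\varepsilon^2}\sum|k_i|^2$, which is routine.
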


\begin{proof}
  
  \begin{enumeratenumeric}
    \item Since ${\Delta'}^{\ast} \tilde{W} ((\Gamma, n), \Phi_{k})$ can
    be extended to $\mathbb{C}^d_{w_{| \Gamma_0 |}} \times \mathbb{C}^{|
    \Gamma_0 | - 1} \times \widetilde{[0, + \infty)^{| \Gamma_1 |}}$, if we
    can prove \[
    \int_{(\mathbb{C}^d)^{| \Gamma_0 | - 1}} {\Delta'}^{\ast}
    \tilde{W} ((\Gamma, n), \Phi_{k})\]
     is smooth, the first part of this
    lemma follows from proposition \ref{change integration cycle}. We will use
    dominated convergence theorem to prove the smoothness.
    
    First, we note ${\Delta'}^{\ast} \tilde{W} ((\Gamma, n), \Phi_{k})$
    can be rewrited in the following form:
    \begin{eqnarray*}
      &  & {\Delta'}^{\ast} \tilde{W} ((\Gamma, n), \Phi_{k})\\
      & = & \frac{1}{\pi^{d | \Gamma_1 |}} e^{- \frac{1}{2} \underset{i =
      1}{\overset{| \Gamma_0 | - 1}{\sum}} \tilde{w}_i \cdot
      \overline{\tilde{w}}_i + \overset{| \Gamma_0 | -
      1}{\underset{i = 1}{\sum}} \tilde{w}_i \cdot
      k_i - \overset{| \Gamma_0 | - 1}{\underset{i =
      1}{\sum}} \underset{j = 1}{\overset{| \Gamma_0 | - 1}{\sum}}
      \overline{\tilde{w}}_i \cdot (M_{\Gamma} (t)^{- 1})^{i
      j} \bar{k}_j} P
      (\overline{\tilde{w}}_i) \omega,
    \end{eqnarray*}
    where $\omega$ is a differential form that is constant along
    $(\mathbb{C}^d)^{| \Gamma_0 | - 1}$. $P
    (\overline{\tilde{w}}_i)$ is a polynomial with
    coefficients in terms of $(M_{\Gamma} (t)^{- 1})^{i j}$, $(d_{\Gamma}^{-
    1})^{e j}$ and their derivatives. The degree of $P
    (\overline{\tilde{w}}_i)$ is less than $\underset{e \in
    \Gamma_1, 1 \leqslant i \leqslant d}{\overset{}{\sum}} n_{i, e} + |
    \Gamma_1 |$.
    
    Since $\widetilde{[0, L]^{| \Gamma_1 |}}$ is compact, the values of
    $(M_{\Gamma} (t)^{- 1})^{i j}$ and $(d_{\Gamma}^{- 1})^{e j}$ are bounded
    by some constant $C$. We also notice the following inequalities:
    \[ \left\{\begin{array}{l}
         \left| \overset{| \Gamma_0 | - 1}{\underset{i = 1}{\sum}}
         \tilde{w}_i \cdot k_i \right|
         \leqslant \frac{\varepsilon^2}{2} \underset{i = 1}{\overset{|
         \Gamma_0 | - 1}{\sum}} \tilde{w}_i \cdot
         \overline{\tilde{w}}_i + \frac{1}{2 \varepsilon^2}
         \underset{i = 1}{\overset{| \Gamma_0 | - 1}{\sum}}
         k_i \cdot \bar{k}_i\\
         \left| \overset{| \Gamma_0 | - 1}{\underset{i = 1}{\sum}} \underset{j
         = 1}{\overset{| \Gamma_0 | - 1}{\sum}}
         \overline{\tilde{w}}_i \cdot (M_{\Gamma} (t)^{-
         1})^{i j} \bar{k}_j \right| \leqslant
         \frac{\varepsilon^2}{2} \underset{i = 1}{\overset{| \Gamma_0 | -
         1}{\sum}} \tilde{w}_i \cdot
         \overline{\tilde{w}}_i + \frac{(| \Gamma_0 | - 1)^2
         C^2}{2 \varepsilon^2} \underset{i = 1}{\overset{| \Gamma_0 | -
         1}{\sum}} k_i \cdot \bar{k}_i
       \end{array}\right., \]
    where $\varepsilon > 0$ is an arbitrary small number. So
    \begin{eqnarray}
      &  & \left| D \left( {\Delta'}^{\ast} \tilde{W} ((\Gamma, n),
      \Phi_{k}) \right) \right| \nonumber\\
      & \leqslant & C' e^{- \frac{1 - 2 \varepsilon^2}{2} \underset{i =
      1}{\overset{| \Gamma_0 | - 1}{\sum}} \tilde{w}_i \cdot
      \overline{\tilde{w}}_i}{\left( 1 + \sum_{1 = 1}^{|
      \Gamma_0 | - 1} | \tilde{w}_i | \right)^{\underset{e
      \in \Gamma_1, 1 \leqslant i \leqslant d}{\overset{}{\sum}} n_{i, e} + |
      \Gamma_1 | + d (| \Gamma_0 | - 1) + | D |}} ,  \label{inequality1}
    \end{eqnarray}
    where $D$ is any order $| D |$ differential operator on
    $\mathbb{C}^d_{w_{| \Gamma_0 |}} \times \widetilde{[0, L]^{| \Gamma_1
    |}}$, $C'$ is some constant which may depend on $k$. Since the right
    hand side of \ref{inequality1} is absolute integrable for small
    $\varepsilon$ and independent of $\mathbb{C}^d_{w_{| \Gamma_0 |}} \times
    \widetilde{[0, L]^{| \Gamma_1 |}}$, $\int_{(\mathbb{C}^d)^{| \Gamma_0 | -
    1}} {\Delta'}^{\ast} \tilde{W} ((\Gamma, n), \Phi_{k})$ is smooth
    over $\mathbb{C}^d_{w_{| \Gamma_0 |}} \times \widetilde{[0, L]^{| \Gamma_1
    |}}$ by dominated convergence theorem.
    
    \item We note $\int_{(\mathbb{C}^d)^{| \Gamma_0 | - 1}} {\Delta'}^{\ast}
    \tilde{W} ((\Gamma, n), \Phi_{k})$ is a Gaussian type integral, so
    \begin{eqnarray*}
      &  & \frac{\pi^{d | \Gamma_1 |}}{(2 \pi)^{d (| \Gamma_0 | - 1)}}
      \int_{(\mathbb{C}^d)^{| \Gamma_0 | - 1}} {\Delta'}^{\ast} \tilde{W}
      ((\Gamma, n), \Phi_{k})\\
      & = & e^{\frac{1}{2} \underset{i = 1}{\overset{| \Gamma_0 | - 1}{\sum}}
      \overset{d}{\underset{j = 1}{\sum}} (\partial_{\tilde{w}_i^j}
      \partial_{\overline{\tilde{w}}_i^j} + k_i^j
      \partial_{\overline{\tilde{w}}_i^j} - \bar{k}_i^j
      \partial_{\tilde{w}_i^j}) - \frac{1}{2} \overset{| \Gamma_0 | -
      1}{\underset{i = 1}{\sum}} \underset{j = 1}{\overset{| \Gamma_0 | -
      1}{\sum}} k_i \cdot (M_{\Gamma} (t)^{- 1})^{i j}
      \bar{k}_j} P (\overline{\tilde{w}}_i)
      \omega \left|_{\tilde{w}_i = 0} \right.\\
      & = & e^{\frac{1}{2} \underset{i = 1}{\overset{| \Gamma_0 | - 1}{\sum}}
      \overset{d}{\underset{j = 1}{\sum}} (\partial_{\tilde{w}_i^j}
      \partial_{\overline{\tilde{w}}_i^j} + k_i^j
      \partial_{\overline{\tilde{w}}_i^j}) - \frac{1}{2} \overset{| \Gamma_0 |
      - 1}{\underset{i = 1}{\sum}} \underset{j = 1}{\overset{| \Gamma_0 | -
      1}{\sum}} k_i \cdot (M_{\Gamma} (t)^{- 1})^{i j}
      \bar{k}_j} P (\overline{\tilde{w}}_i)
      \omega \left|_{\tilde{w}_i = 0} \right.,
    \end{eqnarray*}
    and
    \[ {e^{- \frac{1}{2} \overset{| \Gamma_0 | - 1}{\underset{i = 1}{\sum}}
       \underset{j = 1}{\overset{| \Gamma_0 | - 1}{\sum}} k_i
       \cdot (M_{\Gamma} (t)^{- 1})^{i j} \bar{k}_j}} 
       \leqslant 1, \]
    we can get the following inequality after some direct computations:
    \[ \left| D \int_{(\mathbb{C}^d)^{| \Gamma_0 | - 1}} {\Delta'}^{\ast}
       \tilde{W} ((\Gamma, n), \Phi_{k}) \right| \leqslant C'' \left( 1
       + \sum_{i = 1}^{| \Gamma_0 | - 1} | k_i |
       \right)^{\underset{e \in \Gamma_1, 1 \leqslant i \leqslant
       d}{\overset{}{\sum}} n_{i, e} + | \Gamma_1 | + 2 | D |}, \]
    where $C''$ is some constant, $D$ is any order $| D |$ differential
    operator on on $\mathbb{C}^d_{w_{| \Gamma_0 |}} \times \widetilde{[0,
    L]^{| \Gamma_1 |}}$
  \end{enumeratenumeric}
\end{proof}

With above lemma, we can get our main theorem:

\begin{theorem}
  \label{extension theorem1}Given a connected decorated graph $(\Gamma, n)$
  without self-loops, and $\Phi \in \Omega^{\ast, \ast}_c ((\mathbb{C}^d)^{|
  \Gamma_0 |})$, $\int_{(\mathbb{C}^d)^{| \Gamma_0 | - 1}} \tilde{W} ((\Gamma,
  n), \Phi)$ can be extended to a smooth form on $\mathbb{C}^d_{w_{| \Gamma_0
  |}} \times \widetilde{[0, L]^{| \Gamma_1 |}}$. Furthermore, the map
  \[ \Phi \in \Omega^{\ast, \ast}_c ((\mathbb{C}^d)^{| \Gamma_0 |})
     \rightarrow \int_{(\mathbb{C}^d)^{| \Gamma_0 | - 1}} \tilde{W} ((\Gamma,
     n), \Phi) \in \Omega^{\ast} \left( \mathbb{C}^d_{w_{| \Gamma_0 |}} \times
     \widetilde{[0, L]^{| \Gamma_1 |}} \right) \]
  is a continuous map between topological vector spaces.
\end{theorem}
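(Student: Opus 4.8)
The plan is to reduce a general compactly supported $\Phi$ to the simple harmonic wave forms already handled in Lemma \ref{extension lemma}, and then to reconstruct the general case by a Fourier superposition. Write $N := \sum_{e \in \Gamma_1,\, 1 \leq i \leq d} n_{i,e}$ for the exponent appearing in Lemma \ref{extension lemma}(2). First I would decompose $\Phi$ along the first $|\Gamma_0| - 1$ factors of $(\mathbb{C}^d)^{|\Gamma_0|}$: letting $\hat\Phi(k; w_{|\Gamma_0|})$ denote the (form-valued) Fourier transform of $\Phi$ in the real variables underlying $(w_i, \bar w_i)_{1 \leq i \leq |\Gamma_0|-1}$, Fourier inversion gives
\[ \Phi = \int_{(\mathbb{C}^d)^{|\Gamma_0|-1}} e^{\sum_{i=1}^{|\Gamma_0|-1} w_i \cdot k_i - \sum_{i=1}^{|\Gamma_0|-1} \bar w_i \cdot \bar k_i}\, \hat\Phi(k; w_{|\Gamma_0|})\, dk, \]
so that, for each fixed $k$, the integrand is precisely a simple harmonic wave form $\Phi_k$ with amplitude $\omega = \hat\Phi(k; \cdot)$. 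Since $\tilde W((\Gamma, n), -)$ is linear in $\Phi$, interchanging the position integral with the Fourier integral (justified below on the open Schwinger space) yields, over $(0, L]^{|\Gamma_1|}$,
\[ \int_{(\mathbb{C}^d)^{|\Gamma_0|-1}} \tilde W((\Gamma, n), \Phi) = \int_{(\mathbb{C}^d)^{|\Gamma_0|-1}} F_k\, dk, \qquad F_k := \int_{(\mathbb{C}^d)^{|\Gamma_0|-1}} \tilde W((\Gamma, n), \Phi_k). \]
By Lemma \ref{extension lemma} each $F_k$ extends smoothly to $\mathbb{C}^d_{w_{|\Gamma_0|}} \times \widetilde{[0, L]^{|\Gamma_1|}}$, so it remains to show that the $k$-integral of this family is again smooth and that the assignment $\Phi \mapsto \int F_k\, dk$ is continuous.

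The key quantitative input is Lemma \ref{extension lemma}(2). I would first observe that, because $\tilde W((\Gamma, n), -)$ is linear in the amplitude $\omega$, its constant $C''$ may be taken of the form $C''' \|\omega\|$ for a fixed continuous seminorm $\|\cdot\|$ on forms (finitely many $w_{|\Gamma_0|}$-derivatives together with the supremum over the support). Taking $\omega = \hat\Phi(k; \cdot)$, this gives for every differential operator $D$ on $\mathbb{C}^d_{w_{|\Gamma_0|}} \times \widetilde{[0, L]^{|\Gamma_1|}}$ the bound
\[ |D F_k| \leq C''' \, \|\hat\Phi(k; \cdot)\| \, \Bigl(1 + \sum_{i=1}^{|\Gamma_0|-1} |k_i|\Bigr)^{N + |\Gamma_1| + 2|D|}. \]
Because $\Phi$ is smooth with compact support, $\hat\Phi(k; \cdot)$ is rapidly decaying: for every $M$ there is $C_M$ with $\|\hat\Phi(k; \cdot)\| \leq C_M (1 + \sum_i |k_i|)^{-M}$. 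Choosing $M > N + |\Gamma_1| + 2|D| + 2d(|\Gamma_0|-1)$ makes $\int_{(\mathbb{C}^d)^{|\Gamma_0|-1}} |D F_k|\, dk$ finite, which both justifies the interchange claimed above and, via differentiation under the integral sign, shows that $\int F_k\, dk$ is smooth on $\mathbb{C}^d_{w_{|\Gamma_0|}} \times \widetilde{[0, L]^{|\Gamma_1|}}$. Since this smooth form agrees with $\int \tilde W((\Gamma,n),\Phi)$ on the dense open part, it is the desired extension, establishing the first assertion.

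For the continuity statement, since $\Omega^{\ast, \ast}_c$ carries the inductive-limit topology it suffices to prove continuity on each subspace of forms supported in a fixed compact set $K$. The $C^\infty$-topology on the target is governed by the seminorms $\Phi \mapsto \sup_{K'} |D(\int \tilde W)|$, and the bound above gives
\[ \sup_{K'} \Bigl| D \int F_k\, dk \Bigr| \leq C''' \int_{(\mathbb{C}^d)^{|\Gamma_0|-1}} \|\hat\Phi(k; \cdot)\| \Bigl(1 + \sum_i |k_i|\Bigr)^{N + |\Gamma_1| + 2|D|} dk. \]
Factoring out the Schwartz seminorm $\sup_k \bigl[(1 + \sum_i |k_i|)^{M} \|\hat\Phi(k; \cdot)\|\bigr]$ leaves a convergent $k$-integral for $M$ large, and for $\Phi$ supported in the fixed $K$ this Schwartz seminorm is dominated by a fixed $C^\infty$-seminorm of $\Phi$ (finitely many derivatives, with a constant depending on $K$), yielding the required continuity estimate.

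The main obstacle is the bookkeeping needed to make the constant in Lemma \ref{extension lemma}(2) explicitly linear in a seminorm of the amplitude $\omega$, and the verification that the position integral commutes with the Fourier integral on the open Schwinger space; once these are in place the remainder is the standard principle that the rapid decay of $\hat\Phi$ beats the polynomial growth of the family $\{F_k\}$.
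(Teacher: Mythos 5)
Your proposal is correct and follows essentially the same route as the paper: decompose $\Phi$ by Fourier transform into simple harmonic wave forms, apply the polynomial-in-$k$ derivative bound of Lemma \ref{extension lemma}(2), and use the rapid decay of $\hat\Phi$ to justify differentiation under the $k$-integral and to extract the continuity estimate. The only point you make more explicit than the paper is the linearity of the constant $C''$ in a seminorm of the amplitude $\omega$, which is a worthwhile clarification but not a different method.
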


\begin{proof}
  We denote the space of Schwartz differential forms\footnote{A differential
  form on linear space is called Schwartz, if all its derivatives decay faster
  than any polynomials.} on $(\mathbb{C}^d)^{| \Gamma_0 | - 1}$ by $S^{\ast,
  \ast} ((\mathbb{C}^d)^{| \Gamma_0 | - 1})$, then the embeding from
  $\Omega^{\ast, \ast}_c ((\mathbb{C}^d)^{| \Gamma_0 |})$ to $S^{\ast, \ast}
  ((\mathbb{C}^d)^{| \Gamma_0 | - 1})$ is continuous. We define the following
  map from $S^{\ast, \ast} ((\mathbb{C}^d)^{| \Gamma_0 | - 1})$ to
  $\Omega^{\ast} \left( \mathbb{C}^d_{w_{| \Gamma_0 |}} \times \widetilde{[0,
  L]^{| \Gamma_1 |}} \right)$:
  \[ f : \Phi \rightarrow \int_{(\mathbb{C}^d)^{| \Gamma_0 | - 1}_{k}}
     \int_{(\mathbb{C}^d)^{| \Gamma_0 | - 1}} \tilde{W} ((\Gamma, n),
     \Phi_{k}) \tilde{\Phi} (k, \bar{k}) \prod_{i
     = 1}^{| \Gamma_0 | - 1} d^d k_i d^d \bar{k}_i, \]
  where $\tilde{\Phi} (k, \bar{k})$ is the Fourier
  transfrom of $\Phi$. By lemma \ref{extension lemma}, we have
  \begin{equation}
    \left| D \int_{(\mathbb{C}^d)^{| \Gamma_0 | - 1}} {\Delta'}^{\ast}
    \tilde{W} ((\Gamma, n), \Phi_{k}) \right| \leqslant C'' \left( 1 +
    \sum_{i = 1}^{| \Gamma_0 | - 1} | k_i |
    \right)^{\underset{e \in \Gamma_1, 1 \leqslant i \leqslant
    d}{\overset{}{\sum}} n_{i, e} + | \Gamma_1 | + 2 | D |}, \label{order
    estimate}
  \end{equation}
  where $D$ is any order $| D |$ differential operator.
  
  Since Fourier transform is a continuous isomorphism between $S^{\ast, \ast}
  ((\mathbb{C}^d)^{| \Gamma_0 | - 1})$, for any $m \geqslant 0$, we have
  \begin{equation}
    \left| \left( 1 + \sum_{i = 1}^{| \Gamma_0 | - 1} | k_i |
    \right)^{\underset{e \in \Gamma_1, 1 \leqslant i \leqslant
    d}{\overset{}{\sum}} n_{i, e} + | \Gamma_1 | + 2 | D | + m} \tilde{\Phi}
    \right| \leqslant \max \{ | D' \Phi | \},
  \end{equation}
  where $D'$ is some differential operator with polynomial growth
  coefficients. Then we have
  \[ \left| D \int_{(\mathbb{C}^d)^{| \Gamma_0 | - 1}} \tilde{W} ((\Gamma, n),
     \Phi_{k}) \tilde{\Phi} (k, \bar{k}) \right|
     \leqslant C'' \max \{ | D' \Phi | \} \left( 1 + \sum_{i = 1}^{| \Gamma_0
     | - 1} | k_i | \right)^{- m} . \]
  Therefore
  \[ \int_{(\mathbb{C}^d)^{| \Gamma_0 | - 1}} \tilde{W} ((\Gamma, n),
     \Phi_{k}) \tilde{\Phi} (k, \bar{k}) \prod_{i
     = 1}^{| \Gamma_0 | - 1} d^d k_i d^d \bar{k}_i \]
  is integrable for large $m$, and
  \[ \left| \int_{(\mathbb{C}^d)^{| \Gamma_0 | - 1}_{k}}
     \int_{(\mathbb{C}^d)^{| \Gamma_0 | - 1}} \tilde{W} ((\Gamma, n),
     \Phi_{k}) \tilde{\Phi} (k, \bar{k}) \prod_{i
     = 1}^{| \Gamma_0 | - 1} d^d k_i d^d \bar{k}_i \right| \leqslant C''' \max
     \{ | D' \Phi | \}, \]
  where $C'''$ is some constant. This shows $f$ is a continuous map. Finally,
  we notice that $f \left|_{_{\Omega^{\ast, \ast}_c ((\mathbb{C}^d)^{|
  \Gamma_0 |})}} \right.$ is the map
  \[ \Phi \in \Omega^{\ast, \ast}_c ((\mathbb{C}^d)^{| \Gamma_0 |})
     \rightarrow \int_{(\mathbb{C}^d)^{| \Gamma_0 | - 1}} \tilde{W} ((\Gamma,
     n), \Phi) \in \Omega^{\ast} \left( \mathbb{C}^d_{w_{| \Gamma_0 |}} \times
     \widetilde{[0, L]^{| \Gamma_1 |}} \right), \]
  the proof has been completed.
\end{proof}

\begin{corollary}
  \label{extension theorem2}Given a connected decorated graph $(\Gamma, n)$
  without self-loops, and $\Phi \in \Omega^{\ast, \ast}_c ((\mathbb{C}^d)^{|
  \Gamma_0 |})$, $\int_{\mathbb{C}^d_{w_{| \Gamma_0 |}}}
  \int_{(\mathbb{C}^d)^{| \Gamma_0 | - 1}} \tilde{W} ((\Gamma, n), \Phi)$ can
  be extended to a smooth differential form on $\widetilde{[0, L]^{| \Gamma_1
  |}}$. Furthermore, the map
  \[ \Phi \in \Omega^{\ast, \ast}_c ((\mathbb{C}^d)^{| \Gamma_0 |})
     \rightarrow \int_{\mathbb{C}^d_{w_{| \Gamma_0 |}}}
     \int_{(\mathbb{C}^d)^{| \Gamma_0 | - 1}} \tilde{W} ((\Gamma, n), \Phi)
     \in \Omega^{\ast} \left( \widetilde{[0, L]^{| \Gamma_1 |}} \right) \]
  is a continuous map between topological vector spaces.
\end{corollary}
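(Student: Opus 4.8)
The plan is to obtain this corollary from Theorem \ref{extension theorem1} simply by integrating out the last factor $\mathbb{C}^d_{w_{| \Gamma_0 |}}$, which under the coordinate change $z_i = w_i + w_{| \Gamma_0 |}$, $z_{| \Gamma_0 |} = w_{| \Gamma_0 |}$ is precisely the position of the final vertex. Write
\[ \alpha \assign \int_{(\mathbb{C}^d)^{| \Gamma_0 | - 1}} \tilde{W} ((\Gamma, n), \Phi); \]
by Theorem \ref{extension theorem1} this is a smooth form on $\mathbb{C}^d_{w_{| \Gamma_0 |}} \times \widetilde{[0, L]^{| \Gamma_1 |}}$ and $\Phi \mapsto \alpha$ is continuous. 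It therefore remains only to show that the fiber integration $\alpha \mapsto \int_{\mathbb{C}^d_{w_{| \Gamma_0 |}}} \alpha$ along the projection to $\widetilde{[0, L]^{| \Gamma_1 |}}$ is well defined, lands in $\Omega^{\ast} (\widetilde{[0, L]^{| \Gamma_1 |}})$, and is continuous.

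The one point that needs an argument, and the step I regard as the crux, is that $\alpha$ has compact support in the $w_{| \Gamma_0 |}$ direction. Since $\Phi$ is compactly supported, its support projects onto a compact set $K' \subseteq \mathbb{C}^d$ in the $z_{| \Gamma_0 |}$-coordinate. For $w_{| \Gamma_0 |} \nin K'$ one has $z_{| \Gamma_0 |} = w_{| \Gamma_0 |} \nin K'$ for every choice of relative coordinates $w_1, \ldots, w_{| \Gamma_0 | - 1}$, so the factor $\Phi$ in the integrand $\tilde{W} ((\Gamma, n), \Phi)$ vanishes identically and hence $\alpha = 0$ there. Thus $\alpha$ is supported in $K' \times \widetilde{[0, L]^{| \Gamma_1 |}}$. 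Because $\alpha$ is smooth and has compact vertical support, integration over the fiber $\mathbb{C}^d_{w_{| \Gamma_0 |}}$ is legitimate and, by differentiating under the integral sign, produces a smooth form on the compact base $\widetilde{[0, L]^{| \Gamma_1 |}}$; this gives the first assertion.

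For continuity I would exploit the inductive-limit structure of $\Omega^{\ast, \ast}_c ((\mathbb{C}^d)^{| \Gamma_0 |}) = \bigcup_K \Omega^{\ast, \ast}_{c, K}$, for which it suffices to check continuity of a linear map on each Fréchet piece $\Omega^{\ast, \ast}_{c, K}$ of forms supported in a fixed compact $K$. Fixing $K$ and letting $K'$ be its projection to the $z_{| \Gamma_0 |}$-axis, the map of Theorem \ref{extension theorem1} is continuous into forms supported in $K' \times \widetilde{[0, L]^{| \Gamma_1 |}}$, and fiber integration over the fixed compact set $K'$ bounds each $C^k$-seminorm of the output by a corresponding seminorm of $\alpha$ times $\mathrm{vol}(K')$; the composite is therefore continuous on $\Omega^{\ast, \ast}_{c, K}$ for every $K$, which yields continuity of the full map. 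The routine verifications I would suppress are the standard smoothness of fiber integration with compact vertical support and the elementary seminorm estimate; no genuinely new difficulty arises beyond the support observation.
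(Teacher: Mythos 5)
Your proof is correct and follows essentially the same route as the paper: invoke Theorem \ref{extension theorem1}, observe that the resulting form has compact support in the $w_{|\Gamma_0|}$-direction (since $z_{|\Gamma_0|}=w_{|\Gamma_0|}$ and $\Phi$ has compact support), and conclude by continuity of fiber integration over $\mathbb{C}^d_{w_{|\Gamma_0|}}$. The paper states the compact-support and continuity facts without elaboration, whereas you supply the justifications; the only point left implicit in your write-up is that the vanishing of $\alpha$ for $w_{|\Gamma_0|}\nin K'$ persists on the boundary strata of $\widetilde{[0,L]^{|\Gamma_1|}}$ by density of $(0,L]^{|\Gamma_1|}$, which is immediate.
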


\begin{proof}
  We note the support of $\int_{(\mathbb{C}^d)^{| \Gamma_0 | - 1}} \tilde{W}
  ((\Gamma, n), \Phi)$ is compact, then our claim follows from Theorem
  \ref{extension theorem1} and the following fact:
  
  the integration over $\mathbb{C}^d_{w_{| \Gamma_0 |}}$ is a continuous map
  from $\Omega^{\ast}_c \left( \mathbb{C}^d_{w_{| \Gamma_0 |}} \times
  \widetilde{[0, L]^{| \Gamma_1 |}} \right)$ to $\Omega^{\ast} \left(
  \widetilde{[0, L]^{| \Gamma_1 |}} \right)$.
\end{proof}

\begin{corollary}
  \label{extension theorem3}Given a connected decorated graph $(\Gamma, n)$
  without self-loops, and $\Phi \in \Omega^{\ast, \ast}_c ((\mathbb{C}^d)^{|
  \Gamma_0 |})$, then $W_0^L ((\Gamma, n), \Phi)$ exists. Furthermore,
  the map
  \[ \Phi \in \Omega^{\ast, \ast}_c ((\mathbb{C}^d)^{| \Gamma_0 |})
     \rightarrow W_0^L ((\Gamma, n), \Phi) \]
  defines a distribution on $(\mathbb{C}^d)^{| \Gamma_0 |}$.
\end{corollary}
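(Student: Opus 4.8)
The plan is to reduce the whole statement to Corollary \ref{extension theorem2}, which already produces a smooth extension of the fiber-integrated form to the compact space. First I would rewrite the regularized integral as an integral over Schwinger space of the form obtained by integrating out all vertex positions. Using Fubini together with the decomposition $\int_{(\mathbb{C}^d)^{|\Gamma_0|}} = \int_{\mathbb{C}^d_{w_{|\Gamma_0|}}}\int_{(\mathbb{C}^d)^{|\Gamma_0|-1}}$, set
\[ \eta \assign \int_{\mathbb{C}^d_{w_{|\Gamma_0|}}}\int_{(\mathbb{C}^d)^{|\Gamma_0|-1}}\tilde{W}((\Gamma,n),\Phi) \in \Omega^{\ast}((0,L]^{|\Gamma_1|}), \]
so that $W_{\varepsilon}^L((\Gamma,n),\Phi) = \int_{[\varepsilon,L]^{|\Gamma_1|}}\eta$. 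By Corollary \ref{extension theorem2}, $\eta$ extends to a smooth differential form $\tilde{\eta}$ on the compact manifold with corners $\widetilde{[0,L]^{|\Gamma_1|}}$.

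Second, I would let $\pi : \widetilde{[0,L]^{|\Gamma_1|}} \to [0,L]^{|\Gamma_1|}$ denote the blow-down map. It is proper, and it restricts to an orientation-preserving diffeomorphism from $\pi^{-1}((0,L]^{|\Gamma_1|})$ (the locus where all $t_e>0$) onto $(0,L]^{|\Gamma_1|}$, with $\pi^{\ast}\eta = \tilde{\eta}$ there. The change of variables therefore gives $\int_{[\varepsilon,L]^{|\Gamma_1|}}\eta = \int_{\pi^{-1}([\varepsilon,L]^{|\Gamma_1|})}\tilde{\eta}$. As $\varepsilon\to 0$ the nested domains $\pi^{-1}([\varepsilon,L]^{|\Gamma_1|})$ increase to $\widetilde{[0,L]^{|\Gamma_1|}}$ up to the blown-up boundary faces, which lie in $\bigcup_{S}\partial$-strata and hence form a set of measure zero. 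Since $\tilde{\eta}$ is smooth on a compact space it is bounded, so $\int_{\widetilde{[0,L]^{|\Gamma_1|}}}|\tilde{\eta}| < \infty$, and the dominated convergence theorem yields
\[ W_0^L((\Gamma,n),\Phi) = \lim_{\varepsilon\to 0}\int_{[\varepsilon,L]^{|\Gamma_1|}}\eta = \int_{\widetilde{[0,L]^{|\Gamma_1|}}}\tilde{\eta}, \]
which is finite. This establishes existence of $W_0^L((\Gamma,n),\Phi)$.

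Third, for the distribution claim I would note that the assignment $\Phi \mapsto \tilde{\eta}$ is precisely the continuous map of Corollary \ref{extension theorem2}, from $\Omega^{\ast,\ast}_c((\mathbb{C}^d)^{|\Gamma_0|})$ to $\Omega^{\ast}(\widetilde{[0,L]^{|\Gamma_1|}})$, and that integration over the compact manifold with corners $\widetilde{[0,L]^{|\Gamma_1|}}$ (picking out the top-degree component) is a continuous linear functional on $\Omega^{\ast}(\widetilde{[0,L]^{|\Gamma_1|}})$. Composing the two, $\Phi \mapsto W_0^L((\Gamma,n),\Phi)$ is a continuous linear functional, hence a distribution on $(\mathbb{C}^d)^{|\Gamma_0|}$.

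The main obstacle is the second step: exchanging the $\varepsilon\to 0$ limit with the integral across the blow-up. The delicate point is that the coordinates $t_e$ degenerate along the boundary strata of $\widetilde{[0,L]^{|\Gamma_1|}}$, so one must verify that $\pi$ is measure-preserving on the open locus where all $t_e>0$ and that the exceptional faces are $\pi$-negligible. Once the smooth extension $\tilde{\eta}$ on the compact space is granted by Corollary \ref{extension theorem2}, boundedness of $\tilde{\eta}$ makes the dominated convergence argument automatic, and the remaining verifications are elementary properties of the real blow-up construction.
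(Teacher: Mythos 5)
Your proposal is correct and follows essentially the same route as the paper: reduce to Corollary \ref{extension theorem2} and then compose with the (continuous) integration map over the compact manifold with corners $\widetilde{[0, L]^{| \Gamma_1 |}}$. The extra detail you supply in your second step --- that the blow-down is a diffeomorphism over the locus where all $t_e > 0$, that the exceptional faces have measure zero, and that dominated convergence identifies $\lim_{\varepsilon \to 0} \int_{[\varepsilon, L]^{| \Gamma_1 |}} \eta$ with $\int_{\widetilde{[0, L]^{| \Gamma_1 |}}} \tilde{\eta}$ --- is exactly the justification the paper leaves implicit, and it is correct.
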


\begin{proof}
  This follows from Corollary \ref{extension theorem2} and the following fact:
  
  The integration over $\widetilde{[0, L]^{| \Gamma_1 |}}$ is a continuous map
  from $\Omega^{\ast} \left( \widetilde{[0, L]^{| \Gamma_1 |}} \right)$ to
  $\mathbb{C}$.
\end{proof}

\begin{remark}
  If we carefully compute the order of the distribution $W_0^L ((\Gamma, n),
  -)$, we will find out the order is less than or equal to $\underset{e \in \Gamma_1, 1
  \leqslant i \leqslant d}{\overset{}{\sum}} n_{i, e} + | \Gamma_1 |$. We
  leave the proof to the reader. 
\end{remark}

\begin{corollary}
  \label{distribution theorem}Given a connected decorated graph $(\Gamma, n)$
  without self-loops, and $\Phi \in \Omega^{\ast, \ast}_c ((\mathbb{C}^d)^{|
  \Gamma_0 |})$, the map
  \[ \Phi \in \Omega^{\ast, \ast}_c ((\mathbb{C}^d)^{| \Gamma_0 |})
     \rightarrow W_0^{+ \infty} ((\Gamma, n), \Phi) \]
  defines a distribution on $(\mathbb{C}^d)^{| \Gamma_0 |}$. Furthermore,
  \[ W_0^{+ \infty} ((\Gamma, n), -) = \lim_{L \rightarrow + \infty} W_0^L
     ((\Gamma, n), -) \]
  as distributions.
\end{corollary}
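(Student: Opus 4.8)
The plan is to reduce the statement to the two facts already established: that for every finite $L$ the truncated functional $W_0^L((\Gamma,n),-)$ is a distribution (Corollary \ref{extension theorem3}), and that the double limit defining $W_0^{+\infty}((\Gamma,n),\Phi)$ exists for every test form $\Phi$ (Proposition \ref{finiteness transfer}). The only genuinely new content is to promote the pointwise existence statement ``$W_0^{+\infty}((\Gamma,n),\Phi)$ exists for each $\Phi$'' to the functional statement ``$W_0^{+\infty}((\Gamma,n),-)$ is continuous,'' and simultaneously to identify this functional with $\lim_{L\to+\infty}W_0^L((\Gamma,n),-)$.

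First I would establish the pointwise identity $W_0^{+\infty}((\Gamma,n),\Phi)=\lim_{L\to+\infty}W_0^L((\Gamma,n),\Phi)$ for each fixed $\Phi$. By Corollary \ref{extension theorem3} the inner limit $W_0^L=\lim_{\varepsilon\to0}W_\varepsilon^L$ exists for every $L$, and by Proposition \ref{finiteness transfer} the double limit $\lim_{(\varepsilon,L)\to(0,+\infty)}W_\varepsilon^L((\Gamma,n),\Phi)$ exists and, by definition, equals $W_0^{+\infty}((\Gamma,n),\Phi)$. I would then invoke the elementary fact that whenever a double limit exists and each inner single-variable limit exists, the corresponding iterated limit exists and agrees with the double limit: given $\delta>0$, choose a neighborhood of $(0,+\infty)$ on which $W_\varepsilon^L$ is within $\delta$ of $W_0^{+\infty}((\Gamma,n),\Phi)$, fix $L$ in the allowed range, and let $\varepsilon\to0$ to conclude $|W_0^L((\Gamma,n),\Phi)-W_0^{+\infty}((\Gamma,n),\Phi)|\le\delta$. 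Applying this with inner variable $\varepsilon\to0$ yields the desired identity for every $\Phi$.

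Next I would deduce the distribution property by a Banach--Steinhaus argument, mirroring the one already used in the proof of Proposition \ref{finiteness transfer}. Fix any sequence $L_j\to+\infty$. By Corollary \ref{extension theorem3} each $W_0^{L_j}((\Gamma,n),-)$ is a continuous linear functional on $\Omega^{\ast,\ast}_c((\mathbb{C}^d)^{|\Gamma_0|})$, and by the previous step this sequence converges pointwise to $W_0^{+\infty}((\Gamma,n),-)$. Since $\Omega^{\ast,\ast}_c((\mathbb{C}^d)^{|\Gamma_0|})$ is an LF-space, hence barrelled, the Banach--Steinhaus theorem guarantees that the pointwise limit of a pointwise-convergent sequence of continuous linear functionals is again continuous; thus $W_0^{+\infty}((\Gamma,n),-)$ is a distribution. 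Because the limiting value is independent of the chosen sequence $L_j$ (it equals the genuine $L\to+\infty$ limit produced in the first step), the convergence $W_0^{+\infty}((\Gamma,n),-)=\lim_{L\to+\infty}W_0^L((\Gamma,n),-)$ holds when tested against every $\Phi$, which is precisely convergence in the space of distributions.

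I expect the only delicate point to be the passage from pointwise convergence to continuity of the limit functional; this is exactly where barreledness of the test space is essential, and it is the same mechanism invoked earlier in the text. The interchange-of-limits step is elementary once the double limit is known to exist, and no estimates beyond those furnished by Lemma \ref{extension lemma} and Corollary \ref{extension theorem3} are required.
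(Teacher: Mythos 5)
Your proposal is correct and follows essentially the same route as the paper: identify $W_0^{+\infty}((\Gamma,n),\Phi)$ with $\lim_{L\to+\infty}W_0^L((\Gamma,n),\Phi)$ pointwise via Proposition \ref{finiteness transfer} and Corollary \ref{extension theorem3}, then upgrade pointwise convergence of the distributions $W_0^L((\Gamma,n),-)$ to convergence in the space of distributions by the Banach--Steinhaus theorem for the barrelled test space $\Omega^{\ast,\ast}_c((\mathbb{C}^d)^{|\Gamma_0|})$. You in fact spell out the interchange-of-limits step and the barrelledness hypothesis more explicitly than the paper, which leaves those details to the reader.
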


\begin{proof}
  By Proposition \ref{finiteness transfer}, we know
  \[
  W_0^{+ \infty} ((\Gamma, n), \Phi)=\lim_{\underset{L \rightarrow +
     \infty}{\varepsilon \rightarrow 0}}W_{\varepsilon}^{L} ((\Gamma, n), \Phi)=\lim_{L\rightarrow+\infty}W_{0}^{L} ((\Gamma, n), \Phi).
  \]
  Our claim follows from the following fact in distribution theory: for a family of distributions $T_{s}\in \mathcal{D}^{\ast, \ast} ((\mathbb{C}^d)^{| \Gamma_0 |})$, if $\lim_{s\rightarrow s_{0}} T_{s}(\phi)$ exists for any $\phi\in\Omega^{\ast, \ast}_c ((\mathbb{C}^d)^{|
  \Gamma_0 |})$, then $\lim_{s\rightarrow s_{0}}T_{s}$ exists as distribution. This can be proved by Banach–Steinhaus theorem for $\Omega^{\ast, \ast}_c
  ((\mathbb{C}^d)^{| \Gamma_0 |})$. The proof is similar to (and much easier than) the proof of Proposition \ref{finiteness transfer}, we left the details to the reader.  
\end{proof}

\begin{remark}
  For topological holomorphic theories, we also have analogous results to
  Theorem \ref{extension theorem1}, see {\cite{minghaoBrian2024}} for details.
\end{remark}

\section{Holomorphicity of Feynman graph integrals.}

\subsection{Holomorphicity and integrals over boundaries of compactified
Schwinger spaces.}

From Corollary \ref{distribution theorem}, we know $W_0^{+ \infty} ((\Gamma,
n), -)$ is a distribution. It is natural to ask whether $W_0^{+ \infty}
((\Gamma, n), -)$ is holomorphic, i.e.
\begin{equation}
  \bar{\partial} W_0^{+ \infty} ((\Gamma, n), -) = 0. \label{holomorphicity}
\end{equation}
Unfortunately, $\left( \ref{holomorphicity} \right)$ is not true in general.
The failure of $\left( \ref{holomorphicity} \right)$ is called
anomalies by physicists. In this subsection, we show these anomalies can be
computed by integrals over boundaries of compactified Schwinger spaces.

\begin{proposition}
  \label{differential transfer}Given a connected decorated graph $(\Gamma, n)$
  without self-loops, and $\Phi \in \Omega^{\ast, \ast}_c ((\mathbb{C}^d)^{|
  \Gamma_0 |})$, $L > 0$, the following equalities hold:
  \begin{enumeratenumeric}
    \item
    \begin{equation}
      (\bar{\partial} W_0^L) ((\Gamma, n), -) = (- 1)^{| \Gamma_1 |}
      \int_{\partial \widetilde{[0, L]^{| \Gamma_1 |}}}
      \int_{(\mathbb{C}^d)^{| \Gamma_0 |}} \tilde{W} ((\Gamma, n), -) .
      \label{QME}
    \end{equation}
    \item
    \[ (\bar{\partial} W_0^{+ \infty}) ((\Gamma, n), -) = \lim_{L \rightarrow
       + \infty} (- 1)^{| \Gamma_1 |} \int_{\partial \widetilde{[0, L]^{|
       \Gamma_1 |}}} \int_{(\mathbb{C}^d)^{| \Gamma_0 |}} \tilde{W} ((\Gamma,
       n), -) . \]
  \end{enumeratenumeric}
\end{proposition}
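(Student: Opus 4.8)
The plan is to reduce the distributional identity to an application of Stokes' theorem on the compact manifold with corners $\widetilde{[0, L]^{| \Gamma_1 |}}$, with the algebraic engine supplied by the closedness of the propagator in Lemma \ref{closeness of propagator}. First I would unwind the left-hand side of $(\ref{QME})$ via the duality definition of the Dolbeault operator on distributions: $(\bar{\partial} W_0^L)((\Gamma,n),\Phi)$ equals $\pm W_0^L((\Gamma,n),\bar{\partial}\Phi)$, and by Corollary \ref{extension theorem2} the latter is realized as an honest integral $\int_{\widetilde{[0,L]^{|\Gamma_1|}}}\int_{(\mathbb{C}^d)^{|\Gamma_0|}}\tilde{W}((\Gamma,n),\bar{\partial}\Phi)$ over a \emph{compact} base, all integrands having been extended smoothly by Theorem \ref{extension theorem1} and its corollaries.

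The key input is that the integrand is closed for a mixed differential. Writing $\mathcal{P} = \prod_{e\in\Gamma_1}\partial_{z_{h(e)}}^{n(e)}P_{t_e}$, Lemma \ref{closeness of propagator}(1) together with the fact that the constant-coefficient holomorphic derivatives $\partial_{z_{h(e)}}^{n(e)}$ commute with both $d_t$ and $\bar{\partial}$ gives $(d_t+\bar{\partial})\mathcal{P}=0$. Since $\Phi$ is independent of the Schwinger parameters, $d_t\Phi=0$, so by the Leibniz and Koszul rules $\tilde{W}((\Gamma,n),\bar{\partial}\Phi)=(-1)^{d|\Gamma_1|}(d_t+\bar{\partial})\tilde{W}((\Gamma,n),\Phi)$. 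I would then integrate this identity over the fiber $(\mathbb{C}^d)^{|\Gamma_0|}$ first and argue that the $\bar{\partial}$-term drops out. Indeed $\bar{\partial}$ acts only in the fiber directions, so its fiber integral can be nonzero only on the piece of fiber bidegree $(d|\Gamma_0|,\,d|\Gamma_0|-1)$; on that piece the holomorphic fiber differential $\partial$ vanishes for degree reasons, hence the fiber-top component of $\bar{\partial}\tilde{W}$ agrees with that of the full fiber de Rham differential and integrates to zero by Stokes on the boundaryless fiber, using the compact support of $\Phi$.

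What survives is the $d_t$-term. Because $d_t$ differentiates only in the base directions it commutes, up to a sign prescribed by the Conventions, with integration along $(\mathbb{C}^d)^{|\Gamma_0|}$, so that $\int_{(\mathbb{C}^d)^{|\Gamma_0|}}d_t\tilde{W}((\Gamma,n),\Phi)=\pm\, d_t\!\int_{(\mathbb{C}^d)^{|\Gamma_0|}}\tilde{W}((\Gamma,n),\Phi)$, and the inner integral is a smooth form on $\widetilde{[0,L]^{|\Gamma_1|}}$ by Corollary \ref{extension theorem2}. Applying Stokes' theorem on the compact manifold with corners $\widetilde{[0,L]^{|\Gamma_1|}}$ turns $d_t$ into integration over $\partial\widetilde{[0,L]^{|\Gamma_1|}}$, giving exactly the right-hand side of $(\ref{QME})$; this settles part 1. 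For part 2 I would invoke Corollary \ref{distribution theorem}, which identifies $W_0^{+\infty}((\Gamma,n),-)=\lim_{L\to+\infty}W_0^L((\Gamma,n),-)$ as distributions; since $\bar{\partial}$ is continuous for the weak topology on distributions, applying part 1 for each $L$ and passing to the limit yields the claimed formula.

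The routine-but-delicate part, and the one I expect to be the main obstacle, is the sign bookkeeping. The form $\mathcal{P}$ has degree $d|\Gamma_1|$, so the naive Koszul factor is $(-1)^{d|\Gamma_1|}$; matching this against the $(-1)^{|\Gamma_1|}$ appearing in $(\ref{QME})$ forces one to combine carefully the duality sign for $\bar{\partial}$ on distributions, the sign rule for iterated integration in the Conventions, and the sign incurred in commuting $d_t$ past the fiber integration. I would fix orientations of $(\mathbb{C}^d)^{|\Gamma_0|}$ and $\widetilde{[0,L]^{|\Gamma_1|}}$ once and track each factor explicitly; the structural content of the argument, namely closedness of the propagator followed by Stokes, is otherwise straightforward given the smooth-extension results already established.
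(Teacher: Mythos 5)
Your proposal is correct and follows essentially the same route as the paper: dualize $\bar{\partial}$ onto the test form, use Lemma \ref{closeness of propagator} to trade $\bar{\partial}$ of the propagator product for $d_t$, discard the fiber-exact term by compact support, and apply Stokes on $\widetilde{[0,L]^{|\Gamma_1|}}$, with part 2 following from commuting $\bar{\partial}$ with the distributional limit. Your explicit justification of why the $\bar{\partial}$-exact term vanishes under fiber integration is a detail the paper leaves implicit, but the argument is the same.
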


\begin{proof}
  
  \begin{enumeratenumeric}
    \item We know
    \begin{eqnarray*}
      &  & \tilde{W} ((\Gamma, n), \bar{\partial} \Phi)\\
      & = & (- 1)^{(d - 1) | \Gamma_1 |} \bar{\partial} \left( \prod_{e \in
      \Gamma_1 } \partial_{z_{h (e)}}^{n (e)} P_{t_e} (z_{h
      (e)} - z_{t (e)}, \bar{z}_{h (e)} -
      \bar{z}_{t (e)}) \wedge \Phi \right)\\
      & - & (- 1)^{(d - 1) | \Gamma_1 |} \bar{\partial} \left( \prod_{e \in
      \Gamma_1 } \partial_{z_{h (e)}}^{n (e)} P_{t_e} (z_{h
      (e)} - z_{t (e)}, \bar{z}_{h (e)} -
      \bar{z}_{t (e)}) \right) \wedge \Phi.
    \end{eqnarray*}
    By Lemma \ref{closeness of propagator}, the second term equals
    \begin{eqnarray*}
        & &-(- 1)^{(d - 1) | \Gamma_1 |}\bar{\partial} \left( \prod_{e \in
      \Gamma_1 } \partial_{z_{h (e)}}^{n (e)} P_{t_e} (z_{h
      (e)} - z_{t (e)}, \bar{z}_{h (e)} -
      \bar{z}_{t (e)}) \right) \wedge \Phi\\
      &=&
      (- 1)^{(d - 1) | \Gamma_1 |} d_t \left( \prod_{e \in \Gamma_1 }
      \partial_{z_{h (e)}}^{n (e)} P_{t_e} (z_{h (e)} -
      z_{t (e)}, \bar{z}_{h (e)} -
      \bar{z}_{t (e)}) \right) \wedge \Phi\\
      &=&
      (- 1)^{(d - 1) | \Gamma_1 |} d_t (\tilde{W} ((\Gamma, n), \Phi)).
    \end{eqnarray*}
    Therefore, by Stokes theorem, we have
    \begin{eqnarray*}
      &  & (\bar{\partial} W_0^L) ((\Gamma, n), \Phi)\\
      & = & {(- 1)^{d | \Gamma_1 |}}  W_0^L ((\Gamma, n), \bar{\partial}
      \Phi)\\
      & = & {(- 1)^{d | \Gamma_1 |}}  \int_{\widetilde{[0, L]^{| \Gamma_1
      |}}} \int_{(\mathbb{C}^d)^{| \Gamma_0 |}} \tilde{W} ((\Gamma, n),
      \bar{\partial} \Phi)\\
      & = & (- 1)^{| \Gamma_1 |} \int_{\widetilde{[0, L]^{| \Gamma_1 |}}}
      \int_{(\mathbb{C}^d)^{| \Gamma_0 |}} d_t (\tilde{W} ((\Gamma, n),
      \Phi))\\
      & = & (- 1)^{| \Gamma_1 |} \int_{\partial \widetilde{[0, L]^{| \Gamma_1
      |}}} \int_{(\mathbb{C}^d)^{| \Gamma_0 |}} \tilde{W} ((\Gamma, n), \Phi)
      .
    \end{eqnarray*}
    \item We know taking derivatives commutes with limits of distributions,
    i.e.
    \[ \bar{\partial} W_0^{+ \infty} ((\Gamma, n), -) = \lim_{L + \infty}
       \bar{\partial} W_0^L ((\Gamma, n), -), \]
    then the conclusion follows.
  \end{enumeratenumeric}
\end{proof}

Let's describe the boundaries of compactified Schwinger space $\widetilde{[0,
L]^{| \Gamma_1 |}}$ in detail.

Given $\Gamma'$ is a subgraph of $\Gamma$, then both $\Gamma'_1 = \{ e'_1,
\ldots e'_{| \Gamma'_1 |} \}$ and $\Gamma_1 = \{ e_1, \ldots, e_{| \Gamma_1 |}
\}$ are ordered sets. Assume $\Gamma_1 \backslash \Gamma_1' = \left\{ e_{i_1},
\ldots, e_{i_{| \Gamma_1 | - | \Gamma_1' |}} \right\}$, then there exists a
unique permutation \[
\left\{ e_{i_1}, \ldots, e_{i_{| \Gamma_1 | - | \Gamma_1'
|}}, e'_1, \ldots e'_{| \Gamma'_1 |} \right\} \rightarrow \{ e_1, \ldots, e_{|
\Gamma_1 |} \}.\]
 We denote the sign of this permutation by $(- 1)^{\sigma
(\Gamma', \Gamma / \Gamma')}$, where $\Gamma / \Gamma'$ is the graph obtained
by the quotient of $\Gamma$ by $\Gamma'$.

The boundaries $\partial \widetilde{[0, L]^{| \Gamma_1 |}}$ has the following
decomposition:
\[ \partial \widetilde{[0, L]^{| \Gamma_1 |}} = \left( - \partial_0
   \widetilde{[0, L]^{| \Gamma_1 |}} \right) \cup \partial_L \widetilde{[0,
   L]^{| \Gamma_1 |}}, \]
where $\partial_0 \widetilde{[0, L]^{| \Gamma_1 |}}$($\partial_L
\widetilde{[0, L]^{| \Gamma_1 |}}$) describe the boundary components near the
origin (away from the origin). More precisely, we have
\[ \left\{\begin{array}{l}
     \partial_0 \widetilde{[0, L]^{| \Gamma_1 |}} = \bigcup_{\Gamma' \subseteq
     \Gamma} (- 1)^{\sigma (\Gamma', \Gamma / \Gamma')} \widetilde{(0, +
     \infty)^{| \Gamma_1' |}} /\mathbb{R}^+ \times \widetilde{[0, + L]^{|
     \Gamma_1 \backslash \Gamma_1' |}}\\
     \partial_L \widetilde{[0, L]^{| \Gamma_1 |}} = \bigcup_{e \in \Gamma_1}
     (- 1)^{| e |} \{ L \} \times  \widetilde{[0, L]^{| (\Gamma / e)_1 |}}
   \end{array}\right. . \]
\begin{remark}\label{QME explanation}
  If the reader is familar with Batalin--Vilkovisky
  formalism{\cite{BATALIN198127,batalin1983generalizedbf0904,Schwarz:1992nx,costellorenormalization,1997aksz}},
  it should be obvious that formula $\left( \ref{QME} \right)$ can be used to
  compute the failure of the quantum master equation:
  \[ \left( Q I + \frac{1}{2} \{ I, I \} + \hbar \Delta_{\tmop{BV}} I \right)
     e^{\frac{1}{\hbar} I} = O e^{\frac{1}{\hbar} I} . \]
  More specifically, the term $(\bar{\partial} W_0^L) ((\Gamma, n), -)$
  corresponds to $(Q I) e^{\frac{1}{\hbar} I}$, the term
  \[ \int_{\partial_L \widetilde{[0, L]^{| \Gamma_1 |}}}
     \int_{(\mathbb{C}^d)^{| \Gamma_0 |}} \tilde{W} ((\Gamma, n), -) \]
  corresponds to $\left( \frac{1}{2} \{ I, I \} + \hbar \Delta_{\tmop{BV}} I
  \right) e^{\frac{1}{\hbar} I}$, and the term
  \[ \int_{\partial_0 \widetilde{[0, L]^{| \Gamma_1 |}}}
     \int_{(\mathbb{C}^d)^{| \Gamma_0 |}} \tilde{W} ((\Gamma, n), -) \]
  corresponds to the anomaly $O e^{\frac{1}{\hbar} I}$.
\end{remark}

The following result shows that the integrals over $\partial_L \widetilde{[0,
L]^{| \Gamma_1 |}}$ have no contributions to $(\bar{\partial} W_0^{+ \infty})
((\Gamma, n), -)$:

\begin{proposition}
  \label{no IR boundary integrals}Given a connected decorated graph $(\Gamma,
  n)$ without self-loops, and $\Phi \in \Omega^{\ast, \ast}_c
  ((\mathbb{C}^d)^{| \Gamma_0 |})$, $e \in \Gamma_1$ is any edge in $\Gamma$,
  then
  \[ \lim_{L \rightarrow + \infty} \int_{\{ L \} \times  \widetilde{[0, L]^{|
     (\Gamma / e)_1 |}}} \int_{(\mathbb{C}^d)^{| \Gamma_0 |}} \tilde{W}
     ((\Gamma, n), \Phi) = 0 \]
\end{proposition}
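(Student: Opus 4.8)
The plan is to show that the restriction of the integrand to the face $\{t_e = L\}$ carries an explicit decaying prefactor $L^{-d-|n(e)|}$ coming from the heat kernel of the edge $e$, while the remaining integral stays bounded as $L \to +\infty$ because it is itself a Feynman graph integral for the graph with $e$ deleted. Here $|n(e)| = \sum_{i} n_{i,e}$.

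First I would restrict $\tilde W((\Gamma,n),\Phi)$ to the boundary component $\{t_e = L\}$. Since $t_e$ is constant there, $d t_e = 0$, so the factor $P_{t_e} = H - dt_e\wedge\bar\partial^{\ast}_z H$ collapses to its heat-kernel part $H(L, z_{h(e)}-z_{t(e)}, \ldots)$. Computing the holomorphic derivatives of the Gaussian $H$ gives
\[ \partial^{n(e)}_{z_{h(e)}}H(L, z_{h(e)}-z_{t(e)},\ldots) = \frac{1}{(2\pi L)^d}\prod_{i=1}^d\left(-\frac{\bar z^i_{h(e)}-\bar z^i_{t(e)}}{2L}\right)^{n_{i,e}}e^{-\frac{|z_{h(e)}-z_{t(e)}|^2}{2L}}\,d^d(\bar z_{h(e)}-\bar z_{t(e)}), \]
which exhibits the explicit prefactor $L^{-d-|n(e)|}$.

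Next I would repackage the boundary integral. Absorbing the edge-$e$ factor, minus its $L$-power, into the test form, I set
\[ \Psi_L = \prod_{i=1}^d\left(\bar z^i_{h(e)}-\bar z^i_{t(e)}\right)^{n_{i,e}}e^{-\frac{|z_{h(e)}-z_{t(e)}|^2}{2L}}\,d^d(\bar z_{h(e)}-\bar z_{t(e)})\wedge\Phi \in \Omega^{\ast,\ast}_c((\mathbb{C}^d)^{|\Gamma_0|}), \]
a compactly supported form with the same support as $\Phi$. The face $\{t_e=L\}$ of $\widetilde{[0,L]^{|\Gamma_1|}}$ is canonically $\{L\}\times\widetilde{[0,L]^{|(\Gamma/e)_1|}}$: the blow-ups along $\Delta_S$ with $e\in S$ are disjoint from $\{t_e=L\}$, while those with $e\notin S$ restrict to the intrinsic blow-ups of the remaining coordinates. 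Hence the remaining integral is, up to a sign and an $L$-independent constant, exactly the Feynman integral $W_0^L((\Gamma\setminus e, n|_{\Gamma\setminus e}), \Psi_L)$ of the deleted graph $\Gamma\setminus e$ (which keeps all $|\Gamma_0|$ vertices, has no self-loops, and whose edge set is $(\Gamma/e)_1 = \Gamma_1\setminus\{e\}$; if $\Gamma\setminus e$ is disconnected I would factor over its components). Thus
\[ \int_{\{L\}\times\widetilde{[0,L]^{|(\Gamma/e)_1|}}}\int_{(\mathbb{C}^d)^{|\Gamma_0|}}\tilde W((\Gamma,n),\Phi) = \pm\, C_0\, L^{-d-|n(e)|}\, W_0^L((\Gamma\setminus e, n|_{\Gamma\setminus e}),\Psi_L). \]

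Finally I would control the last factor as $L\to+\infty$. On the fixed compact support of $\Phi$ one has $e^{-|z_{h(e)}-z_{t(e)}|^2/(2L)}\to 1$ together with all its derivatives uniformly, so $\Psi_L \to \Psi_\infty$ in $\Omega^{\ast,\ast}_c((\mathbb{C}^d)^{|\Gamma_0|})$ and the family $\{\Psi_L\}_{L\ge 1}\cup\{\Psi_\infty\}$ is bounded in the test space. By Corollary \ref{distribution theorem} the distributions $W_0^L((\Gamma\setminus e, n|_{\Gamma\setminus e}),-)$ converge pointwise as $L\to+\infty$, so by the Banach--Steinhaus theorem (exactly as in the proof of Proposition \ref{finiteness transfer}) they are equicontinuous; applied to the bounded family $\{\Psi_L\}$ this yields a uniform bound $|W_0^L((\Gamma\setminus e, n|_{\Gamma\setminus e}),\Psi_L)| \le C$ independent of $L$. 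Multiplying by the prefactor $L^{-d-|n(e)|}\to 0$ gives the claim. I expect the main obstacle to be the bookkeeping of the repackaging step: checking that the induced compactification on the face $\{t_e=L\}$ coincides with the intrinsic compactified Schwinger space of $\Gamma\setminus e$, and that the orientation signs and the possibly-disconnected case are handled so that the remaining integral is an honest $W_0^L$ to which the convergence and equicontinuity results apply. The analytic heart is comparatively easy, since the decay $L^{-d-|n(e)|}$ of the edge-$e$ heat kernel does the work while the compact support of $\Phi$ keeps the remaining factor bounded.
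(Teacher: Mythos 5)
Your proposal is correct and follows essentially the same route as the paper: restrict to the face $\{t_e=L\}$ so the edge-$e$ propagator collapses to its heat-kernel part, absorb that factor into a new compactly supported test form for the deleted graph $\Gamma\setminus e$, and then combine the Banach--Steinhaus equicontinuity of the family $W_0^L((\Gamma\setminus e,\cdot),-)$ with the $L^{-d}$ (in your version $L^{-d-|n(e)|}$) decay of the heat kernel on the fixed compact support of $\Phi$. The only cosmetic difference is that the paper keeps the entire $L$-dependent heat-kernel factor inside the test form $\tilde{\Phi}(L)$ and lets $\max\{|D\tilde{\Phi}(L)|\}\to 0$ do the work, whereas you extract the power of $L$ explicitly; the two bookkeepings are equivalent.
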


\begin{proof}
  Let $\Gamma \backslash e$ be the subgraph generated by edges in $\Gamma_1
  \backslash \{ e \}$, with same vertices as $\Gamma$, and $\tilde{\Phi} (L)
  \in \Omega^{\ast, \ast}_c ((\mathbb{C}^d)^{| \Gamma_0 |})$ is defined by the
  following formula:
  \[ \tilde{\Phi} (L) = H (t_e, z_{h (e)} -
     z_{t (e)}, \bar{z}_{h (e)} -
     \bar{z}_{t (e)}) \wedge \Phi . \]
  Then we perform the following calculation:
  \begin{eqnarray*}
    &  & \lim_{L \rightarrow + \infty} \int_{\{ L \} \times  \widetilde{[0,
    L]^{| (\Gamma / e)_1 |}}} \int_{(\mathbb{C}^d)^{| \Gamma_0 |}} \tilde{W}
    ((\Gamma, n), \Phi)\\
    & = & (\pm) \lim_{L \rightarrow + \infty} \lim_{\varepsilon \rightarrow
    0} \int_{\{ L \} \times  [\varepsilon, L]^{| (\Gamma / e)_1 |}}
    \int_{(\mathbb{C}^d)^{| \Gamma_0 |}} \\
    & &\prod_{e' \in \Gamma_1 \backslash \{
    e \} } \partial_{z_{h (e')}}^{n (e')} P_{t_{e'}} (z_{h
    (e')} - z_{t (e')}, \bar{z}_{h (e')} -
    \bar{z}_{t (e')})\\
    & \wedge & H (t_e, z_{h (e)} - z_{t
    (e)}, \bar{z}_{h (e)} - \bar{z}_{t
    (e)}) \wedge \Phi\\
    & = & (\pm) \lim_{L \rightarrow + \infty} W_0^L ((\Gamma \backslash e, n
    |  \nobracket_{\Gamma \backslash e}), \tilde{\Phi} (L)) .
  \end{eqnarray*}
  By Banach–Steinhaus theorem for Frechet space $\Omega^{\ast, \ast} (K)$,
  where $K$ is the support of $\Phi$, we have
  \[ | W_0^L ((\Gamma \backslash e, n |  \nobracket_{\Gamma \backslash e}),
     \varphi) | \leqslant C max \{ | D \varphi | \} \quad \text{ for any
     } \varphi \in \Omega^{\ast, \ast} (K), \]
  where $D$ is some differential operator on $K$, $C$ is some constant.
  Therefore
  \begin{eqnarray*}
    &  & \left| \lim_{L \rightarrow + \infty} \int_{\{ L \} \times 
    \widetilde{[0, L]^{| (\Gamma / e)_1 |}}} \int_{(\mathbb{C}^d)^{| \Gamma_0
    |}} \tilde{W} ((\Gamma, n), \Phi) \right|\\
    & = & | \lim_{L \rightarrow + \infty} W_0^L ((\Gamma \backslash e, n | 
    \nobracket_{\Gamma \backslash e}), \tilde{\Phi} (L)) |\\
    & = & \lim_{L \rightarrow + \infty} | W_0^L ((\Gamma \backslash e, n | 
    \nobracket_{\Gamma \backslash e}), \tilde{\Phi} (L)) |\\
    & \leqslant & \lim_{L \rightarrow + \infty} C max \{ | D
    \tilde{\Phi} (L) | \}\\
    & = & 0.
  \end{eqnarray*}
  
\end{proof}

By Proposition \ref{no IR boundary integrals}, we know the anomaly is given by
integrals over $\partial_0 \widetilde{[0, L]^{| \Gamma_1 |}}$.

\subsection{Anomaly integrals and Laman graphs.}\label{Laman graph integral}

In this subsection, we will describe integrals over $\partial_0 \widetilde{[0,
L]^{| \Gamma_1 |}}$ in details. Given a subgraph $\Gamma' \subseteq \Gamma$,
we notice the following equality:
\[ \int_{\widetilde{(0, + \infty)^{| \Gamma_1' |}} /\mathbb{R}^+ \times
   \widetilde{[0, + L]^{| \Gamma_1 \backslash \Gamma_1' |}}}
   \int_{(\mathbb{C}^d)^{| \Gamma_0 |}} \tilde{W} ((\Gamma, n), -) =
   \int_{\partial C_{\Gamma'_1} \cap \widetilde{[0, L]^{| \Gamma_1 |}}}
   \int_{(\mathbb{C}^d)^{| \Gamma_0 |}} \tilde{W} ((\Gamma, n), -) . \]
This equality will provide us a way to perform concrete computations. We first
concentrate on the case $\Gamma' = \Gamma$. In this case, $\partial C_{\Gamma_1}
\cap \widetilde{[0, L]^{| \Gamma_1 |}} = \partial C_{\Gamma_1}$. In the
following, we use $O_{(\Gamma, n)}$ to denote the following integral:
\[ O_{(\Gamma, n)} (\Phi) = \int_{\partial C_{\Gamma_1}} \int_{(\mathbb{C}^d)^{|
   \Gamma_0 |}} \tilde{W} ((\Gamma, n), \Phi), \text{\quad for } \Phi \in \Phi
   \in \Omega^{\ast, \ast}_c ((\mathbb{C}^d)^{| \Gamma_0 |}) . \]
\begin{definition}
  Given a connected directed graph $\Gamma$ without self-loops, we call
  $\Gamma$ a Laman graph, if the following conditions hold:
  \begin{enumeratenumeric}
    \item For any subgraph $\Gamma'$, we have the following inequality:
    \begin{equation}
      d | \Gamma'_0 | \geqslant (d - 1) | \Gamma_1' | + d + 1. \label{Laman
      condition1}
    \end{equation}
    \item The following equality holds:
    \begin{equation}
      d | \Gamma_0 | = (d - 1) | \Gamma_1 | + d + 1. \label{Laman condition2}
    \end{equation}
  \end{enumeratenumeric}
  We use the notation $\tmop{Laman} (\Gamma)$ to denote all the Laman
  subgraphs of $\Gamma$.
\end{definition}

\begin{remark}
  When $d = 2$, the concept of Laman graph originate from Laman's
  characterization of generic rigidity of graphs embeded in two dimensional
  real linear space (see {\cite{Laman1970OnGA}}). Our definition appears in
  {\cite{budzik2023feynman}}.
\end{remark}

\begin{theorem}
  \label{laman anomaly}Given a decorated graph\footnote{We do not assume
  $\Gamma$ is connected here.} $(\Gamma, n)$ without self-loops, and $\Phi \in
  \Omega^{\ast, \ast}_c ((\mathbb{C}^d)^{| \Gamma_0 |})$. The following
  statements hold:
  \begin{enumeratenumeric}
    \item If $\Gamma$ is not a Laman graph, we have
    \[ O_{(\Gamma, n)} (\Phi) = \int_{\partial C_{\Gamma_1}}
       \int_{(\mathbb{C}^d)^{| \Gamma_0 |}} \tilde{W} ((\Gamma, n), \Phi) = 0.
    \]
    \item If $\Gamma$ is a Laman graph,
    \begin{equation}
      O_{(\Gamma, n)} (\Phi) = \int_{\partial C_{\Gamma_1}}
      \int_{(\mathbb{C}^d)^{| \Gamma_0 |}} \tilde{W} ((\Gamma, n), \Phi) =
      \int_{\mathbb{C}^d_{w_{| \Gamma_0 |}}} (D \Phi)
      \left|_{w_i = 0, i \neq | \Gamma_0 |} \right.,
    \end{equation}
    where $D$ is a differential operator with constant coefficients, which
    only involves holomorphic derivatives. The order of $D$ is
    \[ \underset{e \in \Gamma_1, 1 \leqslant i \leqslant d}{\overset{}{\sum}}
       n_{i, e} + | \Gamma_1 | - 1 \]
  \end{enumeratenumeric}
\end{theorem}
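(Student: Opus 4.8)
The plan is to evaluate $O_{(\Gamma,n)}(\Phi)=\int_{\partial C_{\Gamma_1}}\int_{(\mathbb{C}^d)^{|\Gamma_0|}}\tilde{W}((\Gamma,n),\Phi)$ directly on the corner $C_{\Gamma_1}$, using the blow-up coordinates $\rho,\{\xi_e\}_{e\in\Gamma_1}$ with $t_e=\rho\xi_e$ and $\sum_e\xi_e^2=1$, so that $\partial C_{\Gamma_1}=\{\rho=0\}$ is the $(|\Gamma_1|-1)$-dimensional sphere of collapsing directions. The key preliminary observation comes from Lemma \ref{key} and Lemma \ref{extended functions}: on $C_{\Gamma_1}$ both numerator and denominator of $(d_\Gamma^{-1})^{ej}$ are divisible by exactly $\rho^{h_1(\Gamma)}$, so $(d_\Gamma^{-1})^{ej}$ is homogeneous of degree $0$, i.e.\ a smooth function of $\{\xi_e\}$ alone. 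Hence, after the cycle-change of Proposition \ref{change integration cycle}, formula $(\ref{111})$ shows $y_e=\sum_j\tfrac12(d_\Gamma^{-1})^{ej}(\xi)\,\overline{\tilde{w}}_j$ is $\rho$-independent, the exponent in $(\ref{222})$ collapses to $-\tfrac12\sum_i\tilde{w}_i\cdot\overline{\tilde{w}}_i$, and the only surviving $\rho$-dependence of the integrand sits in the argument $\sum_j(M_\Gamma^{-1})^{ij}\overline{\tilde{w}}_j$ of $\Phi$.

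For part (1) I would split into two cases. If some connected subgraph $\Gamma'$ violates $(\ref{Laman condition1})$, then Proposition \ref{vanishing result1} gives $\tilde{W}((\Gamma,n),\Phi)\equiv 0$ and there is nothing to prove. Otherwise all subgraph inequalities hold but $(\ref{Laman condition2})$ fails strictly, $d|\Gamma_0|>(d-1)|\Gamma_1|+d+1$. Here I would read $\prod_e d^dy_e$ as the pullback $\phi^{\ast}(\bigwedge_e d^dy_e)$ of the top holomorphic form under $\phi\colon(\overline{\tilde{w}},\xi)\mapsto(y_e)$, exactly as in Proposition \ref{vanishing result1}: the $\overline{\tilde{w}}$-directions map into an image of dimension $d(|\Gamma_0|-1)$ and the $\xi$-directions into at most $|\Gamma_1|-1$ further dimensions. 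A term contributing to the boundary integral must carry the top $\xi$-degree $|\Gamma_1|-1$, hence must spend the remaining $(d-1)|\Gamma_1|+1$ one-forms of $\prod_e d^dy_e$ along $d\overline{\tilde{w}}$; the homogeneity relation Lemma \ref{closeness of propagator}(2) then pins down the $\rho$-weight of the surviving piece, and the strict inequality forces this weight to be positive, so the restriction to $\{\rho=0\}$ vanishes. I expect this to be the main obstacle, since the naive form-degree count is by itself consistent with a nonzero answer; the real content is that the Euler/scaling relation of the propagator supplies the extra power of $\rho$ that annihilates the strict case.

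For part (2) the Laman equalities make the count tight: $(\ref{Laman condition2})$ is equivalent to $(d-1)|\Gamma_1|+1=d(|\Gamma_0|-1)$, so the top-$\xi$-degree part of $\prod_e d^dy_e$ must saturate \emph{all} $d(|\Gamma_0|-1)$ relative antiholomorphic directions $d\overline{\tilde{w}}_j$, forcing $\Phi$ to supply all $d|\Gamma_0|$ holomorphic directions together with exactly the $d$ antiholomorphic directions of $w_{|\Gamma_0|}$. I would then carry out the Gaussian integral over $(\mathbb{C}^d)^{|\Gamma_0|-1}$ as in the proof of Lemma \ref{extension lemma}(2) and integrate the resulting $\xi$-top-form over $\partial C_{\Gamma_1}$. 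Since $\partial C_{\Gamma_1}$ is the total-collapse locus $t_e\to 0$, the heat-kernel factors localize the integral onto the small diagonal $w_i=0$ $(i\neq|\Gamma_0|)$, and the Gaussian contractions, together with the decoration factors $\prod_e\prod_i(y_e^i)^{n_{i,e}}$, assemble into a constant-coefficient holomorphic differential operator $D$ applied to $\Phi|_{w_i=0}$, which is the asserted formula.

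Finally, for the order of $D$ I would count derivatives: the decorations contribute $\sum_{e,i}n_{i,e}$ holomorphic derivatives; each of the $|\Gamma_1|$ propagators contributes one more through the factor $y_e\sim\overline{\tilde{w}}$ paired against $\tilde{w}$ in the Gaussian; and the integration over the $(|\Gamma_1|-1)$-dimensional direction sphere, equivalently the $\mathbb{R}_+$-quotient of Proposition \ref{covered by corners}(3), removes exactly one. This yields order $\sum_{e,i}n_{i,e}+|\Gamma_1|-1$, consistent with the order $\sum_{e,i}n_{i,e}+|\Gamma_1|$ of the bulk distribution $W_0^L((\Gamma,n),-)$ recorded in the remark after Corollary \ref{extension theorem3}.
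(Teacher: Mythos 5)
The main gap is the disconnected case. The theorem explicitly does not assume $\Gamma$ is connected (see its footnote), and this generality is load-bearing downstream: to pass from the sum over all subgraphs $\Gamma'\subseteq\Gamma$ in the decomposition of $\partial_0\widetilde{[0,L]^{|\Gamma_1|}}$ to the sum over $\tmop{Laman}(\Gamma)$ in Corollary \ref{holomorphicity failure} and Theorem \ref{Quadratic relations}, one needs $O_{(\Gamma',n|_{\Gamma'})}=0$ for disconnected subgraphs $\Gamma'$ — for instance a disjoint union of $m\geqslant 2$ Laman graphs, for which every connected subgraph satisfies the inequality $(\ref{Laman condition1})$ and the global count is strict. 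Such a graph lands in your case (b), but there your entire apparatus is unavailable: $M_{\Gamma}(t)$ has $m-1$ zero eigenvalues, so $(M_{\Gamma}(t)^{-1})^{ij}$, the embedding $\Delta'$, and Proposition \ref{change integration cycle} are all undefined. The paper handles this by running the Laplace asymptotic expansion for the rank-deficient Gaussian directly (pseudo-determinant, a factor $\rho^{d(|\Gamma_0|-m)}$ in place of $\rho^{d(|\Gamma_0|-1)}$), which yields the lower bound $d|\Gamma_0|-(d-1)|\Gamma_1|-md-1\geqslant m-1>0$ for the $\rho$-power when $m\geqslant 2$. To repair your argument you would either have to rework it with the pseudo-inverse of $M_{\Gamma}$ on its image, or factorize $\tilde{W}$ over components and argue that the single overall $\mathbb{R}_+$-quotient in $\partial C_{\Gamma_1}$ cannot absorb the $m$ independent component-wise scaling symmetries.

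For connected $\Gamma$ your route is viable and genuinely different from the paper's: you push the boundary restriction through the cycle change $\Delta'$ and the plane-wave/Fourier decomposition, using that $(d_{\Gamma}^{-1})^{ej}$ is homogeneous of degree $0$ and $(M_{\Gamma}(t)^{-1})^{ij}$ of degree $1$ (hence $O(\rho)$ on $C_{\Gamma_1}$), so that at $\rho=0$ the pulled-back $d\bar{w}_i$ ($i\neq|\Gamma_0|$) vanish and all $d(|\Gamma_0|-1)$ antiholomorphic directions plus all $|\Gamma_1|-1$ sphere directions must be supplied by $\prod_e d^d y_e$; the count closes exactly when $d|\Gamma_1|=d(|\Gamma_0|-1)+|\Gamma_1|-1$, which is $(\ref{Laman condition2})$. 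The paper instead expands the degenerate Gaussian $e^{-\frac{1}{2\rho}w\cdot M_{\Gamma}(\xi)\bar{w}}$ asymptotically in $\rho$ and reads off the leading power; your version buys a cleaner form-degree argument for part (1) at the cost of the extra Fourier-synthesis layer (legitimate, thanks to the estimates in Lemma \ref{extension lemma}). Two smaller imprecisions: the positive $\rho$-weight in your case (b) really comes from Kirchhoff's theorem via Lemma \ref{key} and Corollary \ref{Minverse} (cut polynomials have degree $h_1(\Gamma)+1$ against $h_1(\Gamma)$ for the spanning-tree polynomial), not from Lemma \ref{closeness of propagator}(2); and in the $\Delta'$ picture there is no heat-kernel localization as $\rho\to 0$ — the Gaussian $e^{-\frac12\sum\tilde{w}_i\cdot\overline{\tilde{w}}_i}$ is $\rho$-independent, and the evaluation at $w_i=0$ arises because the boundary value is a degree-$\left(\sum n_{i,e}+|\Gamma_1|-1\right)$ polynomial in the holomorphic momenta $k_i$ times $\tilde{\Phi}(k,\bar{k})$, whose inverse Fourier transform is a constant-coefficient holomorphic differential operator applied to $\Phi$ at the origin.
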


\begin{proof}
  Let's assume $\Gamma$ has $m$ connected components $\Gamma^1, \Gamma^2,
  \ldots, \Gamma^m$. If some $\Gamma^i$ does not satisfy $\left( \ref{Laman
  condition1} \right)$, by Proposition \ref{vanishing result1}, we have
  $\tilde{W} ((\Gamma, n), \Phi) = 0$. Let's assume all $\Gamma^i$ satisfy
  $\left( \ref{Laman condition1} \right)$, then we have
  \[ d | \Gamma_0 | \geqslant (d - 1) | \Gamma_1 | + m (d + 1) . \]
  In this case, $M_{\Gamma} (t)$ is not invertible, but we can easily know
  that it has exactly $m - 1$ zero eigenvectors by studying the weighted
  Laplacian of each subgraph. We still use $\det (M_{\Gamma} (t))$ to denote
  the product of its nonzero eigenvalues.
  
  Note $\tilde{W} ((\Gamma, n), \Phi)$ has the following form:
  \[ \frac{1}{\pi^{d | \Gamma_1 |} \left( \underset{e \in \Gamma_1}{\prod} t_e
     \right)^d} e^{- \frac{1}{2} \overset{| \Gamma_0 | - 1}{\underset{i =
     1}{\sum}} \underset{j = 1}{\overset{| \Gamma_0 | - 1}{\sum}}
     w_i \cdot M_{\Gamma} (t)_{i j}
     \bar{w}_j} \sum_{i = 1}^{| \Gamma_1 |} \sum_{S \subseteq
     \Gamma_1, | S | = i} P_S (y_e) \prod_{e \in S} d t_e
     \wedge \Phi, \]
  where $P_S (y_e)$ is a differential form on
  $(\mathbb{C}^d)^{| \Gamma_0 | - 1}$ with anti-holomorphic polynomial
  coefficients, its degree equals to $\underset{e \in \Gamma_1, 1 \leqslant i
  \leqslant d}{\overset{}{\sum}} n_{i, e} + | S |$.
  
  Using functions $\{ \rho \} \cup \{ \xi_e \}_{e \in \Gamma_1}$, we can
  rewrite $\tilde{W} ((\Gamma, n), \Phi)$ as
  \begin{eqnarray*}
    &  & \frac{1}{\pi^{d | \Gamma_1 |} \rho^{d | \Gamma_1 |} \left(
    \underset{e \in \Gamma_1}{\prod} \xi_e \right)^d} e^{- \frac{1}{2 \rho}
    \overset{| \Gamma_0 | - 1}{\underset{i = 1}{\sum}} \underset{j =
    1}{\overset{| \Gamma_0 | - 1}{\sum}} w_i \cdot M_{\Gamma}
    (\xi)_{i j} \bar{w}j} \sum_{i = 1}^{| \Gamma_1 |}
    \sum_{S \subseteq \Gamma_1, | S | = i} \frac{1}{{\rho^{\underset{e \in
    \Gamma_1, 1 \leqslant i \leqslant d}{\overset{}{\sum}} n_{i, e} + | S |}}
    } {P_S}  \left( \frac{\bar{w}_i}{\xi_e} \right)\\
    & \cdot & \left( \rho^{i -} \prod_{e \in S} d \xi_e + \rho^{i - 1} d \rho
    \sum_{e \in S} (- 1)^{| e | - 1} \xi_e \prod_{e' \in S\backslash \{ e \}}
    d \xi_{e'} \right) \wedge \Phi,
  \end{eqnarray*}
  Note $d \rho |_{\partial C_{\Gamma_1}} = 0 \nobracket$, we have
  \begin{eqnarray*}
    &  & \left( \int_{(\mathbb{C}^d)^{| \Gamma_0 |}} \tilde{W} ((\Gamma, n),
    \Phi) \right) |  \nobracket_{\partial C_{\Gamma_1}}\\
    & = & \frac{1}{\pi^{d | \Gamma_1 |} \left( \underset{e \in
    \Gamma_1}{\prod} \xi_e \right)^d} \sum_{i = 1}^{| \Gamma_1 |} \left(
    \rho^{i - d | \Gamma_1 | - | S | - \underset{e \in \Gamma_1, 1 \leqslant i
    \leqslant d}{\overset{}{\sum}} n_{i, e}} \int_{(\mathbb{C}^d)^{| \Gamma_0
    |}} e^{- \frac{1}{2 \rho} \overset{| \Gamma_0 | - 1}{\underset{i =
    1}{\sum}} \underset{j = 1}{\overset{| \Gamma_0 | - 1}{\sum}}
    w_i \cdot M_{\Gamma} (\xi)_{i j}
    \bar{w}_j} \right.\\
    &  & \left. \sum_{S \subseteq \Gamma_1, | S | = i} {P_S}  \left(
    \frac{\bar{w}_i}{\xi_e} \right) \prod_{e \in S} d \xi_e
    \wedge \Phi \right) | \nobracket_{\partial C_{\Gamma_1}} .
  \end{eqnarray*}
  The top form part of $\left( \int_{(\mathbb{C}^d)^{| \Gamma_0 |}} \tilde{W}
  ((\Gamma, n), \Phi) \right) |  \nobracket_{\partial C_{\Gamma_1}} = \left(
  \int_{(\mathbb{C}^d)^{| \Gamma_0 |}} \tilde{W} ((\Gamma, n), \Phi) \right)
  |_{\rho = 0} \nobracket$ is
  \begin{eqnarray*}
    &  & \frac{1}{\pi^{d | \Gamma_1 |} \left( \underset{e \in
    \Gamma_1}{\prod} \xi_e \right)^d} \left( \rho^{- d | \Gamma_1 | -
    \underset{e \in \Gamma_1, 1 \leqslant i \leqslant d}{\overset{}{\sum}}
    n_{i, e}} \int_{(\mathbb{C}^d)^{| \Gamma_0 |}} e^{- \frac{1}{2 \rho}
    \overset{| \Gamma_0 | - 1}{\underset{i = 1}{\sum}} \underset{j =
    1}{\overset{| \Gamma_0 | - 1}{\sum}} w_i \cdot M_{\Gamma}
    (\xi)_{i j} \bar{w}_j} \right.\\
    & \cdot & \left. \sum_{e \in \Gamma_1} {P_{\Gamma_1 \backslash e}} 
    \left( \frac{\bar{w}_i}{\xi_e} \right) \prod_{e' \in
    \Gamma_1 \backslash e} d \xi_{e'} \wedge \Phi \right) | \nobracket_{\rho =
    0} .
  \end{eqnarray*}
  We have the following asymptotic formula for Gaussian type integral (see
  {\cite[Section 2.3]{Duistermaat2011}} for a proof) when $\rho \rightarrow
  0$:
  \begin{eqnarray*}
    &  & \frac{1}{\pi^{d | \Gamma_1 |} \left( \underset{e \in
    \Gamma_1}{\prod} \xi_e \right)^d} \rho^{- d | \Gamma_1 | - \underset{e \in
    \Gamma_1, 1 \leqslant i \leqslant d}{\overset{}{\sum}} n_{i, e}}
    \int_{(\mathbb{C}^d)^{| \Gamma_0 | - 1}} e^{- \frac{1}{2 \rho} \overset{|
    \Gamma_0 | - 1}{\underset{i = 1}{\sum}} \underset{j = 1}{\overset{|
    \Gamma_0 | - 1}{\sum}} w_i \cdot M_{\Gamma} (\xi)_{i j}
    \bar{w}_j}\\
    & \cdot & \sum_{e \in \Gamma_1} {P_{\Gamma_1 \backslash e}}  \left(
    \frac{\bar{w}_i}{\xi_e} \right) \prod_{e' \in \Gamma_1
    \backslash e} d \xi_{e'} \wedge \Phi\\
    & \sim & \frac{(2 \pi)^{d (| \Gamma_0 | - m)}}{\pi^{d | \Gamma_1 |}
    \left( \underset{e \in \Gamma_1}{\prod} \xi_e \right)^d} \left(
    \frac{\rho^{d | \Gamma_0 | - d | \Gamma_1 | - m d - \underset{e \in
    \Gamma_1, 1 \leqslant i \leqslant d}{\overset{}{\sum}} n_{i, e}}}{\det
    (M_{\Gamma} (\xi))^d} e^{\frac{1}{2} \rho \overset{| \Gamma_0 | -
    1}{\underset{i = 1}{\sum}} \underset{j = 1}{\overset{| \Gamma_0 | -
    1}{\sum}} \overset{d}{\underset{k = 1}{\sum}} M_{\Gamma} (\xi)_{i j}
    \partial_{w_i^k} \partial_{\tilde{w}^k_j}} \right. \iota_{\omega^{\vee}}\\
    & \cdot & \left. \sum_{e \in \Gamma_1} {P_{\Gamma_1 \backslash e}} 
    \left( \frac{\bar{w}_i}{\xi_e} \right) \prod_{e' \in
    \Gamma_1 \backslash e} d \xi_{e'} \wedge \Phi \right)
    \left|_{w_i = 0, i \neq | \Gamma_0 |} \right.\\
    & = & \frac{(2 \pi)^{d (| \Gamma_0 | - m)}}{\pi^{d | \Gamma_1 |} \left(
    \underset{e \in \Gamma_1}{\prod} \xi_e \right)^d} \sum_{l = \underset{e
    \in \Gamma_1, 1 \leqslant i \leqslant d}{\overset{}{\sum}} n_{i, e} + |
    \Gamma_1 | - 1}^{+ \infty} \frac{\rho^{l + d | \Gamma_0 | - d | \Gamma_1 |
    - m d - \underset{e \in \Gamma_1, 1 \leqslant i \leqslant
    d}{\overset{}{\sum}} n_{i, e}}}{\det (M_{\Gamma} (\xi))^d}
    \iota_{\omega^{\vee}}\\
    &  & \left. \frac{1}{l!} \left( \left( \frac{1}{2} \overset{| \Gamma_0 |
    - 1}{\underset{i = 1}{\sum}} \underset{j = 1}{\overset{| \Gamma_0 | -
    1}{\sum}} \overset{d}{\underset{k = 1}{\sum}} M_{\Gamma} (\xi)_{i j}
    \partial_{w_i^k} \partial_{\tilde{w}^k_j} \right)^l \right. \sum_{e \in
    \Gamma_1} {P_{\Gamma_1 \backslash e}}  \left(
    \frac{\bar{w}_i}{\xi_e} \right) \prod_{e' \in \Gamma_1
    \backslash e} d \xi_{e'} \wedge \Phi \right) \left|_{w_i
    = 0, i \neq | \Gamma_0 |} \right.,
  \end{eqnarray*}
  where $\iota_{\omega^{\vee}}$ is the interior product of a differential form with a
  nonzero constant top polyvector field $\omega^{\vee}$ on $(\mathbb{C}^d)^{|
  \Gamma_0 | - 1}$, i.e. it maps a volume form to its coefficient.
  
  We notice that the power of $\rho$ in the asymptotic formula is at least
  \[ d | \Gamma_0 | - (d - 1) | \Gamma_1 | - m d - 1 \geqslant d | \Gamma_0 |
     - (d - 1) | \Gamma_1 | - m (d + 1), \]
  then we know if $\Gamma$ is not a (connected) Laman graph,
  \[ \int_{\partial C_{\Gamma_1}} \int_{(\mathbb{C}^d)^{| \Gamma_0 |}} \tilde{W}
     ((\Gamma, n), \Phi) = 0. \]
  When $\Gamma$ is indeed a Laman graph, we have
  \begin{eqnarray*}
    &  & \int_{\partial C_{\Gamma_1}} \int_{(\mathbb{C}^d)^{| \Gamma_0 |}}
    \tilde{W} ((\Gamma, n), \Phi)\\
    & = & \frac{(2 \pi)^{d (| \Gamma_0 | - 1)}}{\pi^{d | \Gamma_1 |}}
    \int_{\mathbb{C}^d_{w_{| \Gamma_0 |}}} \int_{\partial C_{\Gamma_1}}
    \frac{1}{\left( \underset{e \in \Gamma_1}{\prod} \xi_e \right)^d \det
    (M_{\Gamma} (\xi))^d \left( \underset{e \in \Gamma_1, 1 \leqslant i
    \leqslant d}{\overset{}{\sum}} n_{i, e} + | \Gamma_1 | - 1 \right) !}\\
    & \cdot & \left( \left( \frac{1}{2} \overset{| \Gamma_0 | -
    1}{\underset{i = 1}{\sum}} \underset{j = 1}{\overset{| \Gamma_0 | -
    1}{\sum}} \overset{d}{\underset{k = 1}{\sum}} M_{\Gamma} (\xi)_{i j}
    \partial_{w_i^k} \partial_{\tilde{w}^k_j} \right)^{\underset{e \in
    \Gamma_1, 1 \leqslant i \leqslant d}{\overset{}{\sum}} n_{i, e} + |
    \Gamma_1 | - 1} \right. \iota_{\omega^{\vee}}\\
    & \cdot & \left. \sum_{e \in \Gamma_1} {P_{\Gamma_1 \backslash e}} 
    \left( \frac{\bar{w}_i}{\xi_e} \right) \prod_{e' \in
    \Gamma_1 \backslash e} d \xi_{e'} \wedge \Phi \right)
    \left|_{w_i = 0, i \neq | \Gamma_0 |} \right.,
  \end{eqnarray*}
  Then we can see
  \[ \int_{\partial C_{\Gamma_1}} \int_{(\mathbb{C}^d)^{| \Gamma_0 |}}
     \tilde{W} ((\Gamma, n), \Phi) = \int_{\mathbb{C}^d_{w_{| \Gamma_0 |}}} (D
     \Phi) \left|_{w_i = 0, i \neq | \Gamma_0 |} \right., \]
  where
  \begin{eqnarray*}
    &  & \\
    D & = & \frac{(2 \pi)^{d (| \Gamma_0 | - 1)}}{\pi^{d | \Gamma_1 |}}
    \int_{\partial C_{\Gamma_1}} \frac{1}{\left( \underset{e \in
    \Gamma_1}{\prod} \xi_e \right)^d \det (M_{\Gamma} (\xi))^d \left(
    \underset{e \in \Gamma_1, 1 \leqslant i \leqslant d}{\overset{}{\sum}}
    n_{i, e} + | \Gamma_1 | - 1 \right) !}\\
    & \cdot & \left( \left( \frac{1}{2} \overset{| \Gamma_0 | -
    1}{\underset{i = 1}{\sum}} \underset{j = 1}{\overset{| \Gamma_0 | -
    1}{\sum}} \overset{d}{\underset{k = 1}{\sum}} M_{\Gamma} (\xi)_{i j}
    \partial_{w_i^k} \partial_{\tilde{w}^k_j} \right)^{\underset{e \in
    \Gamma_1, 1 \leqslant i \leqslant d}{\overset{}{\sum}} n_{i, e} + |
    \Gamma_1 | - 1} \right. \iota_{\omega^{\vee}}\\
    & \cdot & \left. \sum_{e \in \Gamma_1} {P_{\Gamma_1 \backslash e}} 
    \left( \frac{\bar{w}_i}{\xi_e} \right) \prod_{e' \in
    \Gamma_1 \backslash e} d \xi_{e'} \right) .
  \end{eqnarray*}
  
\end{proof}

Now, we consider the general case. Given a subgraph $\Gamma' \subseteq
\Gamma$, we use $\Gamma \backslash \Gamma'$ to denote the subgraph with the
vertices $\Gamma_0$, and the edges $\Gamma_1 \backslash
\Gamma'_1$\footnote{Note $\Gamma \backslash \Gamma'$ is different from $\Gamma
/ \Gamma'$.}. We have the following result:

\begin{proposition}
  Given a connected decorated graph $(\Gamma, n)$ without self-loops, \ $\Phi
  \in \Omega^{\ast, \ast}_c ((\mathbb{C}^d)^{| \Gamma_0 |})$, and a subgraph
  $\Gamma'$, then the integral
  \[ \int_{\widetilde{(0, + \infty)^{| \Gamma_1' |}} /\mathbb{R}^+ \times
     \widetilde{[0, + L]^{| \Gamma_1 \backslash \Gamma_1' |}}}
     \int_{(\mathbb{C}^d)^{| \Gamma_0 |}} \tilde{W} ((\Gamma, n), \Phi) =
     \sum_{n' \in \tmop{dec} (\Gamma / \Gamma')} C_{n'} W_0^L ((\Gamma /
     \Gamma', n'), D_{n'} \Phi), \]
  where $\tmop{dec} (\Gamma / \Gamma')$ is the set of all decorations of
  $\Gamma / \Gamma'$, and $C_{n'} = 0$ for all but finite many $n' \in
  \tmop{dec} (\Gamma / \Gamma')$, $D_{n'}$ are differential operators with constant coefficients, which only
  involve holomorphic derivatives.
  
  In particular, when $L \rightarrow +
  \infty$,
  \[ \lim_{L \rightarrow + \infty} \int_{\widetilde{(0, + \infty)^{|
     \Gamma_1' |}} /\mathbb{R}^+ \times \widetilde{[0, + L]^{| \Gamma_1
     \backslash \Gamma_1' |}}} \int_{(\mathbb{C}^d)^{| \Gamma_0 |}} \tilde{W}
     ((\Gamma, n), -) \]
  exists as a distribution.
\end{proposition}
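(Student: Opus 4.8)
The plan is to reduce this boundary integral to a Laplace-type asymptotic analysis in the single radial parameter $\rho$ that governs the collapse of $\Gamma'$, in direct parallel with the proof of Theorem \ref{laman anomaly}. The key geometric observation is that on the corner $C_{\Gamma'_1}$ the Schwinger parameters of the edges in $\Gamma'_1$ are written as $t_{e'} = \rho\,\xi_{e'}$ for $e' \in \Gamma'_1$, while the parameters $t_e$ with $e \in \Gamma_1 \setminus \Gamma'_1$ remain free coordinates ranging over $\widetilde{[0,L]^{|\Gamma_1 \setminus \Gamma'_1|}}$. Thus sending $\rho \to 0$ shrinks precisely the $\Gamma'$-edges, which geometrically contracts $\Gamma'$ and leaves the quotient graph $\Gamma/\Gamma'$ equipped with its own compactified Schwinger space. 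By Theorem \ref{extension theorem1} the form $\int_{(\mathbb{C}^d)^{|\Gamma_0|}} \tilde W((\Gamma,n),\Phi)$ extends smoothly across this corner, so its restriction to the face $\partial C_{\Gamma'_1}$ is automatically a well-defined smooth form; the content of the proposition is to identify that restriction in factorized form, and the whole argument is really a computation of an already-guaranteed smooth object.

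First I would decompose the weighted Laplacian. Splitting the vertices of $\Gamma$ according to the quotient map $\Gamma \to \Gamma/\Gamma'$ separates the internal positions of each component of $\Gamma'$ (which are identified under the quotient) from the positions of the vertices of $\Gamma/\Gamma'$. In the exponent $-\tfrac{1}{2}\sum_{i,j} w_i \cdot M_\Gamma(t)_{ij}\,\bar w_j$ the contribution of the $\Gamma'$-edges scales like $\rho^{-1}$ and controls a nondegenerate Gaussian in the internal coordinates, while the contribution of the edges of $\Gamma \setminus \Gamma'$ is $\rho$-independent and reproduces, after the collapse, exactly the exponent attached to $\Gamma/\Gamma'$. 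This is the step where one must check that the nondegenerate block is essentially the weighted Laplacian of $\Gamma'$, so that Lemma \ref{key} applies and the leading coefficient is nonzero (the degenerate possibility being exactly the case ruled out by Proposition \ref{vanishing result1}, where $\tilde W$ vanishes outright).

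Next I would run the Gaussian asymptotic expansion as $\rho \to 0$ using the same device as in Theorem \ref{laman anomaly}: integrating out the internal coordinates produces the series $\exp\!\left(\tfrac{1}{2}\rho \sum M_{\Gamma'}(\xi)\,\partial_w \partial_{\tilde w}\right)$ followed by the contraction $\iota_{\omega^\vee}$. Restricting to $\rho = 0$ kills $d\rho$ and, since the smooth extension forbids negative powers of $\rho$, selects only the finitely many terms of the expansion carrying the matching power of $\rho$; as the surviving differential order is bounded above, only finitely many derivative distributions occur. Each surviving term is by construction a constant-coefficient holomorphic differential operator applied to the integrand built from the edges of $\Gamma \setminus \Gamma'$, whose parameters $t_e$ reassemble precisely into $\tilde W((\Gamma/\Gamma', n'), -)$ for some decoration $n'$ of $\Gamma/\Gamma'$. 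Separating the derivatives that land on the surviving propagators (recorded as the decoration $n'$) from those that land on $\Phi$ (recorded as the operator $D_{n'}$), and performing the remaining $\xi$-integral over $\widetilde{(0,+\infty)^{|\Gamma'_1|}}/\mathbb{R}^+$ to produce the scalars $C_{n'}$, yields the finite sum $\sum_{n'} C_{n'}\, W_0^L((\Gamma/\Gamma', n'), D_{n'}\Phi)$ with $C_{n'} \neq 0$ for only finitely many $n'$.

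The main obstacle I anticipate is the bookkeeping in the second step: decomposing $M_\Gamma(t)$ cleanly into the $\rho^{-1}$-singular $\Gamma'$-block and the regular complementary block, and then verifying that after the Gaussian integration the residual $t_e$-dependent factors recombine \emph{exactly} into the propagators of $\Gamma/\Gamma'$ rather than some deformed kernel. Tracking the Koszul signs through the $d\xi_{e'}$ ordering together with the permutation sign $(-1)^{\sigma(\Gamma',\Gamma/\Gamma')}$, and confirming that the operators produced carry constant coefficients and only holomorphic derivatives, are the delicate but essentially routine remaining points. Finally, for the ``in particular'' statement: each $W_0^L((\Gamma/\Gamma', n'), D_{n'}\Phi)$ converges as $L \to +\infty$ to a distribution by Corollary \ref{distribution theorem} applied to the quotient graph $\Gamma/\Gamma'$, and since the sum is finite, $\lim_{L \to +\infty}$ of the boundary integral exists as a distribution.
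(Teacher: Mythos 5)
Your proposal is workable but organized quite differently from the paper. You propose to redo the Laplace/Gaussian asymptotics directly at the corner $C_{\Gamma'_1}$ of the \emph{full} graph, decomposing $M_{\Gamma}(t)$ into a $\rho^{-1}$-singular block supported on $\Gamma'$ and a regular complement, and then checking by hand that after integrating out the internal coordinates the surviving factors reassemble into the propagators of $\Gamma/\Gamma'$ --- a recombination you yourself flag as the main obstacle. The paper avoids this entirely by a factorization trick: it writes $\tilde{W}((\Gamma,n),\Phi) = \pm\,\tilde{W}\bigl((\Gamma', n|_{\Gamma'}), \tilde{W}((\Gamma\setminus\Gamma', n|_{\Gamma\setminus\Gamma'}),\Phi)\bigr)$, uses Fubini to integrate first over $\widetilde{(0,+\infty)^{|\Gamma'_1|}}/\mathbb{R}^+ \times (\mathbb{C}^d)^{|\Gamma'_0|}$, and thereby recognizes the inner integral as $O_{(\Gamma', n|_{\Gamma'})}$ applied to a test form built from the spectator propagators and $\Phi$. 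Theorem \ref{laman anomaly} (already proved) then converts $O_{(\Gamma', n|_{\Gamma'})}$ into a constant-coefficient holomorphic differential operator $D$ evaluated on the diagonal $w_i = 0$, and this restriction to coincident points \emph{is} the vertex identification defining $\Gamma/\Gamma'$ --- so the recombination into $\tilde{W}((\Gamma/\Gamma', n'), \cdot)$ is automatic, and the Leibniz rule distributing $D$ between propagators (yielding the decorations $n'$) and $\Phi$ (yielding $D_{n'}$) finishes the identity. In effect, the paper encapsulates all of the asymptotic analysis you propose to repeat inside the already-established Theorem \ref{laman anomaly} for the collapsing subgraph alone; your route buys nothing extra and forces you to re-derive that theorem in a setting cluttered by spectator edges. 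Your treatment of the ``in particular'' clause (finitely many terms, each converging by Corollary \ref{distribution theorem}) matches what the paper intends. One small caution if you do pursue your route: the claim that ``the smooth extension forbids negative powers of $\rho$'' inverts the logic --- the absence of negative powers is established by the power counting together with Proposition \ref{vanishing result1}, not assumed from smoothness.
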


\begin{proof}
  From Theorem \ref{laman anomaly}, we have
  \begin{eqnarray*}
    &  & \int_{\widetilde{(0, + \infty)^{| \Gamma_1' |}} /\mathbb{R}^+ \times
    \widetilde{[0, + L]^{| \Gamma_1 \backslash \Gamma_1' |}}}
    \int_{(\mathbb{C}^d)^{| \Gamma_0 |}} \tilde{W} ((\Gamma, n), \Phi)\\
    & = & {(- 1)^{d \sigma (\Gamma', \Gamma / \Gamma')}} 
    \int_{\widetilde{(0, + \infty)^{| \Gamma_1' |}} /\mathbb{R}^+ \times
    \widetilde{[0, + L]^{| \Gamma_1 \backslash \Gamma_1' |}}}
    \int_{(\mathbb{C}^d)^{| \Gamma_0 |}} \tilde{W} ((\Gamma', n |_{\Gamma'}
    \nobracket), \tilde{W} ((\Gamma \backslash \Gamma', n |_{\Gamma \backslash
    \Gamma'} \nobracket), \Phi))\\
    & = & (- 1)^{(| \Gamma_1' | - 1) (| \Gamma_1 | - | \Gamma'_1 |) + d
    \sigma (\Gamma', \Gamma / \Gamma')} \int_{\widetilde{[0, + L]^{| \Gamma_1
    \backslash \Gamma_1' |}}} \int_{(\mathbb{C}^d)^{| (\Gamma / \Gamma')_0 | -
    1}} \int_{\widetilde{(0, + \infty)^{| \Gamma_1' |}} /\mathbb{R}^+}
    \int_{(\mathbb{C}^d)^{| \Gamma'_0 |}}\\
    &  & \tilde{W} ((\Gamma', n |_{\Gamma'} \nobracket), \tilde{W} ((\Gamma
    \backslash \Gamma', n |_{\Gamma \backslash \Gamma'} \nobracket), \Phi))\\
    & = & (- 1)^{(| \Gamma_1' | - 1) (| \Gamma_1 | - | \Gamma'_1 |) + d
    \sigma (\Gamma', \Gamma / \Gamma')} \int_{\widetilde{[0, + L]^{| \Gamma_1
    \backslash \Gamma_1' |}}} \int_{(\mathbb{C}^d)^{| (\Gamma / \Gamma')_0 | -
    1}}\\
    &  & O_{(\Gamma', n |_{\Gamma'} \nobracket)} (\tilde{W} ((\Gamma
    \backslash \Gamma', n |_{\Gamma \backslash \Gamma'} \nobracket), \Phi))\\
    & = & (- 1)^{(| \Gamma_1' | - 1) (| \Gamma_1 | - | \Gamma'_1 |) + d
    \sigma (\Gamma', \Gamma / \Gamma')} \int_{\widetilde{[0, + L]^{| \Gamma_1
    \backslash \Gamma_1' |}}} \int_{(\mathbb{C}^d)^{| (\Gamma / \Gamma')_0 | -
    1}}\\
    &  & (D \tilde{W} ((\Gamma \backslash \Gamma', n |_{\Gamma \backslash
    \Gamma'} \nobracket), \Phi)) \left|_{w_i = 0, i \neq |
    \Gamma_0 |} \right.,
  \end{eqnarray*}
  where $D$ is a differential operator with constant coefficients, which only
  involves holomorphic derivatives. Therefore, by Leibniz' rule, it is easy to
  see that
  \[ \int_{\widetilde{(0, + \infty)^{| \Gamma_1' |}} /\mathbb{R}^+ \times
     \widetilde{[0, + L]^{| \Gamma_1 \backslash \Gamma_1' |}}}
     \int_{(\mathbb{C}^d)^{| \Gamma_0 |}} \tilde{W} ((\Gamma, n), \Phi) =
     \sum_{n' \in \tmop{dec} (\Gamma / \Gamma')} C_{n'} W_0^L ((\Gamma /
     \Gamma', n'), D_{n'} \Phi), \]
  where $C_{n'} = 0$ for all but finite many $n' \in \tmop{dec} (\Gamma /
  \Gamma')$, $D_{n'}$ are differential operators with constant coefficients, which only
  involve holomorphic derivatives.
\end{proof}

The following corollary immediately follows:

\begin{corollary}
  \label{holomorphicity failure}Given a connected decorated graph $(\Gamma,
  n)$ without self-loops, and $\Phi \in \Omega^{\ast, \ast}_c
  ((\mathbb{C}^d)^{| \Gamma_0 |})$, we have
  \begin{eqnarray*}
    &  & (\bar{\partial} W_0^{+ \infty}) ((\Gamma, n), \Phi)\\
    & = & (- 1)^{| \Gamma_1 | - 1} \sum_{\Gamma' \in \tmop{Laman} (\Gamma)}
    (- 1)^{\sigma (\Gamma', \Gamma / \Gamma')} \int_{\widetilde{(0, +
    \infty)^{| \Gamma_1' |}} /\mathbb{R}^+ \times \widetilde{[0, + L]^{|
    \Gamma_1 \backslash \Gamma_1' |}}} \int_{(\mathbb{C}^d)^{| \Gamma_0 |}}
    \tilde{W} ((\Gamma, n), \Phi) .
  \end{eqnarray*}
\end{corollary}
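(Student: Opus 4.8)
The plan is to insert the boundary decomposition of $\partial \widetilde{[0,L]^{|\Gamma_1|}}$ into Proposition~\ref{differential transfer} and then discard every boundary stratum that is not supported on a Laman subgraph. I begin with the second part of Proposition~\ref{differential transfer},
\[
(\bar{\partial} W_0^{+\infty})((\Gamma,n),-) = \lim_{L\to+\infty}(-1)^{|\Gamma_1|}\int_{\partial\widetilde{[0,L]^{|\Gamma_1|}}}\int_{(\mathbb{C}^d)^{|\Gamma_0|}}\tilde{W}((\Gamma,n),-),
\]
and substitute the splitting $\partial\widetilde{[0,L]^{|\Gamma_1|}}=\bigl(-\partial_0\widetilde{[0,L]^{|\Gamma_1|}}\bigr)\cup\partial_L\widetilde{[0,L]^{|\Gamma_1|}}$. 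The orientation sign $-1$ attached to $\partial_0$ converts the prefactor $(-1)^{|\Gamma_1|}$ into $(-1)^{|\Gamma_1|-1}$, which is precisely the sign in the statement.

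I then dispose of the component at infinity. Since $\partial_L\widetilde{[0,L]^{|\Gamma_1|}}=\bigcup_{e\in\Gamma_1}(-1)^{|e|}\{L\}\times\widetilde{[0,L]^{|(\Gamma/e)_1|}}$ is a finite union, Proposition~\ref{no IR boundary integrals} applies term by term and gives $\lim_{L\to+\infty}\int_{\partial_L}\int\tilde{W}=0$. Only the $\partial_0$ contribution survives the limit, and I expand it through
\[
\partial_0\widetilde{[0,L]^{|\Gamma_1|}}=\bigcup_{\Gamma'\subseteq\Gamma}(-1)^{\sigma(\Gamma',\Gamma/\Gamma')}\,\widetilde{(0,+\infty)^{|\Gamma_1'|}}/\mathbb{R}^+\times\widetilde{[0,+L]^{|\Gamma_1\backslash\Gamma_1'|}},
\]
producing a finite sum over all subgraphs $\Gamma'\subseteq\Gamma$, each weighted by $(-1)^{\sigma(\Gamma',\Gamma/\Gamma')}$.

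To prune this sum down to Laman subgraphs, I invoke the factorization established in the Proposition immediately preceding this corollary together with Theorem~\ref{laman anomaly}. Each summand factors through the operator $O_{(\Gamma',n|_{\Gamma'})}$, and part~1 of Theorem~\ref{laman anomaly} (whose hypotheses allow $\Gamma'$ to be disconnected) forces $O_{(\Gamma',n|_{\Gamma'})}\equiv 0$ whenever $\Gamma'$ fails to be a connected Laman graph. Hence only the indices $\Gamma'\in\tmop{Laman}(\Gamma)$ contribute, and the displayed right-hand side results.

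The main obstacle I anticipate is the interchange of the limit $L\to+\infty$ with the finite sum over subgraphs, together with the interpretation of the $L$ still visible on the right-hand side. Because the sum is finite, linearity of the limit reduces everything to the convergence of each individual boundary integral, and that convergence is exactly what the preceding Proposition supplies: it rewrites each term as $\sum_{n'}C_{n'}W_0^L((\Gamma/\Gamma',n'),D_{n'}\Phi)$, which converges as a distribution since $W_0^L\to W_0^{+\infty}$ by Corollary~\ref{distribution theorem}. I would close by remarking that the surviving Laman terms assemble into the fixed distribution $(\bar{\partial}W_0^{+\infty})((\Gamma,n),-)$, so the identity holds with the $L\to+\infty$ limit understood on the right.
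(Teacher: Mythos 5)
Your proposal is correct and follows essentially the same route the paper intends (the paper states the corollary "immediately follows" from the preceding results): Proposition~\ref{differential transfer}(2) plus the decomposition $\partial\widetilde{[0,L]^{|\Gamma_1|}}=(-\partial_0)\cup\partial_L$, killing $\partial_L$ via Proposition~\ref{no IR boundary integrals}, expanding $\partial_0$ over subgraphs, and pruning to $\tmop{Laman}(\Gamma)$ using the factorization through $O_{(\Gamma',n|_{\Gamma'})}$ and Theorem~\ref{laman anomaly}(1). Your added care about interchanging the finite sum with the $L\to+\infty$ limit and about reading the right-hand side as a limit is a reasonable filling-in of what the paper leaves implicit.
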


\subsection{Quadratic relations.}

In this section, we provide a proof of the quadratic relations of $O_{(\Gamma,
n)}$ appeared in {\cite{budzik2023feynman}}. The idea is using Stokes theorem
over $\widetilde{(0, + \infty)^{| \Gamma_1 |}} /\mathbb{R}^+$.

First, from the proof of Proposition \ref{differential transfer}, we have the
following lemma:

\begin{lemma}
  Given a connected decorated graph $(\Gamma, n)$ without self-loops, and
  $\Phi \in \Omega^{\ast, \ast}_c ((\mathbb{C}^d)^{| \Gamma_0 |})$, the
  following equality hold:
  \[ \int_{(\mathbb{C}^d)^{| \Gamma_0 |}} \tilde{W} ((\Gamma, n),
     \bar{\partial} \Phi) = (- 1)^{(d - 1) | \Gamma_1 |} d_t
     \int_{(\mathbb{C}^d)^{| \Gamma_0 |}} \tilde{W} ((\Gamma, n), \Phi) . \]
\end{lemma}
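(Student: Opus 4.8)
The plan is to integrate over $(\mathbb{C}^d)^{|\Gamma_0|}$ the pointwise identity of differential forms that was already extracted inside the proof of Proposition \ref{differential transfer}. That computation, combining the Leibniz rule with the relation $(d_t+\bar{\partial})P_t = 0$ of Lemma \ref{closeness of propagator} (which, being applied edge by edge to the product of propagators, turns $\bar{\partial}$ of that product into $-d_t$ of it), yields
\[ \tilde{W}((\Gamma, n), \bar{\partial}\Phi) = (-1)^{(d-1)|\Gamma_1|}\,\bar{\partial}\,\tilde{W}((\Gamma, n), \Phi) + (-1)^{(d-1)|\Gamma_1|}\,d_t\,\tilde{W}((\Gamma, n), \Phi). \]
Thus the whole content of the lemma is to integrate this identity over the spacetime variables, check that the $\bar{\partial}$-term integrates to zero, and commute $d_t$ with the fiber integration.

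First I would dispose of the $\bar{\partial}$-term. For fixed interior Schwinger parameters $t_e > 0$ the propagators are smooth, so $\tilde{W}((\Gamma, n), \Phi)$ is a smooth form whose support in the $(\mathbb{C}^d)^{|\Gamma_0|}$-directions is contained in the support of $\Phi$, hence compact. Only the component of $\bar{\partial}\tilde{W}$ of full degree $2d|\Gamma_0|$ along $(\mathbb{C}^d)^{|\Gamma_0|}$ contributes to $\int_{(\mathbb{C}^d)^{|\Gamma_0|}}$, and such a component must come from a component of $\tilde{W}$ of bidegree $(d|\Gamma_0|, d|\Gamma_0|-1)$ along $(\mathbb{C}^d)^{|\Gamma_0|}$; applying $\partial$ to that same component would force bidegree $(d|\Gamma_0|+1, d|\Gamma_0|-1)$, which vanishes. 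Hence, on exactly the degrees that contribute to the integral, $\bar{\partial}\tilde{W}$ agrees with the de Rham differential $d_{(\mathbb{C}^d)^{|\Gamma_0|}}\tilde{W}$ along the fiber, and since $(\mathbb{C}^d)^{|\Gamma_0|}$ has no boundary and $\tilde{W}$ has compact support, Stokes' theorem gives
\[ \int_{(\mathbb{C}^d)^{|\Gamma_0|}} \bar{\partial}\,\tilde{W}((\Gamma, n), \Phi) = 0. \]

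It then remains to pull $d_t$ out of the integral. Because $d_t$ differentiates only the Schwinger variables while the integration runs over the spacetime variables, differentiation under the integral sign — justified by the Gaussian decay already exploited in Section \ref{compactification} — yields $\int_{(\mathbb{C}^d)^{|\Gamma_0|}} d_t\,\tilde{W} = d_t\int_{(\mathbb{C}^d)^{|\Gamma_0|}}\tilde{W}$. No Koszul sign intervenes, since the only surviving top form along $(\mathbb{C}^d)^{|\Gamma_0|}$ has the even degree $2d|\Gamma_0|$. Combining the two steps produces the asserted equality.

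The main obstacle is the vanishing of $\int_{(\mathbb{C}^d)^{|\Gamma_0|}}\bar{\partial}\tilde{W}$: the delicate point is the bidegree bookkeeping needed to replace $\bar{\partial}$ by the full de Rham differential along the fiber before applying Stokes, together with the observation that $\tilde{W}$ inherits compact support from $\Phi$ so that the boundary term genuinely drops. Everything else is routine and is already contained in the proof of Proposition \ref{differential transfer}.
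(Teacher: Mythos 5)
Your proposal is correct and follows essentially the same route as the paper, which proves this lemma simply by pointing back to the computation inside the proof of Proposition \ref{differential transfer}: the pointwise identity obtained there from the Leibniz rule and $(d_t+\bar{\partial})P_t=0$, followed by integration over $(\mathbb{C}^d)^{|\Gamma_0|}$, where the $\bar{\partial}$-exact term dies by compact support and Stokes and $d_t$ commutes with the (even-dimensional) fiber integration. Your explicit bidegree bookkeeping for the vanishing of $\int_{(\mathbb{C}^d)^{|\Gamma_0|}}\bar{\partial}\tilde{W}$ is exactly the step the paper leaves implicit.
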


If we perform integration over $\widetilde{(0, + \infty)^{| \Gamma_1 |}}
/\mathbb{R}^+$ on both sides, we will get
\begin{eqnarray}
  O_{(\Gamma, n)} (\bar{\partial} \Phi) & = & \int_{\widetilde{(0, +
  \infty)^{| \Gamma_1 |}} /\mathbb{R}^+} \int_{(\mathbb{C}^d)^{| \Gamma_0 |}}
  \tilde{W} ((\Gamma, n), \bar{\partial} \Phi) \nonumber\\
  & = & (- 1)^{| \Gamma_1 |} \int_{\widetilde{(0, + \infty)^{| \Gamma_1 |}}
  /\mathbb{R}^+} d_t \int_{(\mathbb{C}^d)^{| \Gamma_0 |}} \tilde{W} ((\Gamma,
  n), \Phi) \nonumber\\
  & = & (- 1)^{| \Gamma_1 |} \int_{\partial \widetilde{(0, + \infty)^{|
  \Gamma_1 |}} /\mathbb{R}^+} \int_{(\mathbb{C}^d)^{| \Gamma_0 |}} \tilde{W}
  ((\Gamma, n), \Phi) .  \label{prequatratic relation}
\end{eqnarray}

Before studying the integral over $\partial \widetilde{(0, + \infty)^{|
\Gamma_1 |}} /\mathbb{R}^+$ in detail, we introduce the following definition:

\begin{definition}
  Given a connected decorated graph $(\Gamma, n)$ without self-loops, \ $\Phi
  \in \Omega^{\ast, \ast}_c ((\mathbb{C}^d)^{| \Gamma_0 |})$, and a subgraph
  $\Gamma'$, we define 
\end{definition}

We have the following decomposition for $\partial \widetilde{(0, + \infty)^{|
\Gamma_1 |}} /\mathbb{R}^+$:
\[ \partial \widetilde{(0, + \infty)^{| \Gamma_1 |}} /\mathbb{R}^+ =
   \bigcup_{\Gamma' \subseteq \Gamma} (- 1)^{\sigma (\Gamma', \Gamma /
   \Gamma') + | \Gamma_1' | - 1} \widetilde{(0, + \infty)^{| \Gamma_1' |}}
   /\mathbb{R}^+ \times \widetilde{(0, + \infty)^{| \Gamma_1 \backslash
   \Gamma_1' |}} /\mathbb{R}^+ . \]
Therefore, formula $\left( \ref{prequatratic relation} \right)$ becomes:
\begin{eqnarray*}
  &  & O_{(\Gamma, n)} (\bar{\partial} \Phi)\\
  & = & \sum_{\Gamma' \subseteq \Gamma} (- 1)^{\sigma (\Gamma', \Gamma /
  \Gamma') + | \Gamma_1' | - 1 + (d - 1) | \Gamma_1 |} \int_{\widetilde{(0, +
  \infty)^{| \Gamma_1' |}} /\mathbb{R}^+} \int_{\widetilde{(0, + \infty)^{|
  \Gamma_1 \backslash \Gamma_1' |}} /\mathbb{R}^+} \int_{(\mathbb{C}^d)^{|
  \Gamma_0 |}} \tilde{W} ((\Gamma, n), \Phi)\\
  & = & \sum_{\Gamma' \subseteq \Gamma} (- 1)^{(d + 1) \sigma (\Gamma',
  \Gamma / \Gamma') + | \Gamma_1' | - 1 + (d - 1) | \Gamma_1 |}
  \int_{\widetilde{(0, + \infty)^{| \Gamma_1' |}} /\mathbb{R}^+}
  \int_{\widetilde{(0, + \infty)^{| \Gamma_1 \backslash \Gamma_1' |}}
  /\mathbb{R}^+} \int_{(\mathbb{C}^d)^{| \Gamma_0 |}}\\
  &  & \tilde{W} ((\Gamma', n |_{\Gamma'} \nobracket), \tilde{W} ((\Gamma
  \backslash \Gamma', n |_{\Gamma \backslash \Gamma'} \nobracket), \Phi))\\
  & = & \sum_{\Gamma' \in \tmop{Laman} (\Gamma)} (- 1)^{(d + 1) \sigma
  (\Gamma', \Gamma / \Gamma') + (d - 1) | \Gamma_1 | + (| \Gamma_1' | - 1) (|
  \Gamma_1 | - | \Gamma_1' |)} \int_{\widetilde{(0, + \infty)^{| \Gamma_1
  \backslash \Gamma_1' |}} /\mathbb{R}^+} \int_{(\mathbb{C}^d)^{| (\Gamma /
  \Gamma')_0 | - 1}}\\
  &  & O_{(\Gamma', n |_{\Gamma'} \nobracket)} (\tilde{W} ((\Gamma \backslash
  \Gamma', n |_{\Gamma \backslash \Gamma'} \nobracket), \Phi)) .
\end{eqnarray*}
If we define
\begin{eqnarray*}
  &  & O_{(\Gamma', n |_{\Gamma'} \nobracket)} \circ O_{(\Gamma \backslash
  \Gamma', n |_{\Gamma \backslash \Gamma'} \nobracket)} (\Phi)\\
  & = & (- 1)^{(| \Gamma_1' | - 1) (| \Gamma_1 | - | \Gamma_1' |)}
  \int_{\widetilde{(0, + \infty)^{| \Gamma_1 \backslash \Gamma_1' |}}
  /\mathbb{R}^+} \int_{(\mathbb{C}^d)^{| (\Gamma / \Gamma')_0 | - 1}}\\
  &  & O_{(\Gamma', n |_{\Gamma'} \nobracket)} (\tilde{W} ((\Gamma \backslash
  \Gamma', n |_{\Gamma \backslash \Gamma'} \nobracket), \Phi)),
\end{eqnarray*}
then we have
\begin{eqnarray}
  &  & (- 1)^{(d - 1) | \Gamma_1 |} O_{(\Gamma, n)} (\bar{\partial} \Phi)
  \nonumber\\
  & = & \sum_{\Gamma' \in \tmop{Laman} (\Gamma)} (- 1)^{(d + 1) \sigma
  (\Gamma', \Gamma / \Gamma')} O_{(\Gamma', n |_{\Gamma'} \nobracket)} \circ
  O_{(\Gamma \backslash \Gamma', n |_{\Gamma \backslash \Gamma'} \nobracket)}
  (\Phi) .  \label{prequadratic relation2}
\end{eqnarray}
Finally, we can state the quadratic relations:

\begin{theorem}[K.Budzik, D.Gaiotto, J.Kulp, J.Wu, M.Yu, see
{\cite{budzik2023feynman}}]
  \label{Quadratic relations}Given a connected decorated graph $(\Gamma, n)$
  without self-loops, \ $\Phi \in \Omega^{\ast, \ast}_c ((\mathbb{C}^d)^{|
  \Gamma_0 |})$, we have the following quadratic relations:
  \begin{equation}
    \sum_{\Gamma' \in \tmop{Laman} (\Gamma)} (- 1)^{(d + 1) \sigma (\Gamma',
    \Gamma / \Gamma')} O_{(\Gamma', n |_{\Gamma'} \nobracket)} \circ
    O_{(\Gamma \backslash \Gamma', n |_{\Gamma \backslash \Gamma'}
    \nobracket)} (\Phi) = 0 \label{quadratic relations} .
  \end{equation}
  \begin{proof}
    By Theorem \ref{laman anomaly}, if $\Gamma$ is not a Laman graph,
    $O_{(\Gamma, n)} = 0$. If $\Gamma$ is a Laman graph, we have
    \[ O_{(\Gamma, n)} (\bar{\partial} \Phi) = \int_{\mathbb{C}^d_{w_{|
       \Gamma_0 |}}} (D \bar{\partial} \Phi) \left|_{w_i = 0,
       i \neq | \Gamma_0 |} \right. = \int_{\mathbb{C}^d_{w_{| \Gamma_0 |}}}
       \bar{\partial} (D \Phi) \left|_{w_i = 0, i \neq |
       \Gamma_0 |} \right. = 0. \]
    The theorem follows from $\left( \ref{prequadratic relation2} \right)$.
    
    \ 
  \end{proof}
\end{theorem}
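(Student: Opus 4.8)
The plan is to derive the theorem directly from the boundary identity $\left( \ref{prequadratic relation2} \right)$, which has already been established above by applying Stokes' theorem on $\widetilde{(0, + \infty)^{| \Gamma_1 |}} /\mathbb{R}^+$ and unwinding the boundary decomposition. That identity packages all of the sign bookkeeping (the permutation signs $\sigma (\Gamma', \Gamma / \Gamma')$, the factors $(| \Gamma_1' | - 1)(| \Gamma_1 | - | \Gamma_1' |)$, and $(d - 1) | \Gamma_1 |$) into a single clean statement asserting that the quadratic sum equals
\[ (- 1)^{(d - 1) | \Gamma_1 |} O_{(\Gamma, n)} (\bar{\partial} \Phi) . \]
Consequently the entire theorem reduces to proving the one vanishing statement $O_{(\Gamma, n)} (\bar{\partial} \Phi) = 0$, and I would not re-derive any of the sign combinatorics.

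To prove this vanishing, first I would split into two cases according to whether $\Gamma$ is a Laman graph, invoking Theorem \ref{laman anomaly}. If $\Gamma$ is \emph{not} a Laman graph, part (1) of that theorem gives $O_{(\Gamma, n)} = 0$ identically, so in particular $O_{(\Gamma, n)} (\bar{\partial} \Phi) = 0$ and there is nothing left to check. The substantive case is when $\Gamma$ itself is Laman. Here part (2) of Theorem \ref{laman anomaly} supplies the explicit description
\[ O_{(\Gamma, n)} (\Psi) = \int_{\mathbb{C}^d_{w_{| \Gamma_0 |}}} (D \Psi) \left|_{w_i = 0, \, i \neq | \Gamma_0 |} \right. , \]
where $D$ is a constant-coefficient differential operator involving only holomorphic derivatives. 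The feature I would exploit is that such a $D$ commutes with $\bar{\partial}$, so $D \bar{\partial} \Phi = \bar{\partial} (D \Phi)$; moreover restriction along the inclusion $w_{| \Gamma_0 |} \mapsto (0, \ldots, 0, w_{| \Gamma_0 |})$ is a pullback and hence also commutes with $\bar{\partial}$. Combining these gives
\[ O_{(\Gamma, n)} (\bar{\partial} \Phi) = \int_{\mathbb{C}^d_{w_{| \Gamma_0 |}}} \bar{\partial} \left( (D \Phi) \left|_{w_i = 0} \right. \right) , \]
and this integral of a $\bar{\partial}$-exact compactly supported form over $\mathbb{C}^d_{w_{| \Gamma_0 |}}$ vanishes by Stokes' theorem.

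I expect essentially all of the genuine difficulty to live upstream, inside Theorem \ref{laman anomaly}: the real content is that the boundary integral $O_{(\Gamma, n)}$ on a Laman graph can be realized as a \emph{purely holomorphic} constant-coefficient operator $D$ acting on $\Phi$ before restriction, since this is exactly what makes $D$ commute with $\bar{\partial}$. Once that structural formula is granted, the remaining argument is formal: a case split plus one application of Stokes. Thus the only point I would take care to justify rigorously is the commutation of $D$, the restriction map, and $\bar{\partial}$ at the level of forms (and, if desired, distributions), after which the conclusion $\left( \ref{quadratic relations} \right)$ follows immediately from $\left( \ref{prequadratic relation2} \right)$.
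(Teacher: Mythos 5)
Your proposal is correct and follows essentially the same route as the paper: reduce via the identity $\left( \ref{prequadratic relation2} \right)$ to the single vanishing statement $O_{(\Gamma, n)} (\bar{\partial} \Phi) = 0$, then split into the non-Laman case (where $O_{(\Gamma, n)} = 0$ by Theorem \ref{laman anomaly}) and the Laman case (where the constant-coefficient holomorphic operator $D$ commutes with $\bar{\partial}$ and the restriction, so the integral vanishes). Your added justification of the commutation of $D$, restriction, and $\bar{\partial}$ only makes explicit what the paper leaves implicit.
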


\begin{remark}
  In holomorphic field theories, $O_{(\Gamma, n)}$ should corresponds to
  higher brackets of an $L_{\infty}$ algebra. The quadratic relations $\left(
  \ref{quadratic relations} \right)$ should correspond to the $L_{\infty}$
  relations. A detailed elaboration of this idea for one complex dimensional holomorphic field theories
  can be found in {\cite{Li:2016gcb}}.
\end{remark}

{\nocite{chen2023kontsevich,costello_gwilliam_2016,costello_gwilliam_2021,guisili2021elliptic,rglzditglrmsrfcmdlfmlszj2021,Costello:2012cy,Costello:2019jsy,cmp/1104178138,li2023regularized,Budzik:2023xbr}}

Department of Mathematics \& Statistics, Boston University, Boston, 02215, US.

Email: minghaow@bu.edu

\end{document}